\crefname{algorithm}{Algorithm}{Algorithms}
	\theoremstyle{acmdefinition}
\newcommand{\E}{\mathbb{E}}
\newcommand{\bparen}[1]{\big(#1\big)}
\newcommand{\Paren}[1]{\left(#1\right)}
\newcommand{\abs}[1]{\lvert #1 \rvert}
\newcommand{\ABS}[1]{\left\lvert #1 \right\rvert}
\newcommand{\floor}[1]{\lfloor #1 \rfloor}
\newcommand{\trho}{\frac{15}{1-\rho}} %% temp parameter for rho	
\newcommand{\bsc}{\kappa(p,q,\rho)} 
\newcommand{\Bsc}{\kappa\Paren{p,q,\rho}} 
\newcommand{\btot}{\beta_{\text{tot}}} 
\newcommand{\bmax}{\beta_{\text{max}}} 
\newcommand{\weakINV}{\textrm{\textup{(\ding{78})}}}
\DeclareMathOperator{\rank}{rank}
\DeclareMathOperator{\pos}{pos}
\DeclareMathOperator{\disl}{disl}
\DeclareMathOperator{\score}{score}
\newcommand{\basketsort}{\texttt{BasketSort}\xspace}
\newcommand{\expe}{expected}
\newcommand{\noisysearch}{\texttt{NoisySearch}\xspace}
\newcommand{\rifflesort}{\texttt{RiffleSort}\xspace}
\newcommand{\windowsort}{\texttt{WindowSort}\xspace}
\newcommand{\randomsubset}{\texttt{RandomSubset}\xspace}
\newcommand{\whp}{w.h.p.}
\renewcommand\footnotetextcopyrightpermission[1]{}
\begin{document}

\pagestyle{plain}

\title{An Optimal Sorting Algorithm for Persistent Random Comparison Faults}

%%
%% The "author" command and its associated commands are used to define
%% the authors and their affiliations.
%% Of note is the shared affiliation of the first two authors, and the
%% "authornote" and "authornotemark" commands
%% used to denote shared contribution to the research.
\author{Barbara Geissmann}
\email{barbara.geissmann@fhnw.ch}
\affiliation{%
	\institution{FHNW University of Applied Sciences and Arts Northwestern Switzerland}
	\postcode{5320}
    \city{Windisch}
	\country{Switzerland}
}

\author{Stefano Leucci}
\authornote{Corresponding author.}
\email{stefano.leucci@univaq.it}
\affiliation{%
	\institution{University of L'Aquila}
	\postcode{6710}
	\city{L'Aquila}
	\country{Italy}	
}
\author{Chih-Hung Liu}
\authornotemark[1]
\email{chliu@ntu.edu.tw}
\affiliation{%
	\institution{National Taiwan University}
	\postcode{106319}
	\city{Taipei}
	\country{Taiwan}
}

\author{Paolo Penna}
\email{paolo.penna@iohk.io}
\affiliation{
     \institution{IOG}
     \postcode{8046}
     \city{Zurich}
     \country{Switzerland}
}

\date{}

%%
%% By default, the full list of authors will be used in the page
%% headers. Often, this list is too long, and will overlap
%% other information printed in the page headers. This command allows
%% the author to define a more concise list
%% of authors' names for this purpose.
%\renewcommand{\shortauthors}{Trovato et al.}

\begin{abstract}
We consider the problem of sorting $n$ elements subject to persistent random comparison errors. 
In this problem, each comparison between two elements can be wrong with some fixed (small) probability $p$, and comparing the same pair of elements multiple times always yields the same result.
Sorting perfectly in this model is impossible, and the objective is to minimize the dislocation of each element in the output sequence, that is, the difference between its position in the sequence and its true rank.

In this paper, we present the first $O(n \log n)$-time sorting algorithm that guarantees both $O(\log n)$ maximum dislocation and $O(n)$ total dislocation with high probability when $p < \frac{1}{4}$. 
This settles the time complexity sorting with persistent comparison errors in the given range of $p$ and shows that comparison errors do not increase its computational difficulty.
Indeed, $\Omega(n \log n)$ time is necessary to archive a maximum dislocation of $O(\log n)$ (even without comparison errors).
Moreover,  we further prove that no algorithm can guarantee a maximum dislocation of $o(\log n)$ with high probability, nor a total dislocation of $o(n)$ in expectation.

To develop our sorting algorithm, we solve two related sub-problems in the persistent error comparisons model, which might be of independent interest. More precisely, we that $O(\log n)$ time suffices to find a position in which to insert a new element $x$ in an almost-sorted sequence $S$ of $n$ elements having dislocation at most $d = \Omega(\log n)$, so that the dislocation of $x$ in the resulting sequence is $O(d)$ with high probability (which can be equivalently thought as the problem of estimating the rank of $x$ in $S$).
We also show that the maximum (resp.\ total) dislocation of an approximately sorted sequence $S$ of $n$ elements can be lowered to $O(\log n)$ (reps. $O(n)$) in $O(n d)$ time, with high probability, where $d$ is an upper bound on the maximum dislocation of $S$.
\end{abstract}

\maketitle

\newpage

\section{Introduction}\label{sec:introduction}

%%%%% Introduction
% 1.  Formulation

We study the problem of \emph{approximately sorting} $n$ distinct elements under \emph{persistent} random comparison faults.
In this fault model,  
the outcome of each comparison is wrong with some fixed error probability $p<1/2$, and correct with probability $1-p$. 
These comparison faults are independent over all possible pairs of elements, but they are persistent, i.e., comparing the same pair of element multiple times always results in the same outcome.
In other words, the persistent random comparison fault model does not allow resampling.

% 2.  Feedback Arc Problem
Our sorting problem is related to the well-known \emph{feedback arc set for tournaments} (FAST) problem~\cite{Alon06,AilonCN08,Kenyon-MathieuS07,CharbitTY07,AlonLS09}: given a tournament, i.e., a directed graph in which every pair of nodes have exactly one arc between them, compute a permutation of the nodes that minimizes the number of arcs violating the permutation, i.e., the number of arcs whose tail appears before the head in the permutation. 
By viewing elements as nodes and comparison outcomes as arcs, an acyclic tournament defines a total order of the elements and vice versa. From this viewpoint, comparison faults are arcs directed in the ``wrong direction'' in the graph.
%Therefore the FAST problem is, to some extent, similar to sorting with persistent comparison faults, and our sorting problem is related to the FAST problem with a distribution on arcs.

% 3.  Non-Persistent Random Result
For the \emph{non-persistent} random comparison faults model, where the same pair of elements can be compared multiple times and the errors (or lack of thereof) in the returned outcomes are independent, many combinatorial problems have been studied in 80s and 90s~\cite{Pelc02}.
In particular, Feige et al.~\cite{FeigeRPU94} developed an algorithm for the \emph{noisy binary search problem} based on \emph{noisy binary search trees}. 
In this problem, we are given a sorted sequence of $n$ elements plus an additional element $x$ as inputs, and we want to find the rank of $x$ in the sequence.
The algorithm of \cite{FeigeRPU94} developed uses $O(\log \frac{n}{Q})$ comparisons to report the correct rank with success probability of at least $1-Q$. Moreover, the authors used their noisy binary search tree as a building block to design algorithms for (i) selecting the $k$-th smallest out of $n$ elements, and (ii) sorting $n$ elements. These latter algorithms require $O(n\log \frac{\max\{k,\,n-k\}}{Q})$ and $O(n\log \frac{n}{Q})$ comparisons, respectively, and succeed with a probability of at least $1-Q$.
The same work also proves matching lower bounds for the noisy binary search problem, the problem of selecting the $k$-th smallest element in an (unsorted) input sequence, and that of sorting an input sequence.
Recently, 30 years after Feige et al's seminal work, Gu and Xu~\cite{GuX23} derived tight non-asymptotic bounds on the number of comparisons for sorting and binary search. 

% 4.  Dislocations
In contrast, for \emph{persistent} random comparison faults, it is \emph{impossible} to perfectly sort an input sequence even with a (positive) constant success probability. 
In this situation, one seeks instead an \emph{approximately} sorted sequence, where the quality of the output is evaluated by using the notion of \emph{dislocation}. % arises to evaluate the resulting sequence of an algorithm.
The dislocation of an element in a sequence is the absolute difference between its position in the sequence and its true rank. 
The \emph{maximum dislocation} of a sequence is the maximum among the dislocations of all the elements in the sequence, and the \emph{total dislocation} of a sequence is the sum of the dislocations of all its elements.

% 5.  Noisy Sorting
Bravernman and Mossel \cite{BravermanM08} worked on computing \emph{maximum likelihood permutations}, which correspond to solving the FAST problem on a tournament graph sampled from the probability distribution induced by the comparison errors. They proved that the maximum dislocation and the total dislocation of a maximum likelihood permutation are  $O(\log n)$ and $O(n)$, respectively, with \emph{high probability}, i.e., with a probability of at least $1-\frac{1}{n}$.
They also gave an algorithm that takes $O\bparen{n^{3+f(p)}}$ time to compute a maximum likelihood permutation with high probability, where $f(p)$ is a function of $p$. Unfortunately, $f(p)$ is roughly $\frac{110525}{(1-2p)^4}$, as it can be seen by inspecting the analysis in \cite{BravermanM08},  which renders the algorithm unpractical.
%for practical applications.
% 6.  Bucket-Sort
Later, Klein et al.~\cite{KleinPSW11} proposed an $O(n^2)$-time sorting algorithm called \emph{BucketSort} 
that achieves a maximum dislocation of $O(\log n)$ with high probability provided that $p \le \frac{1}{20}$, while no bound was given for the total dislocation.
The running time was subsequently improved to $\widetilde{O}(n \sqrt{n})$ for $p < \frac{1}{16}$ in \cite{GeissmannLLP20} while achieving logarithmic maximum dislocation w.h.p.\ and linear total dislocation in expectation.

%, guarantees the $O(\log n)$ maximum dislocation with high probability. 
%Their algorithm partitions the input sequence into buckets, iteratively halves the size of a bucket, and attempts to maintain an invariant that each element lies in its true bucket (i.e., the bucket contains its true rank) or the two neighboring buckets. 
%They proved that with high probability, the attempted invariant holds until the bucket size is $\Omega(\log n)$, leading to the $O(\log n)$ maximum dislocation. 
%Nevertheless, they didn't derive a bound for the total dislocation.

% 7.  Question 1 - How to sort using O( n log n) Comparisons?
%To sum up, the existing works either have an unpractical running time~\cite{BravermanM08} or provide no guarantees for the total dislocation.
To sum up, all the existing works considering \emph{persistent} random comparison faults achieve a maximum dislocation of $O(\log n)$ (w.h.p.) and require $\Omega(n \sqrt{n})$ time to approximately sort $n$ elements, while, in the case of \mbox{non-persistent} random comparison faults, sorting (with no dislocation, w.h.p.) requires only $O(n\log n)$ time.
This begs the following question:
\smallskip
\begin{description}
  \item[Question 1:] How much time is needed to (approximately) sort $n$ elements in the presence of \emph{persistent} random comparison faults? 
\end{description}
\smallskip

% 8.  Question 2 - How to search in an approximately sorted sequence?
To answer the above question, one might hope of employing a strategy similar to that used by the sorting algorithm of \cite{FeigeRPU94} for non-persistent random comparison faults, which iteratively extends a sorted sequence with new elements by performing noisy binary searches. % to iteratively extend a sorted sequence. %insert the elements to sort into a partial sorted sequence. 
However, due to the persistent nature of our random comparison faults, one cannot maintain a perfectly sorted sequence. Then, the following question also arises: 

%is only possible to compute an approximately sorted sequence. 
%As a result, a natural question arises as follows:
\smallskip
\begin{description}
  \item[Question 2:] How can we \emph{quickly} find the approximate rank of an element in an approximately sorted sequence in the presence of \emph{persistent} random comparison faults? 
\end{description}
\smallskip

% 9.  Question 3 - How to re-arrange or improve an approximately sorted seqeucne?
Furthermore, while inserting an element in its correct position (i.e., in the position corresponding to the element's rank) in a perfectly sorted sequence results in a sequence that is still sorted, the same does not necessarily hold in an approximate sense. This is because the maximum dislocation of an approximately sorted sequence might increase following the insertion of one or more elements. To circumvent this problem, it would be helpful to locally rearrange the elements of an approximately sorted sequence in a way that reduces the resulting dislocation.

%Furthermore, by a similar reasoning, searching a true rank for an element is not possible, and thus the process of searching and inserting elements would increase the dislocations of the elements. 
%In fact, one may expect the intermediate sequence is still approximately sorted, so it is important to consider the following question.
 
\smallskip
\begin{description}
  \item[Question 3:] Can the (maximum and/or total) dislocation of an approximately sorted sequence be quickly improved (up to some lower limit) in the  \emph{persistent} random comparison faults model? 
\end{description}
\smallskip

% 10. Question 4 - What are the lower bounds for maximum dislocation and totatl dislocation?
Finally, since the best maximum and total dislocations achieved by existing works with high probability are $O(\log n)$ and $O(n)$, respectively, one might wonder whether these bounds are (asymptotically) tight or if they can be improved (w.h.p).
\smallskip
\begin{description}
    \item[Question 4:] What are the best possible (asymptotic) maximum and total dislocations that can be achieved with high probability in the presence of \emph{persistent} random comparison faults? 
\end{description}
\smallskip

\subsection{Our contributions}

%% Question 1 and Question 4
In this work, we tackle the above questions and we develop a sorting algorithm called \emph{\rifflesort} that takes $O(n\log n)$ time and, if $p < \frac{1}{4}$, achieves a maximum dislocation of $O(\log n)$ and a total dislocation of $O(n)$ with high probability.
We also exploit the intrinsic random nature of comparison faults to derandomize \rifflesort while maintaining the same asymptotic guarantees, as long as the comparison errors do not depend on the position of the elements in the input sequence. 

On the negative side, we prove that no algorithm can achieve maximum dislocation $o(\log n)$ with high probability, nor total dislocation $o(n)$ in expectation.
Moreover, if $o(n \log n)$ time was sufficient to compute a sequence having a maximum dislocation of $d = O(\log n)$, there would exist an algorithm that sorts $n$ elements in the absence of comparison faults using only $o(n\log n)+O\bparen{\frac{n}{\log n}\cdot ((\log n)(\log\log n))}=o(n\log n)$ time, contradicting the classical $\Omega(n \log n)$ lower bound for comparison-based sorting algorithms.
Hence, \rifflesort is simultaneously optimal, in an asymptotic sense, in terms of maximum dislocation, total dislocation, and running time. 
This answers Question 1 for constant $p < \frac{1}{4}$ and Question 4.
Handling all constant error probabilities in $p \in [0, \frac{1}{2})$ in time $O(n \log n)$ remains an interesting open problem.
\Cref{tb-recurrent} compares the state-of-the-art works with our result.

%To some extent,
%Braverman and Mossel's sorting algorithm~\cite{BravermanM08} relaxes the constraint on $p$ to a function of $p$ in the exponent of the running time. 
%It is still an important open question to attain $O(n\log n)$ time, $O(\log n)$ maximum dislocation and $O(n)$ total dislocation for any error probability $p<1/2$, removing the constraint that $p < \frac{5-\sqrt{17}}{8} \approx \frac{1}{9}$ from \rifflesort.

\begin{table}
	\centering
  \caption{Summary of the state-of-the-art. Here, $f(p)$ is some function of $p$. The upper bound on $f(p)$ resulting from the derivation in \cite{BravermanM08} is $\frac{110525}{(1-2p)^4}$. The notation $\widetilde{O}(g(n))$ is a shorthand for $O(g(n) \text{polylog} n)$.}	\label{tb-recurrent}
	\renewcommand{\arraystretch}{1.2}
	\begin{tabular}{|c|cc|c|}
		\hline
		\textbf{Time}	& \textbf{Maximum Dislocation} & \textbf{Total Dislocation} & \textbf{Reference} \\ \hline
		$O(n^{3+f(p)})$	 & $O(\log n)$ \whp & $O(n)$ \whp &  \cite{BravermanM08} \\ 
		$O(n^2)$	& $O(\log n)$ \whp & $O(n\log n)$ \whp  &  \cite{KleinPSW11} 
		\\
    $\widetilde{O}(n \sqrt{n})$ w.h.p. & $O(\log n)$ w.h.p & $O(n)$ expected & \cite{GeissmannLLP20} \\
	   \hline \hline
		$O(n\log n)$	& $O(\log n)$ \whp & $O(n)$ \whp & \multirow{2}{*}{\textbf{this work}} \\ 
		Any & $\Omega(\log n)$ \whp & $\Omega(n)$ \expe &  \\ \hline
	\end{tabular} 
	%The existing approximate sorting algorithms and our result. 
	%The constant $c_p$ in the exponent of the running time of \cite{BravermanM08} depends on the error probability $p$ and it is typically quite large. 
	%We write $\Omega(f(n))$ w.h.p. (resp. exp.) to mean that no algorithm can achieve dislocation $o(f(n))$ with high probability (resp. in expectation).
\end{table}

%% Question 2 -- 3 
To design \rifflesort we develop two sub-routines, \noisysearch and \basketsort, both of which can be of independent interest.
\noisysearch takes $O(\log n)$ time to report an approximate rank of an external element $x$ w.r.t.\ an approximately sorted sequence having dislocation at most $d$ when only noisy comparisons between $x$ and the elements in the sequence are allowed. The reported rank differs from the true rank by at most $O(d)$ with high probability.
\basketsort re-sorts an approximately sorted sequence having $m$ elements and maximum dislocation at most $d$ and returns, with high probability, a sequence having maximum dislocation $O(\log m)$ and total dislocation $O(m)$ in time $O(d \cdot m)$.
% times the maximum dislocation of the approximately sorted sequence.
We note that \noisysearch works for any constant error probability $p<\frac{1}{2}$, while \basketsort requires $p < \frac{1}{4}$, i.e., the constraint on $p$ of \rifflesort is due to the analogous constraint in \basketsort. These algorithms answer Question 2 and Question 3, for their respective ranges of $p$. 

%
%Moreover, we prove that any algorithm cannot attain a maximum dislocation of $o(\log n)$ with probability at least $1-\frac{1}{n}$, and any algorithm cannot achieve a total dislocation of $o(n)$ in expectation.
%These two lower bounds support the optimality of the \rifflesort algorithm in terms of the maximum dislocation and the total dislocation.
%We note that a lower bound of $\Omega(n)$ in expectation is stricter than a lower bound of $\Omega(n)$ with high probability.

%\begin{remark}\label{rem:journal}
%\end{remark}

\subsection{Other related work}

    %Certain parts of the analysis for \basketsort make use of the analysis for \windowsort~\cite{GeissmannLLP17}. 
Computing with errors is often considered in the framework of a two-person game called \emph{R\'{e}nyi-Ulam Game}. In this game, a questioner tries to identify an unknown object $x$ from a universe $U$ by asking yes-or-no questions to a responder, but some of the answers might be wrong. The case in which $U = \{1, \dots, n \}$, the questions are of the form ``is $s > x$?'', and each answer is independently incorrect with probability $p < \frac{1}{2}$ has been considered by Feige et al.~\cite{FeigeRPU94}, as we already mentioned. Ben-Or and Hassidim \cite{Ben-OrH08} then showed how to find $s$ using an optimal amount of queries up to additive polylogarithmic terms.
The variant in which responder is allowed to adversarially lie up to $k$ times has been proposed by Rényi~\cite{Renyi61} and Ulam~\cite{Ulam76}, which has then been solved by Rivest et al.~\cite{RivestMKWS80} using only $\log n + k \log log n + O( k \log k )$ questions, which is tight.
Among other results, near-optimal strategies for the distributional version of the game have been devised in \cite{Damaschke16}.
For more related results, we refer the interested reader to \cite{Pelc02} for a survey and to \cite{Cicalese13} for a monograph.

Regarding other sorting algorithms not mentioned before, Braverman et al.~\cite{BravermanMW16} studied algorithms which use \emph{a bounded number of rounds} for some ``easier'' versions of sorting, e.g., distinguishing  the top $k$ elements from the others. 
%Here each round consists of a set of comparisons, where it is possible to compare the same pair of elements several times using independent comparisons like in the non-persistent fault model.
The comparisons to be performed in each round are decided a priori, i.e., they do not depend on the results of the comparisons in the round.
In each round, a fresh set of comparison outcomes is generated, and each round consists of $\delta \cdot n$ comparisons. 
They evaluate the quality of an algorithm by estimating the number of ``misclassified'' elements and also consider a variant in which errors result in missing (rather than incorrect) comparison results.

%In general, sorting in presence of errors seems to be computationally more difficult than the error-free counterpart. For instance, 
Ajtai et al.~\cite{AjtaiFHN16} provided algorithms using \emph{subquadratic} time (and number of comparisons) when errors occur only between  elements whose difference is at most some fixed threshold. Also,  Damaschke~\cite{Damaschke16} gave a subquadratic-time algorithm when the number $k$ of errors is known in advance. 

Kunisky et al.\ \cite{KuniskyInference} consider the information-theoretic problem of detecting whether a given tournament is generated by choosing each arc direction uniformly at random, or according to the outcomes of independent noisy comparisons. They show that this can be done w.h.p.\ when $p = \frac{1}{2} - \omega(n^{-3/4})$, and this is tight. Moreover, for $p = \frac{1}{2} - \omega(n^{-3/4})$, it is possible to recover the correct ranking with probability $1 - o(1)$, which is also tight. In the latter case, an approximate ranking w.r.t.\ to some \emph{ranking alignment measure} can be found in polynomial time. Another recent paper \cite{KuniskySWY25} considers similar detection and recovery problems in the case in which a noisy tournament is planted as a subgraph in an otherwise randomly directed \emph{host} graph, where the arcs of both the planted tournament and the host graph only exist with some fixed probability.

%As mentioned above, an easier error model is the one with \emph{non-persistent} errors, 
%meaning that the same comparison can be \emph{repeated} and the errors are independent, and happen with some probability $p<1/2$. 
%In this model it is possible to sort $n$ elements in time $O(n\log(n/q))$, where $1-q$ is the success probability of the algorithm \cite{FeigeRPU94} (see also \cite{AlonsoCGJRS04,HanO13} for the analysis of the classical Quicksort and recursive Mergesort algorithms in this error model).

	Finally, we point out that this manuscript is an extended and improved version of two conference papers~\cite{GeissmannLLP17,GeissmannLLP19}, which found applications in the context of learning theory \cite{HopkinsKLM20}, graph clustering \cite{BianchiP21}, knowledge and data engineering \cite{BerettaNTV23}, and database management \cite{AddankiGS21}.
  The major improvement from \cite{GeissmannLLP17,GeissmannLLP19} is the novel \basketsort algorithm, which supersedes \windowsort~\cite{GeissmannLLP17} in \rifflesort, and allows us to relax the previous constraint $p<\frac{1}{16}$ on the probability of error to $p < \frac{1}{4}$. %$p < \frac{5-\sqrt{17}}{8} \approx \frac{1}{9}$. 
  Following the publication of ~\cite{GeissmannLLP17}, we developed the aforementioned $\widetilde{O}(n \sqrt{n})$-time algorithm \cite{GeissmannLLP20}, which is now superseded by \rifflesort. 

  %that recursively applies \windowsort, which is now superseded by \rifflesort. 
%
%  to improve its running time from $O(n^2)$ to $n^{1.5} \log^{O(1)}  n$ \cite{GeissmannLLP20} (which is now superseded by \rifflesort). 
	%Roughly speaking, \windowsort compares each element $x$ with its nearby elements to compute a rank for $x$, while \basketsort perform pair-wise comparisons between a certain number of elements to computer their ranks. 

%% Goodrich et al.~\cite{GoodrichJ23,AfsharDGO23,GoodrichO23}
%% for parallel models~\cite{GoodrichJ23}, for oblivious data~\cite{AfsharDGO23} and for external memory~\cite{GoodrichO23}.

%% learning theory: Hopkins et al.~\cite{HopkinsKLM20}
%% graph-based clustering: Bianchi and Penna~\cite{BianchiP21}
%% knowledge and data engineering: Beretta et al.~\cite{BerettaNTV23}
%% database management, Addanki et al.~\cite{AddankiGS21}

\subsection{Organization of the paper}

\Cref{sec:preliminaries} describes the error model and introduces some basic notation.
\Cref{sec:algorithm,sec:noisy-search,sec:baskset-sort} are organized in a top-down fashion: \Cref{sec:algorithm} presents the main algorithm \rifflesort by using \noisysearch and \basketsort as black-box subroutines; then \Cref{sec:noisy-search} and \Cref{sec:baskset-sort} present the details of \noisysearch and \basketsort, respectively.
\Cref{sec:lower_bound} discusses our lower bounds, and \Cref{sec:derandomization} concludes the paper by showing how \rifflesort can be derandomized when errors happen at least with some constant probability and the comparison errors do not depend on the position of the elements in the input sequence.
For the sake of self-containment, Appendix~\ref{apx:random_subset} describes an algorithm to choose a uniformly random subset of a given cardinality from a universe of $n$ elements while using only $O(n)$ random bits, which is needed by our derandomization strategy of \Cref{sec:derandomization}.

\section{Preliminaries}\label{sec:preliminaries}

\paragraph*{Basic notation} Given a set or a sequence $S$ of elements and an element $x$ (not necessarily in $S$), we define the \emph{rank} of $x$ w.r.t.\ $S$ as $\rank(x,\,S)= 1+\abs{\{ y \in S : y < x\}}$.  Moreover, if $S$ is a sequence and $x \in S$, we use $\pos(x,\,S) \in [1, |S|]$ to denote the \emph{position} of $x$ in $S$.
Furthermore, let $\disl(x,\,S)= |\pos(x,S)-\rank(x,S)|$ denote the \emph{dislocation} of $x$ in $A$, and let $\disl(S)= \max_{x\in S} \disl(x,\,S)$ denote the \emph{maximum dislocation} of the sequence $S$. 
Throughout the paper we use $\ln z$ and $\log z$ to refer to the natural and the binary logarithm of $z \in \mathbb{R} \setminus \{0\}$, respectively.

\paragraph*{Our error model for approximate noisy sorting}
In the approximate noisy sorting problem, we consider a set $S$ of elements possessing a true linear order that can only be observed through noisy comparisons. Given any two distinct elements $x, y \in S$, we write $x < y$ (resp.\ $x > y$) to denote that $x$ is smaller (resp.\ larger) than $y$ according to the true order. 
For each unordered pair $\{x,y\}$ of distinct elements, there exists \emph{an error probability} $p_{\{x,y\}}$ such that a comparison between  the two elements reports the correct order relation (i.e., $x < y$ or $x > y$) with probability $1-p_{\{x,y\}}$ and the complementary, erroneous, outcome with probability $p_{\{x,y\}}$. 
We write $x \prec y$ or $y \succ x$ (resp.\ $x \succ y$ or $x \prec y$) to denote that $x$ is reported to be smaller (resp.\ larger) than $y$. 
Comparison errors between different pairs of elements happen independently while the comparison outcome of each pair of elements is \emph{persistent} so that comparing the same pair of elements multiple times always returns the same relative order of the elements (i.e., resampling is not allowed). 

We say that the elements in $S$ are subject to persistent random comparison faults (or errors) with probability at most $p$ (resp.\ at least $q$) if $p_{\{x,y\}} \le p$ (resp.\ $p_{\{x,y\}} \ge q$) for all pairs $\{x,y\}$.
Throughout the paper, $q$ and $p$ will be some constants with $0 \le q \le p < \frac{1}{2}$, and we might impose further constraints on the range of allowed values.

An algorithm that approximately sorts $S$ takes as input a sequence of the elements in $S$ and outputs an ``approximately sorted'' sequence $S'$, i.e., permutation of the input sequence, while only performing noisy comparisons.
Each comparison requires constant time, and the goal is that of \emph{quickly} computing a sequence $S$ with small dislocation.
Notice that, in general, the order in which the elements appear in the input permutation of $S$ might depend on the comparison errors, i.e., one can imagine the following process: all the comparison errors between distinct pairs of elements are randomly decided a-priori (independently and according to the probabilities described above), then an adversary is given access to the elements in $S$, to their true order, and to all the noisy outcomes of the comparisons, and chooses the input sequence to be used by the algorithm.

\paragraph*{Our error model for approximate noisy binary searching}
As discussed in the introduction, we also consider the problem of estimating the rank $\rank(x, S)$ of an element $x$ in an input sequence $S$ that satisfies $\disl(S) \le d$, where $d$ is an integer input parameter.

In this problem, only comparisons between $x$ and the elements in $S$ are allowed,  and these comparisons are noisy: errors occur with probability at most $p$ for some constant $p \in [0, \frac{1}{2})$, errors are independent, and the comparison outcomes are persistent. Here, the order of the elements in the sequence $S$ cannot depend on the comparison errors between $x$ and the elements in $S$.

\section{Our Sorting Algorithm}~\label{sec:algorithm}

%%%%% Opening paragraph
In this section, we design an algorithm that approximately sorts an input sequence $S$ of $n$ elements 
subject to persistent random comparison faults with probability at least $q$ and at most $p$ with $p \le \frac{9q+1}{8q+11}$.
The algorithm uses $O(n\log n)$ worst-case time, which is asymptotically optimal, and returns a permutation of $S$ having maximum dislocation $O(\log n)$ and total dislocation $O(n)$, with high probability. 
For the case $q=0$, i.e., when no lower bound  on the error probability is known, the above condition becomes $p < \frac{1}{11}$, while for the case $p=q$, which is the one that has been considered in \cite{BravermanM08, KleinPSW11, GeissmannLLP20}, the algorithm can handle any constant $p < \frac{1}{4}$. 

To some extent, the flavor of our algorithm is a mixture of \emph{Insertion Sort} and \emph{Merge Sort}.
On the one hand, our algorithm extends a partially sorted sequence by finding the (approximate) ranks of new elements and inserting them in the computed positions; on the other hand, our algorithm inserts such elements in batches so that the size of the intermediate sequences doubles at each iteration.
Due to the persistent comparison faults, it is not possible to guarantee that these intermediate sequences will be correctly sorted. Instead, our algorithm ensures that such sequences 
are kept approximately sorted w.h.p. More precisely, since inserting a batch of elements causes the dislocation of the sequence to increase multiplicatively, we counteract this effect by post-processing the resulting sequence.

%%%%% Three Key Ingredients
In order to implement our algorithm, we answer the following three key questions:
\begin{enumerate}
	\item How do we quickly find an approximate rank of a new element to be inserted in an approximately sorted sequence?
%
%        with respect to the elements in an approximately sorted sequence?
	\item How do we bound the increase in dislocation resulting from simultaneously inserting a batch of elements in an approximately sorted sequence?
	\item How can the sequence resulting from inserting a batch of elements be post-processed to counteract the increase in dislocation?
\end{enumerate}

To address the first question, we design a search algorithm called \noisysearch that finds an approximate rank of a new element with respect to an approximately sorted sequence of $m$ elements in $O(\log m)$ time, w.h.p. The computed approximate rank differs from the true rank of the new element by at most a constant times the maximum dislocation of the sequence. The details of this algorithm are given in \Cref{sec:noisy-search} and its guarantees can be summarized as follows:

\begin{restatable*}{theorem}{thmnoisysearch}
	\label{thm:noisy_binary_search}
	Let $S$ be a sequence of $m$ elements, let $d$ be an integer with $d \ge \max\{\disl(S),\ln m\}$, and let $x \not\in S$ be an additional element.
    Under our error model for approximate noisy binary searching,
    there exists an algorithm $\noisysearch(S, x, d)$ 
    that runs in $O(\log m)$ worst-case time and returns an index $\tau_x \in [ \rank(x,S) - \alpha d,  \rank(x,S) + \alpha d]$ with probability at least $1- O(m^{-6})$, where $\alpha \ge 2$ is a constant that depends only on $p$.
\end{restatable*}

For the second question, we connect the batch insertion to \emph{an urn experiment} that is seemingly unrelated to sorting and can be stated as a basic probability problem (see \Cref{lem:draw_no_long_monocolor} in the analysis of the algorithm). 

Finally, to restore the dislocation of a sequence, we develop a new sorting algorithm for persistent random comparison faults called \basketsort.
This algorithm takes a sequence of $m$ elements and an upper bound $d$ on the maximum dislocation of such sequence as input and computes, in $\Theta(m d)$ time, a new sequence with a maximum dislocation of  $O(\log m)$ and a total dislocation of $O(m)$ w.h.p.\footnote{
  Since the running time of \basketsort depends on the maximum dislocation of its input sequence, directly applying \basketsort on $S$ cannot achieve a running time of $O(n\log n)$.} This algorithm is described in \Cref{sec:baskset-sort}, which formally proves the following:

\begin{restatable*}{theorem}{thmbasketsort}
  \label{thm:basketsort}
    Consider a set of $m$ elements subject to persistent random comparison faults with probability at least $q$ and at most $p$, for some constants $p,q$ with $0 \le q \le p < \frac{9q+1}{8q+11}$.
    Let $S$ be a sequence of these $m$ elements (possibly chosen as a function of the errors), and let $w_S \ge \disl(S)$.
  $\basketsort(S, w_S)$ is a deterministic algorithm that approximately sorts $S$ in $O(m w_S)$ worst-case time.
    The maximum dislocation and the total dislocation of the returned sequence are at most $\bmax \cdot \ln m$ and $\btot \cdot m$, respectively, with probability $1 - O(m^{-3})$, where $\bmax \ge 2$ and $\btot$ are constants that come from \Cref{lemma:bs_maximum-dislocation} and \Cref{lem:total_dislocation_high}, respectively, and depend only on $p$ and $q$.
\end{restatable*}

%%%%% Algorithm
\subsection{Algorithm description}
\label{sec:rifflesort_description}

In the rest of this section, we assume that \Cref{thm:noisy_binary_search} and \Cref{thm:basketsort} hold (as they will be proven in \Cref{sec:noisy-search,sec:baskset-sort}, respectively), we let $\alpha$, $\bmax$, and $\btot$ be the corresponding constants in the two theorems, and we define $\gamma = 226 \alpha$. We now  describe our main sorting algorithm, which we name \emph{\rifflesort} and whose pseudocode is given in \Cref{alg:rifflesort}.

\begin{algorithm2e}[t]
	\caption{\rifflesort\unskip$(S)$}
	\label[algorithm]{alg:rifflesort}
    $S' \gets $ A permutation of $S$ chosen u.a.r.\ from the set of all possible permutations\label{ln:rs_shuffle}\;
    \tcp{The above step is only needed if the order of the elements in $S$ can depend on the errors (otherwise, it suffices to use $S' = S$).}

    \BlankLine 

    $T_0, T_1,\ldots, T_k \gets$ a random partition of the elements in  $S'$ such that $|T_0| = \lceil \sqrt{n} \rceil$, $|T_i| = 2^{i-1} \cdot \lceil \sqrt{n} \rceil$ for $i=1, \dots, k-1$, and $1 \le |T_k| \le 2^k \cdot \lceil \sqrt{n} \rceil$\;

    \BlankLine

    $\widetilde{S}_0 \gets$ Subsequence of $S'$ containing the elements in $T_0$\;
    $S_0 \gets \basketsort(\widetilde{S}_0,\, |\widetilde{S}_0|)$\label{ln:rf_basketsort_S0}\;

    \BlankLine

    \For{$i=1, \dots, k$}
    {
        \ForEach{$x\in T_i$}
        {
            $r_x\gets \noisysearch(S_{i-1},\, x, \, \bmax \cdot \ln n )$\label{ln:rf_noisysearch}\;
        }
    
        $\widetilde{S}_i \gets $ Simultaneously insert each element  $x \in T_i$ into position $r_x$ of $S_{i-1}$ (break ties according to $S'$)\label{ln:rf_simultaneous_insert}\;
        $S_i \gets \basketsort(\widetilde{S}_i,\, \gamma \cdot \bmax  \cdot \ln n)$\label{ln:rf_basketsort_Si}\;
	}
    \BlankLine
		\Return $S_k$\;
\end{algorithm2e}

The algorithm starts by shuffling the elements in $S$ in order to obtain a new sequence $S'$ in which the position of the elements is guaranteed not to depend on the comparison errors.\footnote{If this is already known to be the case for $S$, we can choose $S'=S$.} 
Next, \rifflesort partitions the elements in $S'$ into $k+1$ subsets $T_0, T_1, \dots, T_k$, such that $|T_0| = \lceil \sqrt{n} \rceil$, $|T_i| = 2^{i-1} \cdot \lceil \sqrt{n} \rceil$ for $i=1, \dots, k-1$, and $1 \le |T_k| \le 2^k \cdot \lceil \sqrt{n} \rceil$.

More precisely, we let $k$ be the smallest integer that satisfies $\lceil \sqrt{n} \rceil + \lceil \sqrt{n} \rceil\cdot  \sum_{i=1}^k 2^{i-1} \ge n$, namely $k = \left\lceil \log \frac{n}{\lceil \sqrt{n} \rceil} \right\rceil < 1 + \frac{1}{2} \log n$. Then, we pick $T_0$ as a subset of $\lceil \sqrt{n} \rceil$ elements chosen u.a.r.\ from those in $S'$ (without replacement); $T_i$, for $i=1,\dots,k-1$, as a subset of $2^{i-1}\cdot \lceil \sqrt{n} \rceil$ elements chosen u.a.r.\ from $S' \setminus \bigcup_{j=0}^{i-1} T_j$; and $T_k$ as the set of the remaining elements, i.e., $T_k = S' \setminus \bigcup_{j=0}^{k-1} T_j$.

%%%%% Formal Description
%\rifflesort partitions $T$ into $k+1$ subsets such that $k=\frac{\log n}{2}$, $T_0$ is a subset of $\sqrt{n}$ elements chosen u.a.r.\ from $T$ (without replacement), and for $1\leq i\leq k$, $T_i$ is a $(2^{i-1}\sqrt{n})$-element random subset of $T\setminus \bigcup_{j=0}^{i-1}T_j$. 
As a preliminary step, \rifflesort runs \basketsort on the subsequence $\widetilde{S}_0$ of $S$ that contains the element in $T_0$ in order to get an approximately sorted sequence $S_0$. This call to \basketsort uses $|\widetilde{S}_0|$ as a trivial upper bound on $\disl(\widetilde{S}_0)$.
Then, for $i=1, 2,\ldots, k$, \rifflesort  performs \noisysearch on $S_{i-1}$ to compute an approximate rank $r_x$ for each element $x$ in $T_i$.
These elements are simultaneously inserted into $S_{i-1}$ according to their approximate ranks, thus obtaining a sequence $\widetilde{S}_i$. If ties arise (i.e., if multiple elements share the same approximate rank), they are broken according to the order of the tied elements in $S'$. This ensures that the order of the elements in $\widetilde{S}_i$ does not (indirectly) depend on the comparisons errors involving elements in $T_{i+1}, \dots, T_{k}$.
Finally, \basketsort is used to re-sort $\widetilde{S}_i$ into a new sequence $S_i$ with a smaller maximum dislocation, before moving to the next iteration.
Once all iterations are complete, the algorithm returns $S_k$.

From a high-level perspective, we have that, in each iteration, the maximum dislocation of $S_{i-1}$ is at most $\bmax\cdot \ln n$,
while inserting the elements of $T_i$ into $S_{i-1}$ according to the approximate ranks computed by \noisysearch increases the maximum dislocation to at most $\gamma\cdot \bmax\cdot\ln n$. Finally, \basketsort decreases the maximum dislocation back to at most $\bmax\cdot \ln n$, where the quantities $\bmax$ and $\gamma$ are constants that depend only on $p$ and $q$. In particular, % $\bmax$ is the corresponding constant of \basketsort (see \Cref{thm:basketsort}) and
$\gamma$ stems from \noisysearch and our analysis of the simultaneous insertion step of \rifflesort (see \Cref{lem:merge_constant_disl_increase}).

%While the batch insertion iteratively grows the approximately sorted sequence, which ultimately include all the elements in $S$, each batch insertion also worsens the dislocation by a constant factor.
%Although the rate at which the dislocation increases is faster than the rate at which new elements are inserted, \basketsort locally re-arranges the elements, so that all newly inserted elements are now closer to their intended position.
%The parameter $\bmax$ is a constant depending on $p$ from \basketsort in \Cref{sec:baskset-sort}, and the parameter $\gamma$ is a constant depending on $p$ from \Cref{lem:merge_constant_disl_increase} that hinges on \noisysearch in \Cref{sec:noisy-search} and the urn experiment in  \Cref{lem:draw_no_long_monocolor}.

\iffalse

%%%%% Main Theorem

\noisysearch and \basketsort, will be introduced in \Cref{sec:noisy-search} and \Cref{sec:baskset-sort}, respectively.
The main algorithm, \rifflesort, will be analyzed in \Cref{sub:insertion} including the batch insertion, thus proving \Cref{thm:rifflesort}.
We remark that the constraint that $p<\frac{5-\sqrt{17}}{8}$ comes from \basketsort, while \noisysearch only requires that $p<\frac{1}{2}$.

\fi

\subsection{Algorithm analysis}

\noindent We start by analyzing the running time of \rifflesort.

\begin{lemma}\label{lem:rifflesort_runtime}
	The worst-case running time of \Cref{alg:rifflesort} is $O(n \log n)$.
\end{lemma}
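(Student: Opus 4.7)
The plan is to account for the worst-case cost of each line of \Cref{alg:rifflesort} and then sum the contributions of the main loop. First, generating the shuffled permutation $S'$ on Line~\ref{ln:rs_shuffle} and the random partition $T_0, \dots, T_k$ both take $O(n)$ time with standard techniques, and the preliminary \basketsort call on Line~\ref{ln:rf_basketsort_S0} uses the trivial bound $w_S = |\widetilde{S}_0| = \lceil\sqrt{n}\rceil$, so by \Cref{thm:basketsort} it runs in $O(|\widetilde{S}_0|^2) = O(n)$ time. These contributions will be absorbed by the loop.

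To bound the loop, I would first observe that the intermediate sequences satisfy $|S_0| = \lceil\sqrt{n}\rceil$, $|S_i| = |S_{i-1}| + |T_i| = 2^i \lceil\sqrt{n}\rceil$ for $1 \le i < k$, and $|S_k| = n$, hence $|S_i| = O(2^i \sqrt{n})$ and $|T_i| = \Theta(|S_i|)$. Each iteration $i$ has three substeps: (i) the $|T_i|$ calls to \noisysearch on Line~\ref{ln:rf_noisysearch} cost $O(\log |S_{i-1}|) = O(\log n)$ each by \Cref{thm:noisy_binary_search}, for a total of $O(|T_i| \log n)$; (ii) the simultaneous insertion on Line~\ref{ln:rf_simultaneous_insert} can be implemented in $O(|S_i|)$ worst-case time by first sorting the approximate ranks $\{r_x : x \in T_i\}$ with counting sort (their values are integers in $[1, |S_{i-1}|+1]$), with ties broken via the position of each $x$ in $S'$, and then merging the sorted batch into $S_{i-1}$ by a single linear scan; and (iii) the \basketsort call on Line~\ref{ln:rf_basketsort_Si} with $w_S = \gamma \bmax \ln n$ costs $O(|\widetilde{S}_i| \cdot w_S) = O(|S_i| \log n)$ by \Cref{thm:basketsort}.

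Combining the three substeps, each iteration costs $O(|S_i| \log n) = O(2^i \sqrt{n} \log n)$, and summing over $i = 1, \dots, k$ yields a geometric series dominated by its last term:
\[
\sum_{i=1}^{k} O\bparen{2^i \sqrt{n} \log n} \;=\; O\bparen{2^k \sqrt{n} \log n}.
\]
By the choice of $k$, we have $2^k \le 2 n / \lceil\sqrt{n}\rceil \le 2\sqrt{n}$, which gives the claimed $O(n \log n)$ bound. The only step that demands a sentence of care is the $O(|S_i|)$ implementation of the simultaneous insertion via counting sort on integer ranks; all other substeps are direct applications of \Cref{thm:noisy_binary_search}, \Cref{thm:basketsort}, and the doubling structure of the $|T_i|$'s.
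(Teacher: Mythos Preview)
Your proposal is correct and follows essentially the same approach as the paper: bound the preliminary steps, bound the three substeps of each iteration via \Cref{thm:noisy_binary_search} and \Cref{thm:basketsort}, and sum the resulting geometric series using $2^k \le 2\sqrt{n}$. The only minor differences are that the paper records $O(n\log n)$ rather than $O(n)$ for the shuffle and partition (being cautious about the cost of sampling random integers), and that you spell out a counting-sort implementation for the simultaneous insertion where the paper simply says it ``trivially takes $O(|S_{i-1}|+|T_i|)$ time''.
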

\begin{proof}
    Clearly, it takes $O(n \log n)$ time to compute $S'$ and the random partition $T_0$, $\ldots$, $T_k$.\footnote{The exact complexities of these steps depend on whether we are allowed to sample uniformly random integers in a range in $O(1)$ time. Even if this is not the case, we can still independently sample $n-1$ integers, where the $i$-th such integer is sampled u.a.r.\ from $\{0, \dots, n-i\}$, in $O(n \log n)$ time with probability at least $1-n^{-2}$ using a simple rejection strategy (see the proof of \Cref{lemma:randomsubset_numbits} for details).
    This allows us to compute $S'$ using, e.g., the Fisher-Yates shuffle (see, e.g., \cite{Knuth98fisheryates}) on $S$.
    Moreover, the partition $T_0, \dots, T_k$ can be found using $O(n)$ random bits with probability at least $1-n^{-2}$ (see \Cref{sub:rifflesort_n_bits} and \Cref{apx:random_subset}).
%then integers can still be generated from a source of independent and unbiased random bits and it is possible to show that the total number of required random bits is at most $O(n)$ with probability at least $1-n^{-2}$ (see \Cref{sub:rifflesort_n_bits} and \Cref{apx:random_subset}).
    To maintain a worst-case upper bound on the running time, we can handle the unlikely event in which $O(n \log n)$ bits do not suffice by simply stopping the algorithm and returning any arbitrary permutation of $S$. This will not affect the high-probability bounds in presented in the sequel.} 
  Moreover, by \Cref{thm:basketsort}, the first call to \basketsort requires $O(| T_0 |^2) = O(\lceil \sqrt{n} \rceil^2) = O(n)$ time. 
We can therefore restrict our attention to the generic $i$-th iteration of the for loop.
  From \Cref{thm:noisy_binary_search} we have that each invocation of \noisysearch on $S_{i-1}$ takes time $O(\log |S_{i-1}|) = O(\log n)$, leading to a total search time of $O(\abs{T_i} \log n)$.
Then, the insertion trivially takes $O(\abs{S_{i-1}}+\abs{T_i})=O(\abs{S_i})$ time.
    Furthermore, by \Cref{thm:basketsort}, $\basketsort(S_i,\, \gamma \cdot  \bmax  \cdot \ln n)$ requires $O(\abs{S_i}\log n)$ time (recall that both $\gamma$ and $\bmax$ are constants depending only on $p$ and $q$).
To sum up, since  $\abs{T_i} \le 2^{i-1} \lceil \sqrt{n} \rceil$ and $\abs{S_i} \le \sum_{j=0}^i \abs{T_j}=O(2^i\sqrt{n})$ for $i=1,\dots, k$,
  the $i$-th iteration takes time $O(\abs{T_i}\log n+\abs{S_i}+\abs{S_i}\log n)=O(|S_i| \log n) = O(2^i\sqrt{n} \log n)$, and thus the total running time of \Cref{alg:rifflesort} is $O(n\log n)+ O\left( \big(\sqrt{n} \log n \big) \cdot \sum_{i=1}^k 2^i \right) =O(n\log n + 2^k \sqrt{n} \log n) = O(n \log n)$, where we used $2^k < 2^{1 + \frac{1}{2} \log n} = 2 \sqrt{n}$.
\end{proof}

We now analyze the maximum and the total dislocation of the returned sequence.
We start by upper bounding the maximum dislocation resulting from the batch insertion and, to this aim, we consider a thought experiment involving urns and randomly drawn balls.
%The readers only interested in \rifflesort could skip the corresponding proof.

\begin{lemma}\label{lem:draw_no_long_monocolor}
%  Consider an urn containing $B$ black balls and $W \ge B$ white balls.
%Consider an urn containing $N=2M$ balls, $M$ of which are white and the remaining $M$ are black.
  Consider an urn containing $N$ balls, $M \ge \frac{N}{2}$ of which are white while the others are black.
  The balls are iteratively drawn from the urn without replacement until the urn is empty. 
  Let $\ell$ be an integer that satisfies $8 \log N \le \ell \le \frac{M}{432}$.
  %$ 9 \log N \le \ell \le \frac{N}{16}$ holds, 
  If $M \ge 64$, the probability that there exists some contiguous subsequence of at least $54\ell$ drawn balls containing $\ell$ or fewer white balls is at most $N^{-6}$.
\end{lemma}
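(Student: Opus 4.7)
The plan is to reduce the event in question to concentration for a single fixed window and then take a union bound. A first useful observation: if some contiguous subsequence of $w \ge 54\ell$ drawn balls contains at most $\ell$ white balls, then its prefix of length exactly $54\ell$ also contains at most $\ell$ whites. Hence it suffices to bound the probability that some window of length \emph{exactly} $54\ell$, starting at one of the at most $N$ possible positions in the draw sequence, contains at most $\ell$ whites.

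Fix such a window and let $X$ denote the number of whites it contains. Since draws are made without replacement from a population of $M$ whites and $N-M$ blacks, $X$ is hypergeometric with parameters $(N,M,w)$ where $w = 54\ell$, and its mean is $\mu := \E[X] = wM/N \ge w/2 = 27\ell$ by the assumption $M \ge N/2$. I would then apply Hoeffding's classical concentration inequality for sampling without replacement from a $\{0,1\}$-valued population, which gives
$$\Pr\Paren{X \le \mu - t} \le \exp\Paren{-\frac{2t^2}{w}}.$$
Setting $t = \mu - \ell \ge 26\ell$ yields
$$\Pr\Paren{X \le \ell} \le \exp\Paren{-\frac{2(26\ell)^2}{54\ell}} \le \exp(-25\ell).$$

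Using the hypothesis $\ell \ge 8 \log N$ (binary logarithm, per the paper's convention), the single-window failure probability is at most $\exp(-200 \log N) = N^{-200/\ln 2}$, which is comfortably smaller than $N^{-7}$. A union bound over the at most $N$ possible starting positions then produces total failure probability at most $N^{-6}$, as claimed. I expect no substantive obstacle: the only nontrivial ingredient is concentration for the hypergeometric distribution, which is precisely the content of Hoeffding's theorem on sampling without replacement, and the constants in the statement leave generous slack so the bookkeeping is routine. The side hypotheses $\ell \le M/432$ and $M \ge 64$ are not needed for this particular probability bound, although the former also guarantees that windows of length $54\ell$ fit inside the draw sequence (since $54\ell \le N/8$), and both are presumably invoked in the surrounding analysis of \rifflesort that uses this lemma.
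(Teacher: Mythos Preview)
Your proof is correct and considerably cleaner than the paper's. The key observation you exploit---that by exchangeability of the random permutation, the number of white balls in any fixed window of $w=54\ell$ positions is hypergeometric with parameters $(N,M,w)$---lets you apply Hoeffding's tail bound for sampling without replacement directly, after which the union bound over at most $N$ starting positions finishes immediately.

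The paper takes a longer route: it splits on the starting position $i$ of the window. For $i \le M/4$ it argues that each $b_{i+j}$ is white with conditional probability at least $1/4$ given the other early draws; for $M/4 < i \le \lceil N/2 \rceil$ it first uses a hypergeometric tail bound to control how many whites have already been drawn, then derives a conditional lower bound of $1/8$ (weakened to $1/10$ after absorbing the failure event). These conditional bounds yield stochastic domination over $\mathrm{Binomial}(54\ell,\,1/10)$, and the proof concludes with an explicit binomial-tail estimate. This path is what consumes the side hypotheses $M \ge 64$ (to make $e^{-M/16}$ small) and $\ell \le M/432$ (so that $54\ell \le M/8$ and the conditional bounds cover the whole window). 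Your argument needs neither hypothesis for the probability bound itself, and the single-window failure probability you obtain---of order $N^{-200/\ln 2}$ rather than roughly $N^{-7}$---reflects the substantial slack. Your approach is both simpler and sharper.
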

\begin{proof}
  Let $b_i$ denote the $i$-th drawn ball. % and let $W \ge \frac{N}{2}$ be the number of white balls.
We consider the sequence of drawn balls and we prove an upper bound the probability that,
  given any position $i \le \left\lceil \frac{N}{2} \right\rceil$, at most $\ell$ balls in $b_{i}, \dots, b_{i+54\ell-1}$ are white. The same bound immediately holds for sequences stating from any position $i > \left\lceil \frac{N}{2} \right\rceil$ since we can apply similar arguments by considering the sequence of drawn balls in reverse order. We distinguish two cases.

 %
 %
 % 
 % we upper bound the distance between $i$-th ball and the position of the $\ell$-th closest white ball. Suppose that $i \le \left\lceil \frac{N}{2} \right\rceil$  since otherwise we can apply similar arguments by considering the drawn balls in reverse order. We distinguish two cases.

  If $i \le \frac{M}{4}$ then, for any integer $0 \le j \le \frac{M}{4} $, the probability that $b_{i+j}$ is white (regardless of the colors of the other balls in $b_0, \dots, b_{i+\lfloor M/4 \rfloor }$) is at least
  $
    \frac{M - M/2}{N} \ge \frac{M/2}{2M} = \frac{1}{4} > \frac{1}{10}
  $.

%\[
%\frac{M-(i+j)}{N-(i+j)}
%  \ge \frac{M-(i+j)}{2M - (i+j)}
%  = \frac{1}{2} - \frac{i+j}{4M-2(i+j)}
%  \ge \frac{1}{2} - \frac{M/2}{4M - 2(M/2)} 
%  = \frac{1}{2} - \frac{1}{6}
%  = \frac{1}{3} > \frac{1}{16}.
%\]

  If $\frac{M}{4} < i \le \left\lceil \frac{N}{2}\right\rceil$, then let $X$ be the number of white balls in $\{b_1, \dots, b_{i-1} \}$.
  Since $X$ is distributed as a hypergeometric random variable of parameters $N$, $M$, and $i-1$ we have that $\E[X] = \frac{(i-1)M}{N} \le \frac{M}{2}$.% where we used $N \ge 4$ to conclude that $i \le \frac{N}{2} + \frac{1}{2} \le \frac{4N}{8} + \frac{N}{8} = \frac{5N}{8}$.

  By using the tail bound (see, e.g., \cite{Skala13}) $\Pr(X \ge \E[X] + t(i-1)) \le e^{-2t^2 (i-1)}$ for $t \ge 0$, we obtain: % (for sufficiently large values of $N$, and hence of $M$):
\begin{align*}    
	\Pr\left(X \ge \frac{3M}{4}\right) &=
	\Pr\left(X \ge \frac{M}{2} + \frac{M}{4}\right) \le
  \Pr\left(X \ge \E[X] + \frac{M}{4(i-1)} \cdot (i-1)\right) \\ 
  &\le e^{-2 \frac{M^2}{16 (i-1)^2} (i-1)}
  = e^{-\frac{M^2}{16(i-1)}}
  \le e^{- \frac{M}{16} } 
  \le e^{-4} < \frac{1}{50},% < \frac{1}{5},
  %&\le e^{-8 \log n} < n^{-8} \le
	%2^{24} N^{-8}.
\end{align*}
  where we used $i-1 \le \frac{N}{2} \le M$ and $M \ge 64$.
%  $M \ge \frac{N}{2}$, which implies $i \le \left\lceil \frac{N}{2} \right\rceil \le \left\lceil \frac{2M}{2} \right\rceil = M$, and $M \ge 160$.

  Consider now the case $X < \frac{3M}{4}$, which happens with probability at least $\frac{49}{50}$.
  In this case, for any integer $j \le \frac{M}{8} $, the probability that $b_{i+j}$ is white, regardless of the specific colors of the other balls in $b_1, \dots, b_{i+\lfloor\frac{M}{8}\rfloor}$, is at least:
  $
    \frac{M-(\frac{3M}{4}+ \frac{M}{8})}{N} \ge
    \frac{M/8}{2M} = \frac{1}{8}
  $.
%  \[
%    \frac{M - (\frac{3M}{4} + \frac{M}{8}) }{N - (\lceil \frac{N}{2} \rceil + \frac{M}{8})} \le 
%    \frac{M - (\frac{3M}{4} + \frac{M}{8}) }{2M - (M + \frac{M}{8})} = \frac{1}{7}. 
%  \]
  Using the union bound on the events $X \ge \frac{3M}{4}$ and ``$b_{i+j}$ is black'', we have that $b_{i+j}$ is white with probability at least $1- \frac{7}{8} - \frac{1}{50} > \frac{1}{10}$.
%  \[
%\frac{M-(\frac{3M}{4}+j)}{N-(i+j)}
%  \ge \frac{M-(\frac{3M}{4}+\frac{M}{8})}{2M}
%= \frac{1}{16}.
%\]
%  Hence we have shown that $b_{i+j}$ is white with probability at least $\frac{4}{5} \cdot \frac{1}{16} ? = \frac{1}{20}$.

  Therefore, using $54 \ell \le \frac{M}{8}$, we have that, in both the first case and the second case, the probability that at most $\ell$ balls in $b_{i}, \dots, b_{i+54\ell-1}$ are white is at most:
\begin{align*}    
	\sum_{j=0}^{\ell} \binom{54\ell}{j} \left( \frac{1}{10} \right)^j \left( \frac{9}{10} \right)^{54\ell-j} 
  &\le (\ell+1)  \binom{54\ell}{\ell} \left(\frac{9}{10} \right)^{54\ell}   
	\le (\ell+1)  \left(\frac{54e\ell}{\ell}\right)^\ell \left(\frac{9}{10}\right)^{54\ell} \\
  &\le (\ell+1)  \left(54e \left(\frac{9}{10}\right)^{54} \right)^\ell 
	< \frac{\ell+1}{2^\ell},
\end{align*}    
where we used the inequality $\binom{\eta}{z} \le \left( \frac{e \eta}{z} \right)^z$.

By using the union bound on all $i$ and the inequality $8 \log N \le \ell \le \frac{M}{432} < N-1$, we can upper bound the sought probability as:
\[
  N \cdot \frac{\ell+1}{2^\ell} < \frac{N^2}{2^{8 \log N}} = N^{-6}. \qedhere
\]
%
%\[
%N \left( \frac{\ell+1}{2^\ell} + 2^{24} N^{-8} \right)
%\le N \left( \frac{N}{N^9} + 2^{24} N^{-8} \right)
%\le  (1+2^{24}) N^{-7} \le N^{-6},
%\]
%where the last inequality holds for sufficiently large values of $N$.
\end{proof}

\Cref{lem:draw_no_long_monocolor} allows us to show that if $A$ and $B$ are sequences containing randomly selected elements, the sequence obtained by simultaneously inserting element $x$ of $B$ into $A$ in the position returned by \noisysearch is likely to have a dislocation that is at most a constant factor larger than the dislocation of $A$. This is formalized as follows.

\begin{lemma}\label{lem:merge_constant_disl_increase}
    Let $A$ be a sequence of $m \ge \max\{64, \sqrt{n}\}$ elements chosen u.a.r.\ from $S$, 
    and let $B$ be a set of at most $m$ elements chosen u.a.r.\ from $S \setminus A$ (both without replacement).
    Let $d$ be an integer such that  $\max\{\disl(A), \log n\}\leq d \le \frac{\sqrt{n}}{2160 \alpha}$ and, for each $x \in B$,
    let $r_x \in \{1, \dots,  m \}$ be an integer that satisfies $|r_x - \rank(x, A)| \le \alpha d$.
    The sequence $\widetilde{S}$ obtained from simultaneously inserting each element $x \in B$ into position $r_x$ of $A$ (breaking ties arbitrarily) has a maximum dislocation of at most $\gamma d$ with probability at least $1-O(n^{-3})$. %, where $\gamma$ is a constant that depends only on $p$.
\end{lemma}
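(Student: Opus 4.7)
The plan is to reduce the claim to the urn-style concentration bound of \Cref{lem:draw_no_long_monocolor}. To set this up, I would first observe that, because $A$ is a u.a.r.\ size-$m$ subset of $S$ and $B$ is a u.a.r.\ size-$|B|$ subset of $S \setminus A$, listing the elements of $A \cup B$ in increasing true order produces a sequence in which the $A$-labels occupy a uniformly random size-$m$ subset of the $N := m + |B|$ positions. Since $|B| \le m$, identifying $A$-elements with white balls places us exactly in the urn setting with $M := m \ge N/2$, and $M \ge 64$ follows from $m \ge 64$.

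I would then choose $\ell$ to be a suitable constant multiple of $d$ plus an additive $O(\log N)$ cushion (concretely something like $\ell = \max\{4\alpha d + 1,\, 8\lceil \log N \rceil\}$) and verify, using the hypotheses $d \ge \log n$, $d \le \sqrt{n}/(2160\alpha)$, $\alpha \ge 2$, $m \ge \sqrt{n}$, and $|B| \le m$, that $8\log N \le \ell \le M/432$. The lemma then guarantees, with probability at least $1 - N^{-6} \ge 1 - m^{-6} \ge 1 - n^{-3}$, that every window of $\ge 54\ell$ consecutive positions of sorted $A \cup B$ contains more than $\ell$ $A$-elements. Contrapositively, a block of $\le \ell$ consecutive $A$-elements occupies fewer than $54\ell$ positions of $A \cup B$, so for any integer $j^*$ and any $\Delta \in \{(\alpha+1)d,\, 2\alpha d\}$, the count $|\{x \in B : |\rank(x,A)-j^*|\le \Delta\}|$ is at most $54\ell - (2\Delta+1) = O(\alpha d)$.

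Next, I would fix any $y \in \widetilde{S}$, set $j^* = \rank(y,A)$, and bound $\disl(y,\widetilde{S})$ by writing out $\pos(y,\widetilde{S})$ and $\rank(y,\widetilde{S})$ explicitly. If $y \in A$ with $j := \pos(y,A)$, then simultaneous insertion gives $\pos(y,\widetilde{S}) = j + |\{x \in B : r_x \le j\}|$ exactly (tied $B$-elements lie on a single side of $y$, so tie-breaking does not matter), while $\rank(y,\widetilde{S}) = j^* + |\{x \in B : x < y\}|$. The discrepancy decomposes as $(j-j^*) + (|\{x : r_x \le j,\, x > y\}| - |\{x : r_x > j,\, x < y\}|)$, and using $|r_x - \rank(x,A)| \le \alpha d$ together with $|j - j^*| \le d$, every $x$ contributing to the latter two counts satisfies $|\rank(x,A) - j^*| \le (\alpha+1)d$. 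Plugging in the window bound from the previous paragraph gives $\disl(y,\widetilde{S}) \le d + (\gamma - 1)d = \gamma d$ for $\gamma = 226\alpha$. The case $y \in B$ proceeds analogously, using $|r_y - j^*| \le \alpha d$ and checking that any $x \in B \setminus \{y\}$ placed on the wrong side of $y$ (under the tie-breaking rule) satisfies $|\rank(x,A) - j^*| \le 2\alpha d$; this yields $\disl(y,\widetilde{S}) \le \alpha d + (\gamma - \alpha)d = \gamma d$.

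The hardest part will be the bookkeeping that relates a ``wrongly-placed'' $B$-element to a confusion window in $\rank(\cdot,A)$, and choosing $\ell$ so that all three constraints of \Cref{lem:draw_no_long_monocolor} hold simultaneously with enough numerical slack to keep the final bound under $\gamma d = 226\alpha d$. The upper bound $d \le \sqrt{n}/(2160\alpha)$ in the hypothesis is precisely what makes $\ell \le M/432$ go through; the additive $O(\log N)$ term in $\ell$ and the tie-breaking in the $y \in B$ case contribute only lower-order corrections easily absorbed by the constants.
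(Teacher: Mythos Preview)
Your proposal is correct and follows essentially the same route as the paper's proof: set up the urn experiment of \Cref{lem:draw_no_long_monocolor} with white balls corresponding to $A$, invoke it with $\ell \approx 4\alpha d$, deduce that any rank-window around $\rank(y,A)$ of radius $(\alpha+1)d$ or $2\alpha d$ contains only $O(\alpha d)$ elements of $B$, and then bound $\disl(y,\widetilde{S})$ for $y\in A$ and $y\in B$ by exactly the two case analyses you describe. The paper phrases the last step slightly differently---it defines $X_y=\{x\in B:|\rank(x,A)-\rank(y,A)|\le 2\alpha d\}$ and bounds $|X_y|$ via explicit boundary elements $z_y^-,z_y^+$ rather than writing $\pos(y,\widetilde{S})$ and $\rank(y,\widetilde{S})$ as sums---but the content is identical.

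Two small bookkeeping remarks. First, the paper takes $\ell = 2\lfloor 2\alpha d\rfloor + 2$ and verifies $8\log N \le \ell$ directly from $\alpha\ge 2$, $d\ge \log n$, and $N\le n$, so your max with $8\lceil\log N\rceil$ is unnecessary; dropping it avoids the slack issue you flag at the end. Second, with $\Delta = 2\alpha d$ the containing window has $2\Delta+2$ (not $2\Delta+1$) $A$-elements, so you need $\ell \ge 4\alpha d + 2$ rather than $4\alpha d+1$; this is exactly the paper's choice and is what makes the final arithmetic $\alpha d + 216\alpha d + 108 \le 226\alpha d$ close under the side condition $\alpha d \ge 12$.
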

\begin{proof}
Let $S^*$ be the correctly sorted sequence of the elements in $S$
and observe that, since the elements in $A$ and $B$ are selected u.a.r.\ from $S$, we can relate their distribution in $S^*$ with the distribution of the drawn balls in the urn experiment of Lemma~\ref{lem:draw_no_long_monocolor}: 
the urn contains $N=|A|+|B|$ balls each corresponding to an element in $A \cup B$, a ball is white if it corresponds to one of the $M=m$ elements of $A$, and black if it corresponds to one of the at most $m$ elements of $B$.	

    Then, by Lemma~\ref{lem:draw_no_long_monocolor} with $\ell= 2 \lfloor 2 \alpha d \rfloor +2$, we have that all contiguous subsequences of $S^*$ with least $54 \ell$ elements contain more than $\ell$ elements from $A$ with probability at least $1- m^{-6} \ge 1 - n^{-3}$.
    Hence, with the aforementioned probability, all contiguous subsequences of $S^*$ containing up to $\ell$ elements from $A$ have a length of at most $54 \ell \le 216 \alpha d + 108$. In the rest of the proof we assume that this is the case, and we argue that this implies that the dislocation of $\widetilde{S}$ is at most $226 \alpha d = \gamma d$.

    For an element $y \in A \cup B$, let $X_y$ denote the set of all elements $x \in B$ such that $|\rank(x, A) - \rank(y, A)| \le 2\alpha d$.
    We separately study the dislocation $\disl(y, \widetilde{S})$ for $y \in A$ and for $y \in B$.

    We consider the case $y \in A$ first.
    Observe that inserting element $x \in B$ can affect the dislocation of $y$ only if one of the following two (mutually exclusive) conditions holds: (i) $x < y$ and $r_x \ge \pos(y, A)$, or (ii) $x > y$ and $r_x \le \pos(y, A)$. Indeed, all other elements in $B$ are guaranteed to be in the correct relative order w.r.t.\ $y$ in $\widetilde{S}$.
    
    \noindent If (i) holds, we have:
    \[
        \rank(y, A) \ge \rank(x, A) \ge r_x - \alpha d \ge \pos(y, A) - \alpha d \ge \rank(y, A) - (\alpha +1) d.
    \]
    If (ii) holds, we have:
    \[
        \rank(y, A) \le \rank(x, A) \le r_x + \alpha d \le \pos(y, A) + \alpha d \le \rank(y, A) + (\alpha +1) d. 
    \]

    Hence, all the elements $x$ that can affect the dislocation of $y$ following their insertion are contained in $X_y$,
% = \{ x \in B \mid \rank(y, A) - (\alpha+1)d \le \rank(x, A) \le \rank(y, A) + (\alpha+1)d \}$
    and we have $\disl(y, \widetilde{S}) \le d + |X_y|$.

    We now consider the case $y \in B$.
    Observe that if were to insert only element $y$ in position $r_y$ of sequence $A$, the dislocation of $y$ in the resulting sequence would be upper bounded by $\alpha d$. By simultaneously inserting all elements in $B$ into $A$ (with arbitrary tie-breaking), the above upper bound on the dislocation of $y$ can only be affected by the elements in $x \in B \setminus \{y\}$ that satisfy one of the following two (mutually exclusive) conditions: (i) $x < y$ and $r_x \ge r_y$, or (ii) $x > y$ and $r_x \le r_y$.

    \noindent If (i) holds, we have:
    \[
        \rank(y, A) \ge \rank(x, A) \ge  r_x - \alpha d \ge r_y - \alpha d \ge \rank(y, A) - 2\alpha d.
    \]
    If (ii) holds, we have:
    \[
        \rank(y, A) \le \rank(x, A) \le  r_x + \alpha d \le r_y + \alpha d \le \rank(y, A) + 2\alpha d.
    \]

    Hence, we have $\disl(y, \widetilde{S}) \le \alpha d + |X_y|$. Then, regardless of whether $y \in A$ or $y \in B$,  $\disl(y, \widetilde{S}) \le \alpha d + |X_y|$ and we can focus on upper bounding $|X_y|$.

    Let $z_y^-$ be the element of $A$ with $\rank(z_y^-, A) = \rank(y,A) - \lfloor 2\alpha d \rfloor -1$; if no such element exists, then let $z_y^-$ be the first element of $S^*$.
    The above choice of $z_y^-$ ensures that $z_y^- \le x$ for all $x \in X_y$. This is trivial if $z_y^-$ is the first element of $S^*$ while, in the complementary case, it follows from $\rank(z_y^-, A) < \rank(y,A) - 2\alpha d  \le \rank(x,A)$, where we the last inequality is due to the definition of $X_y$.
    Symmetrically, let $z_y^+$ be the element of $A$ with $\rank(z_y^+, A) = \rank(y,A) + \lfloor 2\alpha d \rfloor$ or, if no such element exists, let $z_y^+$ be the last element of $S^*$. We have $x \le z_y^+$ for all $x \in X_y$. Indeed, this is trivial when $z_y^+$ is last element of $S^*$, and it follows from $z^+_y \in A$ and $\rank(x, A) \le \rank(y,A) + \lfloor 2\alpha d \rfloor = \rank(z^+_y, A)$ in the complementary case.

    Then, since all $x \in X_y$ satisfy $z_y^- \le x \le z_y^+$, we have that all elements in $X_y$ must appear in the contiguous subsequence $\overline{S}$ of $S^*$ having $z_y^-$ and $z_y^+$ as its endpoints. Such a sequence $\overline{S}$ contains at most  $2 \lfloor 2 \alpha d \rfloor +2$ elements of $A$, hence our assumption guarantees that $\overline{S}$ contains at most $216 \alpha d + 108$ elements.
    We conclude that the dislocation of $y$ in $\widetilde{S}$ is at most $\alpha d + |X_y| \le \alpha d + |\overline{S}| \le 217 \alpha d + 108 \le 226 \alpha d = \gamma d$, where we used $\alpha d \ge 12$ since $\alpha \ge 2$ and $d \ge \log n \ge \log m \ge 6$. 
\end{proof}

We can now apply \Cref{lem:merge_constant_disl_increase}, \Cref{thm:noisy_binary_search}, and \Cref{thm:basketsort} together to analyze the maximum dislocation and the total dislocation of the sequence returned by \rifflesort.

\begin{lemma}\label{lem:rifflesort_dislocation}
    Let $n$ be larger than some sufficiently large constant (depending on $p$ and $q$).
With probability at least $1- O\left( \frac{\log n}{n \sqrt{n}} \right)$, the maximum dislocation and the total dislocation of the sequence returned by \rifflesort are at most $\bmax\cdot \log n$ and at most $\btot\cdot n$, respectively.% , where $\bmax=\frac{136}{f(p)}$, $\btot=o(1)+\frac{4793}{32}\cdot f(p)^2+\frac{8704}{f(p)^2}+\frac{8}{f(p)}$, $f(p)=\frac{\left(8p^2-(2\alpha_p+8)p+\alpha_p\right)^2}{72}$ and $\alpha_p=\frac{1+6p-8p^2}{2(1-2p)}$. 
\end{lemma}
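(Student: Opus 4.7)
The plan is to prove by induction on $i \in \{0, 1, \dots, k\}$ that the intermediate sequence $S_i$ computed by \rifflesort satisfies $\disl(S_i) \le \bmax \cdot \ln n$ with high probability, and then to obtain the total dislocation bound for $S_k$ from the last invocation of \basketsort. The per-iteration failure probability will be $O(n^{-3/2})$, and a union bound over the $k = O(\log n)$ iterations will yield the claimed overall failure probability.

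\textbf{Base case.} The call $S_0 \gets \basketsort(\widetilde{S}_0, |\widetilde{S}_0|)$ operates on $m = |T_0| = \lceil\sqrt{n}\rceil$ elements with the trivially valid bound $|\widetilde{S}_0|$ on $\disl(\widetilde{S}_0)$. By \Cref{thm:basketsort}, $\disl(S_0) \le \bmax \ln|T_0| \le \bmax \ln n$ with probability at least $1 - O(|T_0|^{-3}) = 1 - O(n^{-3/2})$.

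\textbf{Inductive step.} Assume $\disl(S_{i-1}) \le \bmax \ln n$. I first bound the failure probability of the $|T_i|$ invocations of $\noisysearch(S_{i-1}, x, \bmax \ln n)$ in \Cref{ln:rf_noisysearch}. Note that the order of $S_{i-1}$ depends only on comparisons among $T_0 \cup \dots \cup T_{i-1}$ and on the random shuffle/partition, hence it is independent of the comparisons involving $x \in T_i$, so \Cref{thm:noisy_binary_search} applies. By a union bound, all calls return indices $r_x$ satisfying $|r_x - \rank(x, S_{i-1})| \le \alpha \bmax \ln n$ with probability at least $1 - O(|T_i| \cdot |S_{i-1}|^{-6}) = O(|S_{i-1}|^{-5}) = O(n^{-5/2})$, using $|T_i| \le |S_{i-1}|$ (which holds for all $i \in \{1,\dots,k\}$ by the choice of $k$) and $|S_{i-1}| \ge \lceil\sqrt{n}\rceil$. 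Conditioning on this event, I then apply \Cref{lem:merge_constant_disl_increase} with $A = S_{i-1}$, $B = T_i$, and $d = \bmax \ln n$. The hypotheses are checked as follows: the elements of $A$ and $B$ are uniformly random subsets of $S$ thanks to the random shuffle in \Cref{ln:rs_shuffle} and the random partition; $|A| \ge \lceil\sqrt{n}\rceil \ge \max\{64,\sqrt{n}\}$ for $n$ large enough; $|B| \le |A|$; and $\log n \le d \le \sqrt{n}/(2160\alpha)$ for $n$ large enough, using $\bmax \ge 2$. This yields $\disl(\widetilde{S}_i) \le \gamma \bmax \ln n$ with probability at least $1 - O(n^{-3})$. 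Finally, $\basketsort(\widetilde{S}_i, \gamma \bmax \ln n)$ in \Cref{ln:rf_basketsort_Si} produces $S_i$ with $\disl(S_i) \le \bmax \ln|S_i| \le \bmax \ln n$ with probability at least $1 - O(|S_i|^{-3}) = 1 - O(n^{-3/2})$.

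\textbf{Final bounds and union bound.} For $i = k$, \Cref{thm:basketsort} additionally certifies that the total dislocation of $S_k$ is at most $\btot \cdot |S_k| = \btot \cdot n$ within the same failure event used above. Summing the per-iteration failure probabilities across the base case and the $k < 1 + \frac{1}{2}\log n$ iterations of the for loop, the total failure probability is $O(\log n \cdot n^{-3/2}) = O\bparen{\log n / (n\sqrt{n})}$, as required.

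The main place where care is needed is in verifying the independence structure that underlies the application of \Cref{thm:noisy_binary_search} and \Cref{lem:merge_constant_disl_increase}: the random shuffle of $S$ and the random partition into $T_0, \dots, T_k$ must ensure both that the membership of $S_{i-1}$ is a uniform random subset of $S$ and that the order of $S_{i-1}$ does not depend on comparisons involving $T_i$. Both properties follow because $S_{i-1}$ is assembled using only comparisons among elements in $T_0 \cup \dots \cup T_{i-1}$, and ties during simultaneous insertion are broken by the a priori random ordering $S'$; the remaining arithmetic (namely that $d = \bmax \ln n$ lies in $[\log n,\, \sqrt{n}/(2160\alpha)]$ for $n$ sufficiently large) is routine.
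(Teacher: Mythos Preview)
Your proof is correct and follows essentially the same approach as the paper: induction on $i$ showing $\disl(S_i) \le \bmax \ln n$ via \Cref{thm:noisy_binary_search}, \Cref{lem:merge_constant_disl_increase}, and \Cref{thm:basketsort} in turn, then a union bound over the $O(\log n)$ iterations. The only cosmetic differences are that the paper tracks the total dislocation bound at every iteration (rather than only at $i=k$) and uses the slightly coarser estimate $O(n^{-2})$ for the \noisysearch failure probability, neither of which affects the argument.
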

\begin{proof}

We assume that $\frac{\sqrt{n}}{\ln n} \ge 2160 \alpha \bmax$, which is satsified by all values of $n$ that are larger than some constant depending on $\alpha$ and $\bmax$, both of which, in turn, depend only on $p$ and $q$.

%Regarding the maximum dislocation, we analyze $S_i$ at the end of the $i$-th iteration for $i=1,\ldots, k$, while for the total dislocation we only analyze the sequence $S_k$ returned by \rifflesort. 
	
For $i = 1, \dots, k$, we say that the $i$-th iteration of \Cref{alg:rifflesort} is \emph{good} if the sequence $S_i$ computed at the end of the $i$-th iteration has a maximum dislocation of at most $\bmax\cdot \ln n$ and a total dislocation of at most $\btot \cdot n$.
    As a corner case, we say that iteration $0$ is good if the above conditions are satisfied by the sequence $S_0$ computed in line~\ref{ln:rf_basketsort_S0} of \Cref{alg:rifflesort}.

    Since $S_0$ is the sequence returned by invoking $\basketsort$ on $\widetilde{S}_0$ using the trivial upper bound of $|\widetilde{S}_0|$ on $\disl(\widetilde{S}_0)$,  \Cref{thm:basketsort} ensures that $S_0$ is good with probability at least $1-O(n^{-1.5})$.

    We now focus on a generic iteration $i\ge 1$ and show that, assuming that iteration $i-1$ is good, iteration $i$ is also good with probability at least $1-O(n^{-1.5})$.
    Since iteration $i-1$ is good, we have $\disl(S_{i-1})\leq \bmax \cdot \ln n$. Moreover, for each element $x \in T_i$, we have that the comparison errors between $x$ and the elements in $S_{i-1}$ do not depend on the specific elements in $S_{i-1}$ nor on their order. Then, \Cref{thm:noisy_binary_search} together with the union bound ensure that all the approximate ranks $r_x$, for $x \in T_i$, computed by the calls to $\noisysearch(S_{i-1}, x, \bmax \cdot \ln n)$ (see line~\ref{ln:rf_noisysearch}) satisfy $|r_x - \rank(x, S_{i-1})| \le  \alpha \bmax \cdot \ln n$ with probability at least $1 - O( |T_i| \cdot |S_{i-1}|^{-6} ) = 1 - O( n^{-2})$. Whenever this happens, \Cref{lem:merge_constant_disl_increase} with $d = \bmax \cdot \ln n$ shows that the sequence $\widetilde{S}_i$ computed in line~\ref{ln:rf_simultaneous_insert} has a maximum dislocation of at most $\gamma \cdot \bmax \cdot \ln n$ with probability at least $1-O(n^{-3})$.
    If such an upper bound on the maximum dislocation of $\widetilde{S}_i$ does indeed hold, \Cref{thm:basketsort} further implies that the call to $\basketsort(\widetilde{S}_i, \gamma \cdot  \bmax \cdot \ln n)$ in the $i$-th iteration (see line~\ref{ln:rf_basketsort_Si}) returns a sequence $S_i$ having a maximum (resp.\ total) dislocation of at most $\bmax\cdot \ln |\widetilde{S}_i| \le \bmax \cdot \ln n$ (resp.\ $\btot |\widetilde{S}_i| \le \btot n$) with probability at least $1-O\Paren{\frac{1}{|\widetilde{S}_i|^{-3}}} =  1 - O\left(n^{-1.5}\right)$, i.e., that iteration $i$ is good.
    By using the union bound, we have that all the required conditions for iteration $i$ to be good occur with probability at least $1 - O(n^{-2}) - O(n^{-3}) - O(n^{-1.5}) = 1 - O(n^{-1.5})$.

    %is a sequence having maximum dislocation at most $\gamma\cdot \bmax\cdot \ln n$ with probability at least $1-\frac{1}{|T_{i-1}|^5} \ge 1- \frac{1}{n^2}$.
%By \Cref{thm:basketsort}, the call to \basketsort in the $i$-th iteration returns a sequence whose maximum dislocation is at most $\bmax\cdot \ln n$ with probability at least $1-O\Paren{\frac{\log^2n}{n^2}}$.

    Since there are $k = O(\log n)$ iterations, yet another application of the union bound, implies that all iterations are good with a probability of at least $1 - O\left( \frac{\log n}{n \sqrt{n}} \right)$.
    The claim follows from the fact that the $k$-th iteration is good with the above probability. 
%    the probability that $\disl(S_k)\leq \bmax\cdot\ln n$ holds is at least $1-O(\log n)\cdot O\Paren{\frac{\log^2n}{n^2}}=1- O\Paren{\frac{\log^3n}{n^2}}$.
%Again, by \Cref{thm:basketsort}, the total dislocation of the sequence $S_k$ returned by \rifflesort in the last iteration is at most $\btot\cdot n$ with probability at least $1-O\Paren{\frac{\log n}{n^2}}$.
%Since $1-\frac{O(\log^3 n)}{n^2}-O\Paren{\frac{\log n}{n^2}}\leq 1-\frac{1}{n\sqrt{n}}$, the statement follows.
\end{proof}

To conclude, \Cref{lem:rifflesort_runtime} and \Cref{lem:rifflesort_dislocation} directly lead to the main result of this section.

\begin{theorem}\label{thm:rifflesort}
    Consider a set of $n$ elements subject to random persistent comparison faults with probability at least $q$ and most $p$, for some constants $p,q$ with $0 \le q \le p < \frac{9q+1}{8q+11}$. Let $S$ be a sequence of these elements (possibly chosen as a function of the errors).
  \rifflesort is a randomized algorithm that approximately sorts $S$ in $O(n \log n)$  worst-case time. The maximum dislocation and the total dislocation of the returned sequence are at most $O(\log n)$ %$\bmax\cdot \ln n$
  and at most $O(n)$, %$\btot\cdot n$,
    respectively, with probability at least $1 - O\left( \frac{\log n}{n \sqrt{n}} \right)$.% high probability. %, with high probability. %probability at least $1-\frac{1}{n}$. %, provided that $p<\frac{5-\sqrt{17}}{8}$.
%, where $\bmax=\frac{136}{f(p)}$, $\btot=o(1)+\frac{4793}{32}\cdot f(p)^2+\frac{8704}{f(p)^2}+\frac{8}{f(p)}$, $f(p)=\frac{\left(8p^2-(2\alpha_p+8)p+\alpha_p\right)^2}{72}$ and $\alpha_p=\frac{1+6p-8p^2}{2(1-2p)}$. 
\end{theorem}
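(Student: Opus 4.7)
The proof of Theorem~\ref{thm:rifflesort} will be essentially an immediate combination of Lemma~\ref{lem:rifflesort_runtime} and Lemma~\ref{lem:rifflesort_dislocation}, which have already done the substantive work. The plan is to invoke these two lemmas to obtain, respectively, the $O(n\log n)$ worst-case running time and the simultaneous $O(\log n)$ maximum dislocation and $O(n)$ total dislocation bounds that hold with the claimed probability $1 - O\!\left(\frac{\log n}{n\sqrt{n}}\right)$.

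I would first observe that the hypothesis $0 \le q \le p < \frac{9q+1}{8q+11}$ matches exactly the range required by \Cref{thm:basketsort}, so every invocation of \basketsort inside \Cref{alg:rifflesort} (both the preliminary call on $\widetilde{S}_0$ and the re-sorting calls on each $\widetilde{S}_i$) is legitimate. Likewise, \noisysearch (\Cref{thm:noisy_binary_search}) only needs $p < \frac{1}{2}$, which is automatic in our regime. Both black-box subroutines are therefore applicable with the constants $\alpha$, $\bmax$, $\btot$ used in the description and analysis of \rifflesort.

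I would then handle the small-$n$ regime as follows. \Cref{lem:rifflesort_dislocation} requires $n$ to be at least some constant (depending on $p,q$) so that $\frac{\sqrt{n}}{\ln n} \ge 2160\alpha\bmax$, as needed to apply \Cref{lem:merge_constant_disl_increase} inside the analysis. For $n$ below this threshold, maximum and total dislocation are trivially $O(1)=O(\log n)$ and $O(1)=O(n)$, and any permutation suffices; the asymptotic claim therefore holds vacuously for those values of $n$.

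For $n$ above the threshold, \Cref{lem:rifflesort_dislocation} yields maximum dislocation at most $\bmax\cdot\log n = O(\log n)$ and total dislocation at most $\btot\cdot n = O(n)$ with probability at least $1 - O\!\left(\frac{\log n}{n\sqrt{n}}\right)$, while \Cref{lem:rifflesort_runtime} gives the $O(n\log n)$ worst-case time bound unconditionally. Since there is no real obstacle beyond bookkeeping, I anticipate the proof to be only a couple of lines: state the running-time bound from \Cref{lem:rifflesort_runtime}, state the dislocation bounds from \Cref{lem:rifflesort_dislocation}, and absorb the constants $\bmax$ and $\btot$ into the $O(\cdot)$ notation.
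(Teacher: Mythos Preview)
Your proposal is correct and matches the paper's own proof, which simply states that \Cref{lem:rifflesort_runtime} and \Cref{lem:rifflesort_dislocation} directly yield the theorem. Your additional remarks about verifying the hypotheses of the subroutines and handling the small-$n$ regime are sound but more detailed than what the paper actually writes.
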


\section{Noisy Binary Search}\label{sec:noisy-search}

In this section we tackle the problem of efficiently compute an approximate rank of an ``external'' element with respect to an approximately sorted sequence in presence of persistent random comparison faults.
More precisely, given a sequence $S = \langle s_1, \dots, s_m \rangle$ of $m$ elements, an integer $d$ that satisfies $d \geq \max\{\disl(S),\,\ln m\}$, and an element $x\not\in S$, we want to compute, in time $O(\log m)$, a value $\tau_x$ such that $\abs{\tau_x-\rank(x,\,S\cup\{x\})}=O(d)$ with high probability.

Recall from \Cref{sec:preliminaries} that only comparisons between $x$ and the elements in $S$ are allowed, that errors happen independently with probability at most $p$, and that the order of the elements in $S$ does not depend on the comparison errors.

%whether the comparison $x$ between $x$ and an element $y \in S$ is correct or erroneous does not depend on the position of $y$ in $S$, nor on the actual permutation of the sorted elements induced by their order in $S$ (i.e., we are not allowed to pick the order of the elements in $S$ as a function of the errors). We do not impose any restriction on the errors for comparisons that do not involve $x$.

\subsection{High-level ideas}

As mentioned in the introduction, the seminal paper of Feige et al.~\cite{FeigeRPU94} presented a noisy binary search tree for \emph{non-persistent} random comparison faults that allows to compute $\rank(x, S \cup \{x\})$ w.h.p.\ in time $O(\log n)$. 
However, this approach does not readily generalize to our model due to two main difficulties:
\begin{itemize}
\item Feige et al.'s algorithm heavily relies on the fact that comparing the same pair of elements multiple times yields independent outcomes.
\item The input sequence $S$ of \cite{FeigeRPU94} is perfectly sorted while, in our case, $S$ is only \emph{approximately sorted}. This is motivated by the fact that, when comparison faults are persistent, it is impossible to compute a correctly sorted sequence, as we discuss in Section~\ref{sec:lower_bound}.
\end{itemize}

  To circumvent the first difficulty, we ensure that $x$ is never compared twice with any element $y$ in $S$. Rather, any further comparison between $x$ and $y$ is replaced by a comparison (or multiple comparisons) between $x$ and some nearby proxy element(s) for $y$.

%For example, if $x$ is compared with the elements of ranks $i, \ldots, i+k$ and the most outcomes report $x$ as smaller, then the rank of $x$ is likely to be smaller than $i+k+1$. 

  To tackle the second difficulty, we use the upper bound $d$ on the maximum dislocation which implies that an element in a generic position $i$ is larger than any element in a position preceding $i-d$ and smaller than any element in a position following $i+d$. 
Then, if we group the elements into blocks of size at least $d$, all elements in $i$-th group will be guaranteed to be larger (resp.\ smaller) than all the elements in the $(i-2)$-th (resp.\ $(i+2)$-th) group. 
We can then perform two parallel noisy binary searches: one involving only even-numbered groups and one involving one odd-numbered groups. %and odd groups

The next subsection describes the details of our construction. % of our noisy search tree in detail.

\subsection{Construction of Noisy Binary Search Trees}\label{sub:tree_construction}

Let $k$ be the smallest odd integer that satisfies $k \ge 32  \frac{1-p}{(1-2p)^2}$, and let $c = 250k$.
For the sake of simplicity, we assume that $m = 2cd \cdot 2^h$ where $h$ is a non-negative integer. At the end of this section, we argue that this assumption can be removed.

%Moreover, let $\rho = 240 \lceil \ln m \rceil$ and $\eta = 2 + 6 \lceil \log_{49} m \rceil$.

%% Initial assumption
%For the sake of simplicity, we let $c = 10^3$ and we assume that $m = 2 c d \cdot 2^h$ where $h$ is a non-negative integer.% referring to the tree height.\footnote{TODO explain how to pad}
%Moreover, we focus on $p \le \frac{1}{32}$ even though this restriction can be easily removed to handle all $p < \frac{1}{2}$, as we will explain at the end of the analysis.
%\CH{The assumption of $1/32$ comes from WindowSort. To make a feeling of new, we should replace this constant with another one, e.g., $1/3$ or $1/4$. Also, one digit is better. }

We start by partitioning the sequence $S$ of $m$ elements into $2\cdot 2^h$ ordered \emph{groups} $g_0, g_1, \ldots$ where group $g_i$ contains the elements in positions  $i\cdot cd+1$, \dots, $ (i+1) cd$.
Then, we further partition these $2 \cdot 2^h$ groups into two ordered sets $G_0$ and $G_1$, where $G_0$ contains the groups $g_i$ with even $i$, and $G_1$ the groups $g_i$ with odd $i$. Notice that $|G_0| =|G_1|= 2^h$. 

For each $j\in \{0,1\}$, we define a \emph{noisy binary search tree} $T_j$ on $G_j$. Let $\eta = 2 + 4 \lceil \log_{7} m \rceil$. The tree $T_j$ is a binary tree of height $h+\eta$ in which the first $h+1$ levels (i.e., those containing vertices at depths $0$ to $h$) are complete and the last $\eta$ levels consist of $2^h$ paths of $\eta$ vertices, each emanating from a distinct vertex on the $(h+1)$-th level.

We index the leaves of $T_j$ from $0$ to $2^h-1$, we use $h(v)$ to denote the depth of vertex $v$ in $T_j$,
and we refer to the vertices $v$ at depth $h(v) \ge h$ as  \emph{path-vertices}.
We also associate each vertex $v$ of $T_j$ with one \emph{interval} $I(v)$, i.e., a set of contiguous positions, as follows: for a leaf $v$ with index $i$, $I(v)$ consists of the positions in $g_{2i+j}$, for a non-leaf path-vertex $v$ having $u$ as its only child, we set $I(v)=I(u)$, and for an internal vertex $v$ having $u$ and $w$ as its left and right children, respectively, we define $I(v)$ as the interval containing all the positions between $\min I(u)$ and $\max I(w)$ (inclusive).

\begin{figure}[t]
	\centering
	\includegraphics[width=.9\textwidth]{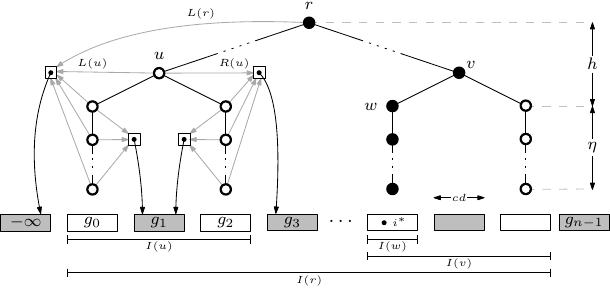}
	\caption{An example of the noisy tree $T_0$. On the left side, the shared pointers $L(\cdot)$ and $R(\cdot)$ are shown. Notice how $L(r)$ (and, in general, all the $L(\cdot)$ pointers on the leftmost side of the tree) points to the special $-\infty$ element.  Good vertices are shown in black while bad vertices are white. Notice that, since $i^* \in I(w)$, we have $T^* = T_0$ and hence all the depicted vertices are either good or bad.}
	\label{fig:noisy_tree}
\end{figure}

Moreover, each vertex $v$ of $T_j$ has a reference to two \emph{shared pointers} $L(v)$ and $R(v)$ to positions in $\{1, \dots, n\} \setminus \bigcup_{g_i \in S_j} g_i$. Intuitively, $L(v)$ (resp.\ $R(v)$) will always point to the positions of $S$ occupied by elements that are \emph{smaller} (resp.\ \emph{larger}) than all the elements $s_i$ with $i \in I(v)$.
For each leaf $v$,  $L(v)$ initially points to $\min I(v) - d - 1$ while $R(v)$ initially points to $\max I(v) + d$.
A non-leaf path-vertex $v$ shares both its pointers with the corresponding pointers of its only child, while a non-path vertex $v$ shares its left pointer $L(v)$ with the left pointer of its left child, and its right pointer $R(v)$ with the right pointer of its right child. See Figure~\ref{fig:noisy_tree} for an example.

We sometimes allow $L(v)$ to point to positions smaller than $1$ and $R(v)$ to point to positions larger than $m$. 
We consider all the elements $s_i$ with $i \le 0$ (resp.\ $i > m$) to be copies of a special $-\infty$ (resp.\ $+\infty$) element such that $-\infty < x$ and $-\infty \prec x$ in every observed comparison (resp.\ $+\infty > x$ and $+\infty \succ x$).

\subsection{Random Walks on Noisy Binary Search Trees}\label{sub:noisy_search_on_tree}
To compute an approximate rank $\tau_x$ for the query element $x$, our algorithm will perform a discrete-time random walk on each of $T_0$ and $T_1$. These random walks rely on a \emph{test} operation defined as follows:

\begin{definition}[test operation]
A \emph{test} of element $x$ with a vertex $u$ is performed by (i) comparing $x$ with the elements $s_{L(u)}, s_{L(u)-1}, \dots, s_{L(u)-k+1}$, (ii) comparing $x$ with the elements $s_{R(u)}, s_{R(u) +1}, s_{R(u)+k-1}$, (iii) decrementing $L(u)$ by $k$, and finally (iv) incrementing $R(u)$ by $k$. Vertex $u$ \emph{passes} the test if the majority result among the comparisons performed in step (i) reports $x$ as the larger element, and the majority result among the comparisons performed in step (ii) reports $x$ as the smaller element. Otherwise, $u$ \emph{fails} the test.
\end{definition}

For a fixed $j\in\{0, 1\}$, the random walk on $T_j$ proceeds as follows. 
At time $0$, i.e., before the first step, the \emph{current} vertex $v$ coincides with the root $r$ of $T_j$.
Then, at each time step, $x$ is tested with all the children of the current vertex $v$, and we \emph{walk} from $v$ to the next vertex (possibly $v$ itself) according to the following rules:
\begin{itemize}
	\item If \emph{exactly one} child $u$ of $v$ \emph{passes} the test, we walk from $v$ to $u$.
	\item If \emph{all} the children of $v$ \emph{fail} the test, we walk from $v$ to its parent, if $v$'s parent exists (if the parent of $v$ does not exist, then $v=r$, and we walk to $r$ itself);
	\item If \emph{more than one} child of $v$ \emph{passes} the test, we walk from $v$ to itself.
\end{itemize}

Fix an upper bound  $\rho = 120 \lceil \ln m \rceil$ on the total number of steps. 
The walk stops as soon as one of the following two conditions is met: 
\begin{description}
	\item[Success:] The current vertex $v$ is a leaf of $T_j$, in which case we say that the random walk returns $v$;
	\item[Timeout:] The $\rho$-th time step is completed and the success condition is not met.
\end{description}

\subsection{Analysis}

We now analyze the behavior of our random walks and establish the main result of this section.
The key ingredient of the analysis is to show that one of the two random walks, i.e., the one whose noisy binary search tree contains the leaf that corresponds to the true rank of $x$ in $S$, is likely to succeed, and that any walk that succeeds returns either such a leaf, or a nearby one.

Let $i^* = \rank(x, S \cup \{x\}) = \rank(x, S)$, let $T^*$ be the unique tree in $\{ 
T_0, T_1 \}$ such that $i^*$ belongs to the interval of a leaf in $T^*$, and let $T'$ be the other tree.

\begin{definition}[good/bad vertex]
A vertex  $v$ of $T^*$ is \emph{good} if $i^* \in I(v)$, it is \emph{bad} if either $i^* < \min I(v) - cd$ or $i^* > \max I(v) + cd$, and it is \emph{neutral} if it is neither good nor bad.
\end{definition}

Notice that all the vertices in $T^*$ are either good or bad, that the intervals corresponding to the vertices in $T^*$ at the same depth are pairwise disjoint, and that the set of good vertices in $T^*$ is exactly a root-to-leaf path in $T^*$.
Furthermore, all path-vertices of $T'$ are either bad or neutral.
Moreover, in both $T'$ and $T^*$, all the children of a bad vertex are also bad.
The following two lemmas provide bounds on the probability that a good/bad vertex passes a test. % on a good vertex and a bad vertex, respectively. 

\begin{lemma}
	\label{lemma:test_good}
	Consider a test performed on a good vertex $u$. The probability that $u$ fails the test is at most $\frac{2}{50}$.
\end{lemma}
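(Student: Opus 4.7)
The plan is to show that on a good vertex the test effectively queries $k$ definitely-smaller and $k$ definitely-larger elements, after which a standard concentration bound finishes the argument.

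First, I would establish the invariant that whenever $u$ is tested, $L(u) \le \min I(u) - d - 1$ and $R(u) \ge \max I(u) + d$. Following the pointer-sharing rules, $L(u)$ is physically identical to the $L$-pointer of the leftmost leaf $v_L$ of the subtree rooted at $u$ (descend via left-children at non-path vertices and via the only child at non-leaf path-vertices), and analogously $R(u)$ coincides with the $R$-pointer of the rightmost leaf $v_R$. Since $\min I(v_L) = \min I(u)$ and $\max I(v_R) = \max I(u)$, these pointers are initialized at exactly $\min I(u) - d - 1$ and $\max I(u) + d$; steps (iii) and (iv) of every test only move $L$ leftward and $R$ rightward, so the two inequalities persist.

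Next, using that $u$ is good, i.e.\ $\min I(u) \le i^* \le \max I(u)$, every position $q$ queried in part (i) satisfies $q \le L(u) \le i^* - d - 1$, so by $\disl(S)\le d$ the true rank of $s_q$ in $S$ is at most $q + d \le i^* - 1$; combined with $x \notin S$ and the identity $i^* = \rank(x,S)$, this forces $s_q < x$. Positions outside $[1,m]$ correspond to the symbolic $-\infty$ elements, which are smaller than $x$ by definition. Symmetrically, every element queried in part (ii) is strictly larger than $x$. The $2k$ queried positions are pairwise distinct, since the left ones are at most $L(u)$ while the right ones are at least $R(u) > L(u)$; hence by the error model the corresponding $2k$ error indicators are mutually independent Bernoulli random variables, each with success probability at most $p$ (and equal to $0$ on $\pm\infty$ queries, which only helps).

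Finally, let $X$ count the errors among the $k$ comparisons in part (i); the majority there is wrong iff $X \ge (k+1)/2$. Hoeffding's inequality gives
\[
\Pr\!\bigl[X \ge (k+1)/2\bigr] \;\le\; \exp\!\left(-\frac{2\bigl((k+1)/2-pk\bigr)^2}{k}\right) \;\le\; \exp\!\left(-\frac{k(1-2p)^2}{2}\right),
\]
and the choice $k \ge 32(1-p)/(1-2p)^2$ makes the exponent at least $16(1-p) \ge 8$, so the bound is at most $e^{-8} < \tfrac{1}{50}$. The same argument applies to part (ii), and a union bound over the two ways the test can fail yields the claimed $\tfrac{2}{50}$. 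I expect the delicate point to be the shared-pointer invariant of the first step, which is where the tree construction is doing its real work; once that is in place, the rest is essentially a textbook Hoeffding calculation calibrated to the chosen value of $k$.
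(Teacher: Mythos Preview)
Your proposal is correct and follows essentially the same approach as the paper: establish that the left (resp.\ right) pointer always sits to the left of $\min I(u)-d$ (resp.\ right of $\max I(u)+d$), deduce that all $k$ left-queried elements are truly smaller than $x$ and all $k$ right-queried elements truly larger, and then apply a concentration bound plus a union bound over the two majorities. The only cosmetic differences are that you spell out the shared-pointer invariant via the leftmost/rightmost leaf (the paper just notes the pointer is monotone), and you use Hoeffding to get $e^{-8}$ where the paper uses a multiplicative Chernoff bound to get $e^{-4}$; both are below $\tfrac{1}{50}$.
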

\begin{proof}
For $i=0, \dots, k-1$, let $X_i$ (resp.\ $Y_i$) be an indicator random variable that is $1$ iff the comparison between $x$ and $s_{L(u)-i}$ reports $x$ as the larger element (resp.\ the comparison between $x$ and $s_{R(u)+i}$ reports $x$ as the smaller element). Let $X = \sum_{i=0}^{k-1} X_i$ and $Y = \sum_{i=0}^{k-1} Y_i$ and notice that for a test to fail at least one of $X > \frac{k}{2}$ and $Y > \frac{k}{2}$ must hold.

Consider the comparison between $x$ and $s_{L(u)-i}$ for some $i=0, \dots, k-1$. Since $i^* \in I(u)$ and the pointer $L(u)$ only gets decremented, we  have $L(u) - i \le L(u) \le \min I(u) - d - 1 \le i^* -d - 1$. This, combined with the fact that $S$ has dislocation at most $d$, implies  $\rank(s_{L(u)-i}, S) \leq L(u) - i + d \le  i^* - 1$ and hence $s_{L(v)-i} < x$.

We conclude that the probability to observe $S_{L(v) - i} \prec x$  is at least $1-p$, and $X$ stochastically dominates a binomial random variable $X'$ of parameters $k$ and $1-p$. Thus, we can use a Chernoff bound to write:
\begin{align*}
    \Pr\left( X \le  \frac{k}{2} \right)
    &\le \Pr\left( X' \le \left(1- \frac{1-2p}{2-2p} \right) \cdot \E[X'] \right)
    \le \exp\left( - \frac{1}{2} \cdot \left( \frac{1-2p}{2-2p}\right)^2 \cdot k(1-p)  \right) \\
    & \le e^{-4} < \frac{1}{50}.
\end{align*}
A similar argument shows that $\Pr(Y \le \frac{k}{2}) < \frac{1}{50}$ once we observe that $R(u) + i \ge \max I(u) + d \ge i^* + d$, which implies $\rank(s_{R(u)+i}, S) \ge i^*$ and hence $s_{R(u)+i} > x$.
 The claim follows by using the union bound on the events $X \le \frac{k}{2}$ and $Y \le \frac{k}{2}$.
\end{proof}

\begin{lemma}
	\label{lemma:test_bad}
	Consider a test performed on a bad vertex $u$. The probability that $u$ passes the test is at most $\frac{1}{50}$.
\end{lemma}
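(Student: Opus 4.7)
The plan is to mirror the structure of Lemma~\ref{lemma:test_good}, but now argue that the ``wrong'' half of the test is extremely unlikely to return the majority needed for $u$ to pass. WLOG assume $i^* > \max I(u) + cd$; the complementary case $i^* < \min I(u) - cd$ will follow by a symmetric argument that swaps the roles of $L(u)$ and $R(u)$. Under this assumption, for the test to pass, the majority of the comparisons between $x$ and $s_{R(u)}, \dots, s_{R(u)+k-1}$ must report $x$ as smaller. I aim to show that, regardless of how much $R(u)$ has been advanced so far during the walk, every one of these elements still satisfies $s_{R(u)+i} < x$ in the true order, so that the event requires at least $\lceil (k+1)/2 \rceil$ independent comparison errors, which is exceedingly unlikely.

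The key step I expect to be the main obstacle is pinning down how far $R(u)$ can drift over the course of the entire walk, since this pointer is shared among many vertices. The set of vertices whose $R$-pointer is the same object as $R(u)$ forms a chain in the tree, obtained by descending repeatedly via right/only children (together with the ``right-child/only-child'' ancestor chain above $u$); crucially, at each step of the random walk, the two children of the current vertex lie in the leftmost and rightmost descending subchains of that vertex respectively, so at most one of the two tested children can belong to the chain of $u$. Hence $R(u)$ is incremented by $k$ at most once per step, i.e.\ at most $\rho = 120\lceil\ln m\rceil$ times in total, giving $R(u) \le \max I(u) + d + \rho k$ throughout the walk. Using $d \ge \ln m$ (so $\rho \le 240 d$ for $m$ large) together with $c = 250k$ and $i^* > \max I(u) + cd$, a routine calculation shows $R(u) + (k-1) \le i^* - d - 1$. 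Combining this with $\disl(S) \le d$ gives $\rank(s_{R(u)+i}, S) \le R(u)+i+d \le i^* - 1$ for every $i \in \{0,\dots,k-1\}$, so indeed $s_{R(u)+i} < x$.

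With the order relation established, the probabilistic estimate proceeds exactly as in Lemma~\ref{lemma:test_good}. Letting $Y$ be the number of erroneous right-side comparisons, $Y$ is stochastically dominated by a $\mathrm{Binomial}(k,p)$ variable $Y'$, and Hoeffding's inequality yields
\[
    \Pr\bigl[Y' \ge k/2\bigr] \;\le\; \exp\!\bigl(-k(1-2p)^2/2\bigr).
\]
The choice $k \ge 32\,(1-p)/(1-2p)^2 \ge 16/(1-2p)^2$ forces the exponent to be at least $8$, so the probability is bounded by $e^{-8} < 1/50$. Since the test passes only if its right half passes, this gives $\Pr[u \text{ passes}] \le 1/50$, completing the argument. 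The whole proof will essentially be three ingredients glued together: the ``at most one child per step lies on the chain'' observation, the constant bookkeeping that turns $c = 250k$ and $k \ge 32(1-p)/(1-2p)^2$ into the right numerical slack, and the standard concentration bound; the first ingredient is the nontrivial one and deserves the most careful write-up.
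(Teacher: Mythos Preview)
Your proposal is correct and follows essentially the same approach as the paper: bound how far the shared pointer can have drifted over the at most $\rho$ steps of the walk, use this together with $c=250k$ and $d\ge\lceil\ln m\rceil$ to show that all $k$ compared elements are on the same side of $x$ in the true order, and then apply a concentration bound to the number of erroneous outcomes. The only cosmetic differences are that the paper picks the symmetric case $i^*<\min I(u)-cd$ and works with the $L(u)$ side, and it uses a multiplicative Chernoff bound (yielding $e^{-4}$) rather than your Hoeffding bound (yielding $e^{-8}$); your explicit ``at most one tested child per step lies on the chain'' observation is exactly what the paper relies on (stated there in the proof of Lemma~\ref{lemma:independent_comparisons}).
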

\begin{proof}
Since $u$ is a bad vertex, either $i^* < \min I(u) - cd$ or $i^* > \max I(u) - cd$. We consider the first case, as the second one is symmetric.

Let $X_i$ be an indicator random variable that is $1$ iff the comparison between $x$ and $s_{L(u)-i}$ reports $x$ as the smaller element and define $X = \sum_{i=0}^{k-i} X_i$. A necessary condition for the test to succeed is $X \le \frac{k}{2}$.

Focus on the comparison between $x$ and $s_{L(u)-i}$ for some $i=0, \dots, k-1$ and notice that, before the considered test is performed, the pointer $L(u)$ must have been decremented (by $k$) at most $\rho-1$ times. Using $k\rho = 120 k \lceil \ln m \rceil < (c-2) \lceil \ln m \rceil \le (c-2) d$, we can write:
\begin{equation*}
L(u) - i \ge
\min I(u) - d - 1  - k (\rho - 1) - i \ge
\min I(u) - d  -  k \rho  >
i^* + cd  - d - (c-2)d =
i^* + d.
\end{equation*}

Since $S$ has dislocation at most $d$, this implies  $\rank(s_{L(u)-i}, S) \geq L(u) -i -d > i^*$. Therefore, $s_{L(u) - i} > x$ and $s_{L(u)} \succ x$ is observed with probability at least $1-p$. Hence, $X$ stochastically dominates a binomial random variable $X'$ of parameters $k$ and $1-p$, and we can use a Chernoff bound to write:
\begin{equation}
\Pr\left(X \le \frac{k}{2}\right) \le \Pr\left( X \le \left( 1 - \frac{1-2p}{2-2p} \right) \E[X'] \right)
    \le \exp\!\left( - \frac{1}{2} \cdot \left( \frac{1-2p}{2-2p}\right)^2 \! \cdot kp  \right)
    \le e^{-4} < \frac{1}{50}. \tag*{\qedhere}
\end{equation}
\end{proof}

\begin{definition}\label{def:improving_step}
A step on $T_j$ from vertex $v$ to vertex $u$ is called \emph{improving} if either of the following two conditions holds:
	\begin{itemize}
		\item $u$ is a good vertex and $h(u) > h(v)$, implying that $v$ is also good; or
		\item $v$ is a bad vertex and $h(u) < h(v)$.
	\end{itemize}
\end{definition}

Intuitively, each improving step is making progress towards identifying the interval containing the true rank $i^*$ of $x$, while each non-improving step undoes the progress of at most one improving step.

\begin{lemma}
	\label{lemma:improving_step}
	Each step performed during the walk on $T^*$ is improving with probability at least $\frac{9}{10}$.
\end{lemma}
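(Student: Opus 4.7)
The plan is a case analysis on the vertex $v$ from which the random-walk step is taken, based on whether $v$ is good or bad and whether $v$ is a path-vertex or not. Before the case analysis, I would record two preliminary observations. First, the walk only performs steps from non-leaf vertices, since reaching a leaf immediately triggers the Success halt; hence $v$ always has at least one child. Second, the root of $T^*$ is necessarily good, because $i^*$ lies in the interval of some leaf of $T^*$ and therefore in every ancestor's interval, including the root's. In particular, a bad $v$ always has a well-defined parent to step to, so the walk rule that applies when all children fail is unambiguous.

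Next, I would invoke the structural facts on $T^*$ already established in the excerpt: every vertex of $T^*$ is either good or bad, the good vertices form a single root-to-leaf path, and every child of a bad vertex is bad. Combined with the walk rules and Definition~\ref{def:improving_step}, these facts pin down, in each case, exactly which transitions out of $v$ are improving. Concretely: (i) if $v$ is a good non-path vertex, exactly one child $u^*$ is good (the next vertex on the good path) and the other child $u^\dagger$ is bad, and the only improving step is $v\to u^*$, which occurs iff $u^*$ passes its test and $u^\dagger$ fails its test; (ii) if $v$ is a good path-vertex, the unique child $u^*$ is good, and the only improving move is $v\to u^*$, which happens iff $u^*$ passes; (iii) if $v$ is bad, every child of $v$ is bad and the only improving step is the one to $v$'s parent, which happens iff every child of $v$ fails its test.

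I would then bound the probability in each case by applying Lemma~\ref{lemma:test_good} and Lemma~\ref{lemma:test_bad} together with a union bound over the relevant failure/pass events. Case~(i) gives probability at least $1-\frac{2}{50}-\frac{1}{50}=\frac{47}{50}$; case~(ii) gives probability at least $1-\frac{2}{50}=\frac{48}{50}$; and case~(iii), using the union bound over at most two bad children, gives probability at least $1-\frac{2}{50}=\frac{48}{50}$. Since all three bounds exceed $\frac{9}{10}=\frac{45}{50}$, the lemma follows in every case.

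The only subtle point I anticipate is that the events ``child $u$ passes'' and ``sibling $u'$ fails'' might fail to be independent, since the two tests can share positions through the shared pointers of $v$. However, the argument as sketched uses only the union bound applied to the marginal per-test bounds given by Lemmas~\ref{lemma:test_good} and~\ref{lemma:test_bad}, so no independence between tests at different children is actually needed, and the stated probability $\tfrac{9}{10}$ is a direct consequence.
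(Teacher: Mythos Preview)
Your proposal is correct and follows essentially the same approach as the paper: a case split on whether $v$ is good or bad, combined with Lemmas~\ref{lemma:test_good} and~\ref{lemma:test_bad} and a union bound over at most two children, yielding $\tfrac{47}{50}$ in the worst case. Your explicit separation of the good path-vertex sub-case is a slightly finer (but equivalent) case analysis than the paper's, which simply notes that a good $v$ has ``at most one other child $w$'' besides the good child.
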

\begin{proof}
	Consider a generic step performed during the walk on $T^*$ from a vertex $v$.
	If $v$ is a good vertex, then exactly one child $u$ of $v$ in $T^*$ is good (notice that $v$ cannot be a leaf). 
  By Lemma~\ref{lemma:test_good}, the test on $u$ succeeds with probability at least $\frac{48}{50}$. Moreover, there can be at most one other child $w \neq u$ of $v$ in $T^*$. If $w$ exists, then it must be bad and, by Lemma~\ref{lemma:test_bad}, the test on $w$ fails with probability at least $\frac{49}{50}$.
  By using the union bound on the complementary probabilities, we have that process walks from $v$ to $u$ with probability at least $\frac{47}{50}$.
	
  If $v$ is a bad vertex, then all of its children are also bad. Since $v$ has at most $2$ children and, by Lemma~\ref{lemma:test_bad}, a test on a bad vertex fails with probability at least $\frac{49}{50}$, we have that all the tests fail with probability at least $\frac{48}{50}$. In this case, the process walks from $v$ to the parent of $v$ (notice that, since $v$ is bad, it cannot be the root of $T^*$).
\end{proof}

As a consequence of our choice for the group size, no two tests will ever compare $x$ with the same element from $S$, as the following lemma shows.
\begin{lemma}
	\label{lemma:independent_comparisons}
	During the random walk on $T_j$, element $x$ is compared to each element in $S$ at most once.
\end{lemma}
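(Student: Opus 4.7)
The plan is to partition all noisy comparisons performed during the walk on $T_j$ according to which shared pointer is responsible for them, and then show that the sets of positions of $S$ accessed through distinct pointers are pairwise disjoint. Since each comparison with some $s_p$ is attributable to exactly one pointer, this will immediately give that no element of $S$ is compared with $x$ more than once. To set up the attribution cleanly, I would observe that the pointer-sharing rules induce equivalence classes on the vertices of $T_j$: two vertices share the same $L$ pointer if and only if they share the same leftmost leaf descendant, and analogously for $R$. Hence $L$ classes and $R$ classes are each in bijection with the leaves of $T_j$, and a test on a vertex $v$ updates the $L$ pointer of the class indexed by $v$'s leftmost descendant leaf, and the $R$ pointer of the class indexed by $v$'s rightmost descendant leaf.

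Next, I would bound how many positions each individual pointer can consume during the entire walk. Directly from the definition of the test operation, the $t$-th test on the $L$ class of some leaf $\ell$ reads the $k$ positions $\{L-k+1,\dots,L\}$ and then decrements $L$ by $k$, so the sequence of tests on this class consumes disjoint blocks of $k$ consecutive positions, moving strictly leftward. After $t$ tests, the consumed set is exactly the interval $\{L_\ell^{(0)}-tk+1,\dots,L_\ell^{(0)}\}$, where $L_\ell^{(0)}$ is the initial value; symmetrically for $R$. Since each step of the walk performs at most two tests and there are at most $\rho$ steps, the total number of tests is at most $2\rho$, so every individual pointer has a reach of at most $2k\rho$ positions.

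Finally, I would verify pairwise disjointness by direct computation from the initial positions. For leaves $i$ of $T_j$, we have $L_i^{(0)} = (2i+j)cd - d$ and $R_i^{(0)} = (2i+j+1)cd + d$. Between the $L$ pointers (resp.\ $R$ pointers) of two consecutive leaves, the initial positions are $2cd$ apart, and $2k\rho \le (c-2)d < 2cd$ immediately rules out overlap. The tightest spacing is between the $L$ pointer of leaf $i$ and the $R$ pointer of leaf $i-1$, which start at distance $(c-2)d$. For this case, the set of vertices feeding tests into the $L$ class of leaf $i$ is disjoint from the set feeding the $R$ class of leaf $i-1$ (no vertex has its leftmost descendant strictly greater than its rightmost descendant), so the combined number of tests on these two classes is still at most $2\rho$ and their combined displacement is at most $2k\rho$. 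Plugging in $c = 250k$, $\rho = 120\lceil\ln m\rceil$, and $d\ge\lceil\ln m\rceil$ yields $2k\rho = 240k\lceil\ln m\rceil \le 240kd \le (250k-2)d = (c-2)d$, which rules out overlap. Pointers that remain in positions outside $\{1,\dots,m\}$ correspond only to the virtual $\pm\infty$ elements and thus do not cause any element of $S$ to be compared twice.

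The main obstacle is the bookkeeping in the third step, specifically arguing that tests charged to the $L$ class of leaf $i$ and to the $R$ class of leaf $i-1$ are in fact on disjoint vertex sets (so that $t_L + t_R \le 2\rho$ rather than the naive $4\rho$); once this observation is in place, the rest is driven by the choice of constants.
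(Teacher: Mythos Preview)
Your argument is correct, but it takes a different and somewhat more elaborate route than the paper. The paper's proof rests on a single observation you did not use: in any one step of the walk, each shared pointer is touched \emph{at most once}. Indeed, a step tests the (at most two) children of the current vertex, and since siblings have distinct leftmost leaves and distinct rightmost leaves, the four pointers $L(u_1),R(u_1),L(u_2),R(u_2)$ are all different. This immediately gives a per-pointer reach of at most $k\rho$, and then one just checks that the minimum initial distance between any two distinct pointers exceeds $2k\rho$ via $cd-2d-1\ge (c-3)d>240kd\ge 2k\rho$. No case analysis is needed.

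Your route instead bounds the \emph{total} number of tests by $2\rho$, which only yields a per-pointer reach of $2k\rho$; this is too weak for the tight pair ($L$ of leaf $i$ versus $R$ of leaf $i-1$, at distance $(c-2)d$), so you recover the needed bound by observing that no vertex can simultaneously have leftmost descendant $i$ and rightmost descendant $i-1$, hence the tests feeding those two classes are disjoint and $t_L+t_R\le 2\rho$. This is sound and the arithmetic checks out. What your approach buys is that it never appeals to the per-step structure (you only use the total test count and a global tree property), at the cost of the case split and the auxiliary disjoint-vertex-sets argument. The paper's approach buys uniformity: one inequality handles all pairs at once.
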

\begin{proof}
During the random walk, each pointer moves from its initial position by at most $k\rho$ positions (since each of the at most $\rho$ steps of the random walk either does not affect the pointer or it increases/decreases the pointer by $k$). 
However, the initial distance between any two distinct pointers is more than $2k\rho$, indeed: 
\[
cd - 2d - 1 \ge (c-3)d = (250k - 3) d > 240 kd = 2 k \cdot 120 d \ge 2 k\cdot 120 \lceil \ln m \rceil = 2k \rho,
\] where we used the fact that $d$ is an interger with $d \ge \ln m$, which implies $d \ge \lceil \ln m\rceil$.
The claim follows by observing that $x$ is only compared to the elements in $S$ via test operations.
\end{proof}

The following lemmas show that the walk on $T^*$ is likely to return a good vertex, while the walk on $T' \neq T^*$ is likely to either timeout or to return a non-bad (good or neutral) vertex, i.e., a vertex whose corresponding interval contains positions that are close to the true rank of $x$ in $S$.

\begin{lemma}
	\label{lemma:timeout}
	The walk on $T^*$ timeouts with probability at most $m^{-6}$.
\end{lemma}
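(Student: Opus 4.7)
The plan is to model the walk on $T^*$ as a biased process that drifts toward the unique good leaf, and to establish the timeout bound via a Chernoff estimate on the number of improving steps performed within $\rho = 120\lceil \ln m\rceil$ rounds.

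First, I would introduce a potential $\psi$ on the vertices of $T^*$ that measures ``progress toward the good leaf''. Since the good vertices of $T^*$ form a single root-to-leaf path, every vertex $v$ has a well-defined deepest good ancestor; let $a(v)$ denote its depth. Set
\[
\psi(v) \;=\;
\begin{cases}
h(v), & \text{if } v \text{ is good},\\
2a(v) - h(v), & \text{if } v \text{ is bad.}
\end{cases}
\]
Then $\psi$ equals $0$ at the starting vertex (the root is good at depth $0$), it is bounded above by $h+\eta$, and the only vertex attaining the value $h+\eta$ is the (unique) good leaf of $T^*$. Hence the walk succeeds as soon as $\psi$ reaches $h+\eta$, and if the walk does not stop within $\rho$ steps then in particular $\psi_\rho < h+\eta$.

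Next, a short case analysis on the six possible transition types (good-to-good-child, good-to-bad-child, good-to-parent, bad-to-parent, bad-to-bad-child, and self-loop) shows that every improving step increases $\psi$ by exactly $1$, while every non-improving step changes $\psi$ by either $0$ (self-loop) or $-1$. Writing $I_t$ and $N_t$ for the numbers of improving and non-improving steps among the first $t$ rounds, this yields $\psi_t \ge I_t - N_t = 2I_t - t$; consequently, on the event that the walk times out we have $I_\rho < \tfrac{\rho + h + \eta}{2}$.

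To conclude, Lemma~\ref{lemma:independent_comparisons} guarantees that no element of $S$ is compared with $x$ more than once during the walk, so the outcomes of all tests are mutually independent; combined with Lemma~\ref{lemma:improving_step}, this implies that, conditioned on any history, the next step is improving with probability at least $\tfrac{9}{10}$. By a standard coupling, $I_\rho$ stochastically dominates a $\mathrm{Binomial}(\rho, 9/10)$ random variable of mean $\tfrac{9\rho}{10} \ge 108\lceil\ln m\rceil$. Since $h \le \log m$ and $\eta = 2 + 4\lceil\log_7 m\rceil$ are both $O(\ln m)$ with small constants, the threshold $\tfrac{\rho + h + \eta}{2}$ lies comfortably below this mean; a standard multiplicative Chernoff bound then yields $\Pr\bigl(I_\rho < \tfrac{\rho + h + \eta}{2}\bigr) \le m^{-6}$, as required. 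The main obstacle will be the bookkeeping in the case analysis for $\psi$ — in particular verifying that a bad-to-parent step always increases $\psi$ by exactly $1$, irrespective of whether the parent is itself bad or good — together with choosing constants so that the Chernoff exponent comes out at least $6\ln m$; the generous value $\rho = 120\lceil\ln m\rceil$ leaves ample slack for both.
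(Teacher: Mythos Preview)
Your proposal is correct and follows essentially the same route as the paper's proof. Both arguments reduce the timeout event to the inequality $I_\rho < \tfrac{\rho+h+\eta}{2}$ and then apply the coupling with $\mathrm{Binomial}(\rho,\tfrac{9}{10})$ via Lemmas~\ref{lemma:improving_step} and~\ref{lemma:independent_comparisons}, followed by a multiplicative Chernoff bound with the same numerics ($h+\eta<\rho/10$, threshold $<\tfrac{11\rho}{20}$, exponent $\rho/20\ge 6\ln m$). The one place where you are more explicit than the paper is the potential $\psi$: the paper simply asserts that once improving steps exceed non-improving steps by $h+\eta$ the success condition is met, whereas your potential function turns that assertion into a clean $\pm1$ accounting; this is a helpful elaboration, not a different approach.
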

\begin{proof}
	For $t=1,\dots, \rho$, let $X_t$ be an indicator random variable that is equal to $1$ iff the $t$-th step of the walk on $T^*$ is improving. If the $t$-th step is not performed then let $X_t = 1$.
	
	Notice that if, at any time $t'$ during the walk, the number $X^{(t')} = \sum_{t=1}^{t'} X_t$ of improving steps exceeds the number of non-improving steps by at least $h + \eta$, then the success condition is met.
	This means that a necessary condition for the walk to timeout is $X^{(\rho)} - (\rho - X^{(\rho)}) < h + \eta$, which is equivalent to $X^{(\rho)}  < \frac{h + \eta + \rho}{2}$.
	
	By Lemma~\ref{lemma:independent_comparisons} and by Lemma~\ref{lemma:improving_step}, we know that 
	each $X_t$s corresponding to a performed step satisfies $P(X_t = 1) \ge \frac{9}{10}$, regardless of whether the other steps are improving.
	We can therefore consider the following experiment: at every time step $t=1,\dots, \rho$, we flip a coin that lands on heads with probability $q = \frac{9}{10}$, we let $Y_t = 1$ if this happens, and $Y_t = 0$ otherwise.
	Defining $Y^{(\rho)}  = \sum_{t=1}^{\rho} Y_t$, we have that the probability of having $X^{(\rho)}  < \frac{h + \eta + \rho}{2}$ is at most the probability of having $Y^{(\rho)}  < \frac{h + \eta + \rho}{2}$. 
	Since $h < \log m < 2 \ln m$, we have $h + \eta <  2 +  4 \lceil \log_{7} m \rceil + 2 \ln m \le 8 \lceil \ln m \rceil < \frac{\rho}{10}$, and we can use a Chernoff bound to write:
	\begin{align*}
		\Pr\left(X^{(\rho)}  < \frac{h + \eta + \rho}{2}\right) & \le \Pr\left(Y^{(\rho)} < \frac{h + \eta + \rho}{2}\right) \le \Pr\left(Y^{(\rho)} < \frac{11\rho}{20}\right)\\ 
		 & = \Pr\left(Y < \frac{11 \E[Y]}{20q} \right) \le \Pr\left(Y < \left(1- \frac{1}{3}\right) \E[Y] \right)\\
		 & \le e^{-\frac{\rho q}{18}} =  e^{-\frac{\rho}{20}} \le e^{-6 \ln m} = m^{-6}. \tag*{\qedhere}
	\end{align*}
\end{proof}

\begin{lemma}
	\label{lemma:bad_vertex_returned}
  Let $j \in \{0,1\}$. A walk on $T_j$ returns a bad vertex with probability at most $m^{-7}$.
\end{lemma}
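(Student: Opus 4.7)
The overall strategy is to bound the expected number of visits the walk makes to bad leaves and then apply Markov's inequality. I will estimate this expectation via a gambler's-ruin analysis that exploits the distinct structures of the complete-tree and path portions of $T_j$.

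The key structural observation is that every path-vertex at depths $h, h+1, \dots, h+\eta$ inherits the interval of its unique descendant leaf, so all vertices lying on a single downward $\eta$-path are simultaneously bad or simultaneously non-bad. Consequently, every maximal bad subtree of $T_j$ must be rooted at some depth $d_0 \in \{1, \dots, h\}$, and such a subtree contains at most $2^{h-d_0}$ leaves (one per descending path that emanates from a depth-$h$ descendant of its root).

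For a single entry into a bad subtree rooted at depth $d_0$, I will project the walk onto its bad depth (its depth below the subtree root) and carry out gambler's ruin in two stages. In the complete-tree region (bad depths up to $h-d_0+1$), each bad vertex has two bad children, and Lemma~\ref{lemma:test_bad} implies $P(\text{down}) \le 2/50$ and $P(\text{up}) \ge 48/50$, giving a downward bias of $1/24$. In the path region, each bad vertex has a single child, so $P(\text{down}) \le 1/50$ and $P(\text{up}) \ge 49/50$, giving a downward bias of $1/49$. Since tests at different times use disjoint comparisons (Lemma~\ref{lemma:independent_comparisons}), the walk in each region behaves as an independent gambler's ruin, and a single entry into the subtree reaches a bad leaf with probability at most $O\bigl(24^{-(h-d_0)} \cdot 49^{-\eta}\bigr)$.

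The expected number of entries into each bad subtree is at most $\rho/50$, since each entry requires the bad subtree root to pass a test (probability $\le 1/50$ per visit to the non-bad parent). Summing the expected leaf-visit count over all at most $2^{d_0}$ bad subtree roots at each depth $d_0$, the geometric sum is dominated by $O(2^h)$ and produces a total bound of $O\bigl(\rho \cdot 2^h \cdot 49^{-\eta}\bigr)$. Using $m = 2cd \cdot 2^h$ together with $d \ge \ln m$ gives $\rho \cdot 2^h = O(m)$, and the identity $\log_7 49 = 2$ combined with $\eta \ge 2 + 4\lceil\log_7 m\rceil$ yields $49^\eta \ge 2401\,m^8$. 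Hence the expected count is $O(m^{-7})$, and Markov's inequality delivers the claimed probability bound. The principal obstacle is validating the two-phase gambler's ruin argument---in particular, showing that the walk cannot gain an advantage by oscillating across the complete-tree/path boundary---which is resolved by the strong Markov property together with the independence of comparison outcomes across distinct tests.
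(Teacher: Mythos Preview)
Your approach is correct and reaches the claimed bound, but it is considerably more elaborate than the paper's. The paper never analyses the complete-tree portion of a bad subtree: it simply observes that to return a bad leaf the walk must at some step be at a bad vertex $u$ at depth $h$, and from there traverse the $\eta$-vertex path hanging below $u$. A single gambler's-ruin computation on that path alone (states $0,\dots,\eta+1$, forward bias at most $1/49$) bounds each such attempt by $48/(49^{\eta+1}-1) < 49^{-\eta}$; since the walk can be at depth $h$ at most $\rho$ times, a union bound gives $\rho\cdot 49^{-\eta} \le m^{-7}$ directly.

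Your two-phase gambler's ruin is valid---the product formula for absorption with piecewise-constant bias does give $24^{-(h-d_0)}\cdot 49^{-\eta}$, and Lemma~\ref{lemma:independent_comparisons} justifies treating successive tests as fresh---but the extra $24^{-(h-d_0)}$ factor you work for is immediately spent in your sum $\sum_{d_0} 2^{d_0}\cdot 24^{-(h-d_0)} = O(2^h)$, leaving an overall bound of $O(\rho\cdot 2^h \cdot 49^{-\eta})$. This is looser than the paper's $\rho\cdot 49^{-\eta}$ by a factor of $2^h$, and you close the gap only via the separate observation $\rho\cdot 2^h < m$ (from $m = 2cd\cdot 2^h$ with $d \ge \lceil\ln m\rceil$). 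The paper's key simplification is to start the ruin argument at depth $h$ rather than at the bad-subtree root, obviating the bookkeeping over $d_0$ and the tree-region phase entirely. A minor stylistic point: your Markov step is vacuous, since the walk halts on its first leaf visit and hence the expected number of visits equals the probability; the real content of your argument is the linearity of expectation over subtrees and entries.
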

\begin{proof}
	Notice that, in order to return a bad vertex $v$, the walk must
	first reach a vertex $u$ at depth $h(u)=h$, and then traverse the $\eta$ vertices of the path rooted in $u$ having $v$ as its other endpoint.
	
	We now bound the probability that, once the walk reaches $u$, it will also reach $v$ before walking back to the parent of $u$.
	Notice that all the vertices in the path from $u$ to $v$ are associated to the same interval, and hence they are all bad. 
  Since a test on a bad vertex succeeds with probability at most $\frac{1}{50}$ (see Lemma~\ref{lemma:test_bad}) and tests are independent (see Lemma~\ref{lemma:independent_comparisons}), the sought probability can be upper-bounded by considering a random walk on $\{0 ,\dots, \eta+1 \}$ that: (i) starts from $1$, (ii) has one absorbing barrier on $0$ and another on $\eta+1$, and (iii) for any state $i \in [1, \eta]$ has a probability of transitioning to state $i+1$ of $\frac{1}{50}$ and to state $i-1$ of $\frac{49}{50}$.
	Here state $0$ corresponds to the parent of $u$, and state $i$ for $i>0$ corresponds to the vertex of the $u$--$v$ path at depth $h+i-1$ (so that state $1$ corresponds to $u$ and state $\eta+1$ corresponds to $v$).
	
	The probability of reaching $v$ in $\rho$ steps is at most the probability of being absorbed in $\eta$ (in any number of steps), which is (see, e.g., \cite[pp.~344--346]{feller1957introduction}):
	\[
		\frac{  \frac{49/50}{1/50} - 1 }{ \left(\frac{49/50}{1/50}\right)^{\eta+1} - 1 } =
\frac{48}{49^{\eta+1} -1}
< \frac{1}{49^\eta} 
  \le \frac{1}{7^4 \cdot 7^{2 \cdot 4 \log_7 m}}
= \frac{1}{7^4\, m \cdot m^7}
< \frac{1}{\rho m^7}.
	\]
	
%	\[
%	\frac{  \frac{1-p}{p} - 1 }{ (\frac{1-p}{p})^{\eta+1} - 1 } \le
%	\left( \frac{p}{1-p} \right)^\eta \le
%	\left( \frac{1}{31} \right)^{2\log n}  <
%	\left( \frac{1}{2^4} \right)^{2\log n} 
%	= 2^{-8 \log n} = n^{-8},
%	\]
%	where we used the fact that $p \le \frac{1}{32}$.\question{Have to adapt to a new upper bound on $p$ !!!}
	
	Since the walk on $T_j$ can reach a vertex at depth $h$ at most $\rho$ times, by the union bound we have that the probability of returning a bad vertex is at most $m^{-7}$.
\end{proof}

\noindent We are now ready to prove the main result of this section.

\thmnoisysearch
\begin{proof}
	We compute the index $\tau_x$ by performing two random walks on $T_0$ and on $T_1$, respectively.
	If any of the walks returns a vertex $v$, then we return any position in the interval $I(v)$ associated with $v$ (if both walks return some vertex, we choose one arbitrarily). If both walks timeout, then we return an arbitrary position.
	
  From Lemma~\ref{lemma:timeout} the probability that both walks timeout is at most $m^{-6}$ (as the walk on $T^*$ timeouts with at most this probability).
  Moreover, by Lemma~\ref{lemma:bad_vertex_returned}, the probability that at least one of the two walks returns a bad vertex is at most $2m^{-7} < m^{-6}$.
	
  Overall, vertex $v$ exists and it is not bad with probability at least $1 - 2m^{-6}$. When this happens we can use $\tau_x \in [\min I(v), \max I(v)]$ and that $\max I(v) - \min I(v) < cd$ to write:
	\[
	i^* \ge \min I(v) - cd  > \max I(v) - 2cd \ge \tau_x - 2cd,
	\]
	and
	\[
	i^* \le \max I(v) + cd < \min I(v) + 2cd \le \tau_x + 2cd.
	\]
	
	To conclude the proof, it suffices to notice that the random walk requires at most $\rho = O(\log n)$ steps, that each step requires constant time, and that it is not necessary to explicitly construct $T_0$ and $T_1$ beforehand.
	Instead, it suffices to maintain a partial tree consisting of all the vertices visited by the random walk so far: vertices (and the corresponding pointers) are \emph{lazily} created and appended to the existing tree whenever the walk visits them for the first time.
\end{proof}

We conclude this section by arguing that our initial assumption that $m = 2cd \cdot 2^h$ for some integer $h \ge 1$ can be relaxed.
Indeed, if this is not already the case and $m<4cd$, we can simply return an arbitrary position $\tau_x \in [1, m+1]$ to trivially satisfy $|\tau_x - \rank(x,S)| \le 4c d$.
If $m > 4cd$, we can satisfy the requirement by finding an estimate $\tau'_x$ of the rank of $x$ in a padded version $S'$ of $S$ using a suitable upper bound $d'$ on the dislocation of $S'$.
We build $S'$ by appending up to $m-1$ dummy elements which always compare larger than $x$, and we choose $d' = 2d$ to ensure 
$d' = 2d \ge 2\max\{ \disl(S), \ln m \} \ge \max\{ \disl(S'), \ln m^2 \} > \max\{ \disl(S'), \ln |S'| \}$, where we used $\disl(S) = \disl(S')$.
After finding $\tau'_x$, we return $\tau_x = \min\{\tau'_x, m+1\}$, which does not increase the error of the estimated rank.

We remark that Theorem~\ref{thm:noisy_binary_search} as stated works regardless of our assumption. In fact, by inspecting its proof and taking into account the above discussion, we can give a (crude) general upper bound for the value of $\alpha$ as
$2 \cdot (2 c) = 1000 k$ (where the extra factor of $2$ accounts for the possibly doubled value of $d$).
Since $k$ is defined as the smallest odd integer that is at least $32  \frac{1-p}{(1-2p)^2}$ and $ \frac{1-p}{(1-2p)^2} \ge 1$, we have $k < 34 \frac{1-p}{(1-2p)^2}$ from which we obtain $\alpha < 34000 \frac{1-p}{(1-2p)^2}$.

\section{BasketSort}\label{sec:baskset-sort}

%%%%%%　Opening

In this section, we describe and analyze a sorting algorithm called \emph{\basketsort} that re-sort an approximately sorted sequence into a sequence having logarithmic maximum dislocation and linear total dislocation, w.h.p. 
At a high level, \basketsort maintains a \emph{window size} as a tentative upper bound on the dislocation of each element.
Then, \basketsort iteratively shrinks the window size by a multiplicative \emph{shrinking factor} while preserving such an upper-bound until the window size becomes $O(\log n)$.
In the shrinking process, \basketsort partitions the current sequence into \emph{baskets} containing as many elements as the current window size, compares the elements in each basket with the elements in the neighboring baskets, and re-orders the elements according to the outcomes of those comparisons.

Our algorithm works when the input elements are subject to persistent random comparison errors with a probability of error of at least $q$ and at most $p$ with $p \le \frac{9q+1}{8q+11}$. 
%%%% Organization
The rest of this section is organized as follows. In \Cref{sub:basket-description} we describe our algorithm, while in \Cref{sub:basket-maximum-dislocation} and \Cref{sub:basket-total-dislocation} we prove that the maximum dislocation and the total dislocation of the sequence returned by \basketsort are logarithmic and linear w.h.p., respectively. 
%In \Cref{sub:basket-summary},  we further adapt the algorithm for  $p < \frac{5-\sqrt{17}}{8} \approx \frac{1}{9}$ by choosing a shrinking rate that depends on the $p$.

\subsection{Algorithm description}\label{sub:basket-description}

The input to the algorithm is a sequence $S$ of $m$ distinct elements and an upper bound $w_S$ on the dislocation of $S$, i.e., $w_S\geq \disl(S)$.
We assume w.l.o.g.\ that $w_S \le m$ (since otherwise we can set $w_S = m$) and that $w_S \ge 1$ (since otherwise $S$ is already sorted).
To simplify the notation, we further assume w.l.o.g.\ that $S$ contains the integers between $1$ and $m$ (so that $\rank(x,S)=x$).

The algorithm uses a constant \emph{shrinking rate} parameter $\rho \in [\frac{1}{2}, 1)$ that is chosen as a function of the lower and upper bounds $q$ and $p$ on the probability of comparison errors.
%More precisely, as our analysis shows, any shrinking rate that is at least $\frac{1}{2}$ and lies in the open interval $(\frac{8pq + 10(p-q)}{1-p-q}, 1)$ suffices. 
%Notice that the above interval is non-empty as long as $p< \frac{9q+1}{8q + 11}$.
Although we set $\rho=\frac{1}{2} \Paren{1+\frac{8pq + 10(p-q)}{1-p-q}}=\frac{1}{2} + \frac{4pq + 5(p-q)}{1-p-q}$ in \Cref{alg:bs}, $\rho$ only has to be a parameter in $\Paren{\frac{8pq + 10(p-q)}{1-p-q},1}$, and  we still express the achieved maximum and total dislocations, the success probability, and the running time of the algorithm as functions of $\rho$ throughout our analysis.
Notice that the above interval is non-empty as long as $p< \frac{9q+1}{8q + 11}$. For $q=0$, the condition becomes $p < \frac{1}{11}$, while $p=q$ yields $p < \frac{1}{4}$.

%Moreover, we additionally require $w_S$ to be a power of $2$ in order to avoid rounding issues.\footnote{If this is not the case, $w_S$ can be rounded up to the closest power of $2$. This does not affect the bounds on the dislocation nor the asymptotic running time.}
The pseudocode of our algorithm is shown in \Cref{alg:bs}. 
The algorithm iteratively ``sorts'' the sequence of elements while maintaining a (tentative) upper bound $w$, called the \emph{window size}, to their dislocation.
The algorithm terminates in $O(\log_{1/\rho} w_S)$ rounds, where the $i$-th round considers a window size $w$ of roughly $\rho^{i-1} w_S$ and permutes an associated sequence $S_w$ to obtain a new sequence $S_{\lfloor \rho w \rfloor}$ with a (tentative) upper bound on the maximum dislocation of $\rho w$. The initial sequence $S_{\lfloor w_S \rfloor}$ is exactly $S$, while the output of the algorithm is the sequence $S_0$.

%The pseudocode of the algorithm is shown in Algorithm~\ref{alg:bs}.
To compute $S_{\lfloor \rho w \rfloor}$, the algorithm uses $S_w$ to find an estimate $\tau_w(x)$ of the true rank of each element $x \in S$, and then sorts the elements in non-decreasing order of $\tau_w(\cdot)$.
The quantity $\tau_w(x)$ is obtained as follows: we partition $S_w$ into consecutive \emph{baskets} $B_1, B_2, \dots$ of $w$ elements (except possibly for the last basket when $m$ is not a multiple of $w$) 
and we restrict our attention to the union $B$ of the elements that are ``at most $3$ baskets away'' from the baskets $B_i$ containing $x$.
Note that if the dislocation of $S_w$ is at most $w$, then all elements not in $B$ are in the correct relative order w.r.t.\ $x$ and $\rank(x, S) = \max\{(i-4)w, 0\} + \rank(x, B)$, so we are left with the problem of estimating $\rank(x, B)$. 
We say that $x\in B$ has a \emph{score} $score(x)$ equal to the number of elements $y\in B\setminus\{x\}$ with $y\prec x$, we sort $B$ according to the scores of the elements to obtain a sequence $A$, and we use the position of $x$ in $A$ as a proxy for $\rank(x, B)$, i.e., we define $\tau_w(x) = \max\{0,i-4\}\cdot w + \pos(x,A)$.

Clearly, we cannot hope for the tentative upper bound $w$ on the dislocation of $S_w$ to hold with high probability until $w=0$. However, as we will show in \Cref{sub:basket-maximum-dislocation}, $\disl(S_w) \le w$ holds with high probability for $w = \Omega(\log m)$, and the combined effect of all subsequent rounds only affects the dislocation by at most a multiplicative constant.

\savenotes
\begin{algorithm2e}[t]
\caption{\basketsort\unskip($S$,\,$w_S$)}\label[algorithm]{alg:bs}
\small
    $\rho \gets  \frac{1}{2} + \frac{4pq + 5(p-q)}{1-p-q}$\tcp*{$\rho$ is a \emph{shrinking rate} in $\left(\frac{8pq + 10(p-q)}{1-p-q}, 1\right)$ (see \Cref{sub:basket-description})}
  $S_{\lfloor w_S \rfloor} \gets S$\;
  $w \gets \lfloor w_S \rfloor$\;
\While{$w \ge 1$}
%\For{$w=w_S, \frac{w_S}{2}, \frac{w_S}{4}, \dots, 2$}
{
  $B_1, \dots, B_{\lceil \frac{m}{w}\rceil} \gets$ Partition $S_w$ into $\lceil \frac{m}{w}\rceil$  \emph{baskets} of consecutive elements such that each of the first  $\lceil \frac{m}{w}\rceil-1$ baskets contains exactly $w$ elements and the last basket contains at most $w$ elements\;
    \For{$i=1, \dots, \lceil \frac{m}{w}\rceil$\label{ln:bs_for}}
  {
		$B \gets  \bigcup_{j = \max\{1,i-3\}}^{\min\{i+3,\lceil m/w\rceil\}}  B_j$\;
        \lForEach{$x \in B$}{ $\score(x) \gets | \{ y \in B \setminus{x} \mid y \prec x \} |$}        
         
        $A \gets $Sort\footnote{\label{fn:stable}We assume that the used sorting algorithm is stable.} the elements in $B$ in non-decreasing order of $\score(\cdot)$\;

		\lForEach{$x \in B_i$}{$\tau_w(x) \gets \max\{0,i-4\}\cdot w + \pos(x,A)$\label{ln:compute_tau}}
  }
  $S_{\lfloor \rho w \rfloor} \gets$ Sort\textsuperscript{\ref{fn:stable}} the elements in $S_w$ in non-decreasing order of $\tau_w(\cdot)$\;
  $w \gets \lfloor \rho w \rfloor$\;
}
  \Return $S_0$\;
\end{algorithm2e}

\noindent We start by analyzing the running time of \basketsort.\spewnotes

\begin{lemma}\label{lem:basketsort-time}
  The running time of \basketsort is $O(\frac{m\cdot w_S}{1-\rho})$.
\end{lemma}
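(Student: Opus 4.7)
The strategy is to bound the cost of a single pass of the outer \textbf{while} loop as $O(mw)$, and then sum over all passes via a geometric series.

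\textbf{Cost of a single iteration with window size $w$.} Partitioning $S_w$ into $\lceil m/w \rceil$ baskets takes $O(m)$ time. In the body of the \textbf{for} loop, each iteration considers a set $B$ which is the union of at most $7$ consecutive baskets, so $|B| \le 7w$. Computing $\score(x)$ for every $x \in B$ by the naive pairwise approach costs $\binom{|B|}{2} = O(w^2)$ noisy comparisons, each of which runs in $O(1)$ time. Sorting $B$ in non-decreasing order of its scores can be implemented in $O(w)$ time by a stable counting sort, since the scores are integers in $\{0, \ldots, |B|-1\}$. Computing $\tau_w(x)$ for every $x \in B_i$ is another $O(w)$. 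Thus each basket contributes $O(w^2)$, and summing over all $\lceil m/w \rceil$ baskets yields $O(mw)$ for the \textbf{for} loop. The concluding global sort of $S_w$ by $\tau_w(\cdot)$ can likewise be carried out by a stable counting sort: by inspection of line~\ref{ln:compute_tau}, every $\tau_w(x)$ is a positive integer at most $(\lceil m/w\rceil-4)\cdot w + |B| \le m + 3w = O(m)$, so this sort runs in $O(m)$ time as well. Overall, a single iteration with window size $w$ has cost $O(mw)$.

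\textbf{Summation over iterations.} Let $w^{(0)} = \lfloor w_S \rfloor$ and $w^{(i+1)} = \lfloor \rho w^{(i)} \rfloor$ denote the successive window sizes. Since $0 \le \rho < 1$ we have $w^{(i+1)} < w^{(i)}$ whenever $w^{(i)} \ge 1$, so the loop terminates after finitely many iterations, and $w^{(i)} \le \rho^i w_S$. The total running time is therefore at most
\[
\sum_{i \ge 0} O\!\left(m \cdot w^{(i)}\right) \;\le\; O(m w_S)\sum_{i \ge 0} \rho^i \;=\; O\!\left(\frac{m\, w_S}{1-\rho}\right),
\]
which is the claimed bound.

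\textbf{Main point of care.} The only mildly delicate aspect is that both sorting subroutines must run in time linear in the number of keys; otherwise the terminal iterations with $w = O(1)$ would each cost $\Theta(m\log m)$ and spoil the target bound when $w_S$ is small (the sum $\sum_i m\log m$ would not collapse to a $1/(1-\rho)$ factor). This is resolved without additional assumptions since both sorting keys—the scores within a single $B$, and the global $\tau_w$ values—are non-negative integers bounded respectively by $O(w)$ and $O(m)$, so a standard stable counting sort applies.
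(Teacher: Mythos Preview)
Your argument is correct and follows the same two-step structure as the paper: bound the cost of one pass with window size $w$ by $O(mw)$, then sum geometrically over the shrinking window sizes to get $O\!\left(\frac{m w_S}{1-\rho}\right)$. The only substantive difference is that you are more explicit than the paper about the sorting subroutines: the paper simply asserts that each pass costs $O(mw)$, while you justify this by observing that both the per-basket sort (keys in $\{0,\dots,|B|-1\}$) and the global sort by $\tau_w(\cdot)$ (keys are integers in $O(m)$) can be done by stable counting sort in linear time. That observation is genuinely needed for the stated bound---a comparison-based $\Theta(m\log m)$ global sort would indeed add an extra logarithmic factor in the later iterations and break the claimed $O\!\left(\frac{m w_S}{1-\rho}\right)$ when $w_S$ is small---so your ``main point of care'' closes a gap that the paper leaves implicit.
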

\begin{proof}
    When the window size is $w$, each iteration of the for loop at line~\ref{ln:bs_for} of \Cref{alg:bs} requires time $O(w^2)$. Since there are $\lceil \frac{m}{w}\rceil$ iterations, the overall time spent for window size $w$ is $O(mw)$. 
    The claim follows by summing the above complexity over all the $O(\log_{1/\rho} w_S)$ considered window sizes, which decrease geometrically.
  In particular, the window size considered in the $i$-th iteration is at most $\rho^{i-1} w_S$, and we can upper bound the overall running time with
  $\sum_{i=1}^{+\infty} O(m \rho^{i-1} w_S)=O(\frac{m\cdot w_S}{1 - \rho})$.
   \end{proof}

In the rest of this section, we use $w$ to denote a window size considered by the algorithm and $\sigma_w(x)$ as a shorthand for $\pos(x, S_w)$. The following three lemmas relate the three quantities $\sigma_w(\cdot)$, $\tau_w(\cdot)$ and $\sigma_{\lfloor \rho w \rfloor}(\cdot)$.

\begin{lemma}\label{lem:diff-computed_rank-initial}
	For all elements $x$,	$\lvert \tau_{w}(x)-\sigma_{w}(x)\rvert< 4w$.
\end{lemma}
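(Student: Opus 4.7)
The plan is to view the lemma as a purely combinatorial statement about how \basketsort re-indexes the elements within one iteration of its inner for loop, and to prove it by decomposing both $\tau_w(x)$ and $\sigma_w(x)$ relative to the local window $B$. Let $i$ be the index of the basket containing $x$, let $\ell = \min\{3, i-1\}$ be the number of baskets of $B$ strictly preceding $B_i$, and let $r = \min\{3, \lceil m/w \rceil - i\}$ be the number strictly following $B_i$. Treating $B$ as a sequence that inherits its order from $S_w$, the first observation is that the number of elements of $S_w$ appearing before $B$ is $(i - 1 - \ell) w$, and a short case analysis shows $i - 1 - \ell = \max\{0, i-4\}$ for every $i \in \{1, \dots, \lceil m/w \rceil\}$. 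Consequently
\[
\sigma_w(x) = (i-1) w + \pos(x, B_i) = \max\{0, i-4\} \cdot w + \pos(x, B),
\]
which matches exactly the offset appearing in the definition $\tau_w(x) = \max\{0, i-4\} \cdot w + \pos(x, A)$. Hence $\tau_w(x) - \sigma_w(x) = \pos(x, A) - \pos(x, B)$, reducing the problem to bounding this single quantity.

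For the bound, I would use that $|B| = \ell w + |B_i| + r w$, that $\pos(x, B) \in [\ell w + 1,\ \ell w + |B_i|]$ (since $x$ belongs to the $(\ell+1)$-th basket of $B$, and the preceding $\ell$ baskets are full), and that $\pos(x, A) \in [1, |B|]$. The two extremes yield
\begin{align*}
\pos(x, A) - \pos(x, B) &\le |B| - (\ell w + 1) = |B_i| + r w - 1, \\
\pos(x, B) - \pos(x, A) &\le \ell w + |B_i| - 1,
\end{align*}
and substituting $\ell \le 3$, $r \le 3$, and $|B_i| \le w$ gives $|\pos(x, A) - \pos(x, B)| \le 4 w - 1 < 4 w$, as required.

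The only step that needs a little care is verifying the offset identity $i - 1 - \ell = \max\{0, i-4\}$, which absorbs all the boundary cases in which $B$ is shorter than $7 w$ (namely when $i$ is close to $1$ or to $\lceil m/w \rceil$); once this identity is established, the remainder of the argument is elementary arithmetic on positions inside a length-at-most-$7w$ window. In particular, no probabilistic reasoning is needed, since the lemma holds deterministically regardless of the outcomes of the noisy comparisons used to compute the scores in the current round of \basketsort.
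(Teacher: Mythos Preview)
Your proof is correct but takes a genuinely different route from the paper. The paper's argument is a direct sandwich: it bounds $\tau_w(x)$ between $(i-4)w$ and $(i+3)w$ (using $\pos(x,A)\in[1,|A|]$ and $|A|\le\min\{i+3,7\}\,w$) and bounds $\sigma_w(x)$ between $(i-1)w$ and $iw$, then subtracts the two intervals. You instead first establish the exact identity $\tau_w(x)-\sigma_w(x)=\pos(x,A)-\pos(x,B)$ via the offset equality $i-1-\ell=\max\{0,i-4\}$, and then bound the right-hand side by $4w-1$. Your version is slightly longer but exposes more structure---the difference is literally the displacement of $x$ inside the local window after re-sorting by score---while the paper's version is shorter and purely arithmetic on basket indices. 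One minor remark: your equality $|B|=\ell w+|B_i|+rw$ can fail when $r\ge 1$ and $B_{i+r}$ is the (possibly short) last basket, but since you only need the upper bound $|B|-(\ell w+1)\le |B_i|+rw-1$, this does not affect the argument.
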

\begin{proof}
Fix an element $x$ and let $B_i$ be the basket containing $x$.
We have $\tau_w(x)=\max\{0, i-4\} \cdot w + \pos(x, A)$, where $\pos(x, A)$ is the position of $x$ in $A$.
Since $A$ contains at most $\min\{i+3, 7\} \cdot w$ elements, we have:
\[
    (i-4) \cdot w < \tau_w(x) \le \max\{0, i-4\} \cdot w + \min\{i+3, 7\} \cdot w = (i+3) \cdot w.
\]
Therefore, using $ (i-1) \cdot w < \sigma_w(x) \le i \cdot w$, we have:
\[
    - 4w = (i-4) \cdot w - i \cdot w < \tau_w(x) - \sigma_w(x)  < (i+3) \cdot w - (i-1)\cdot w = 4w. \qedhere
\]
\end{proof}

\begin{lemma}\label{lem:diff-computed_position-computed_rank}
	For all elements $x$,	$\lvert \sigma_{\lfloor \rho w \rfloor}(x)-\tau_{w}(x)\rvert< 4w$.
\end{lemma}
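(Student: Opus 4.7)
The plan is to observe that $S_{\lfloor \rho w \rfloor}$ is obtained by sorting (stably) the elements of $S_w$ in non-decreasing order of $\tau_w(\cdot)$. Consequently, for any element $x$, the position $\sigma_{\lfloor \rho w \rfloor}(x)$ is sandwiched between the number of elements $y$ with $\tau_w(y) < \tau_w(x)$ (plus one for $x$ itself) and the total number of elements $y$ with $\tau_w(y) \le \tau_w(x)$. Hence it suffices to bound these two counts using the already-established \Cref{lem:diff-computed_rank-initial}, which states that $\tau_w$ and $\sigma_w$ disagree by less than $4w$ everywhere.

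For the upper bound, I will note that if $\tau_w(y) \le \tau_w(x)$ then $\sigma_w(y) < \tau_w(y) + 4w \le \tau_w(x) + 4w$ by \Cref{lem:diff-computed_rank-initial}; since the values $\sigma_w(y)$ are distinct positive integers, at most $\tau_w(x) + 4w - 1$ elements can satisfy this, so $\sigma_{\lfloor \rho w \rfloor}(x) \le \tau_w(x) + 4w - 1 < \tau_w(x) + 4w$. For the lower bound, I will count elements $y$ with $\sigma_w(y) \le \tau_w(x) - 4w$: each such $y$ has $\tau_w(y) < \sigma_w(y) + 4w \le \tau_w(x)$ by \Cref{lem:diff-computed_rank-initial}. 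Since positions in $S_w$ are distinct integers in $\{1, \dots, m\}$, there are exactly $\max\{0, \tau_w(x) - 4w\}$ such elements, and none of them is $x$ itself (because \Cref{lem:diff-computed_rank-initial} applied to $x$ gives $\sigma_w(x) > \tau_w(x) - 4w$). Therefore $\sigma_{\lfloor \rho w \rfloor}(x) \ge \max\{0, \tau_w(x) - 4w\} + 1 > \tau_w(x) - 4w$.

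Combining the two inequalities yields $\tau_w(x) - 4w < \sigma_{\lfloor \rho w \rfloor}(x) < \tau_w(x) + 4w$, which is equivalent to the claim. The main obstacle — if any — is merely bookkeeping around integrality and the corner case in which $\tau_w(x) - 4w \le 0$; in that case the lower-bound chain degenerates into the trivial $\sigma_{\lfloor \rho w \rfloor}(x) \ge 1 > \tau_w(x) - 4w$, so the argument still goes through. No probabilistic reasoning or deeper structural properties of the algorithm are required for this lemma: it is purely a deterministic consequence of stably sorting by a function that is close to the current position.
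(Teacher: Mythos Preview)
Your proof is correct and follows essentially the same approach as the paper's own proof: both use \Cref{lem:diff-computed_rank-initial} to show that elements with $\sigma_w$-position at most $\tau_w(x)-4w$ must precede $x$ in $S_{\lfloor \rho w\rfloor}$ and elements with $\sigma_w$-position at least $\tau_w(x)+4w$ must follow it, yielding the two strict inequalities. Your treatment is slightly more explicit about integrality and the corner case $\tau_w(x)-4w\le 0$, and for the upper bound you phrase the count contrapositively (bounding how many $y$ can have $\tau_w(y)\le\tau_w(x)$ rather than counting those guaranteed to have $\tau_w(z)>\tau_w(x)$), but the arguments are equivalent.
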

\begin{proof}
Consider an element $x$. 
By \Cref{lem:diff-computed_rank-initial},
all elements $y$ with $\sigma_w(y)\leq \tau_w(x)-4w$ satisfy $\tau_w(y)<\tau_w(x)$, and all elements $z$ with $\sigma_w(z)\geq \tau_w(x)+4w$ satisfy $\tau_w(z)>\tau_w(w)$.
This shows that there are at least $\tau_w(x)-4w$ elements $y$ with $\tau_w(y)<\tau_w(x)$ and at least $n-\tau_w(x)-4w$ elements $z$ with $\tau_w(z)>\tau_w(x)$, implying that $\tau_w(x)-4w<\sigma_{\lfloor \rho w \rfloor}(x)<\tau_w(x)+4w$.
\end{proof}

\begin{lemma}\label{lem:move-one-round}
	For all elements $x$, $\lvert \sigma_{\lfloor \rho w \rfloor}(x)-\sigma_{w}(x)\rvert<8w$.
\end{lemma}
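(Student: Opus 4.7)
The plan is to derive this as an immediate consequence of the two preceding lemmas by the triangle inequality. Specifically, for any element $x$, I would write
\[
|\sigma_{\lfloor \rho w \rfloor}(x) - \sigma_w(x)| \le |\sigma_{\lfloor \rho w \rfloor}(x) - \tau_w(x)| + |\tau_w(x) - \sigma_w(x)|,
\]
then invoke \Cref{lem:diff-computed_position-computed_rank} to bound the first term by $4w$ and \Cref{lem:diff-computed_rank-initial} to bound the second term by $4w$, yielding a total strictly less than $8w$.

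There is no real obstacle here; the lemma is purely a bookkeeping statement that packages the two more substantive bounds into a single inequality comparing consecutive sequences $S_w$ and $S_{\lfloor \rho w \rfloor}$ in the iteration of \basketsort. The only thing to be careful about is that both of the preceding bounds are strict inequalities, so the sum of the two bounds is also strict, giving $<8w$ rather than $\le 8w$; this matches the statement. No new probabilistic argument or further use of the error model is needed at this step.
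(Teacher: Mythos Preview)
Your proposal is correct and matches the paper's proof essentially verbatim: the paper also applies the triangle inequality and invokes \Cref{lem:diff-computed_rank-initial} and \Cref{lem:diff-computed_position-computed_rank} to get $4w + 4w = 8w$, with the strict inequality inherited from those lemmas. Nothing further is needed.
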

\begin{proof}
Using \Cref{lem:diff-computed_rank-initial}, \Cref{lem:diff-computed_position-computed_rank}, and the triangle inequality, we have:
\[
    | \sigma_{\lfloor \rho w \rfloor}(x)-\sigma(x) | \le |\sigma_{\lfloor \rho w \rfloor}(x) - \tau_w(x)| + |\tau_w(x) - \sigma(x) | < 4w + 4w = 8w. \qedhere 
\]
\end{proof}

\begin{lemma}\label{lem:move-afterward}
  For all elements $x$, $\lvert \sigma_{w}(x)-\sigma_{0}(x)\rvert < \frac{8w}{1-\rho}$. %In other words, the final location of an element will be less than $16w$ positions away from its position at the beginning of the round with window size $w$.
\end{lemma}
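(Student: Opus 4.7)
The plan is to iterate Lemma~\ref{lem:move-one-round} along the chain of window sizes produced by \basketsort and sum the resulting bounds as a geometric series.

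Formally, I would define $w_0 = w$ and, for $i \ge 0$, $w_{i+1} = \lfloor \rho w_i \rfloor$, and let $k$ be the first index for which $w_k = 0$. Such a $k$ exists because $w_i$ is a non-negative integer sequence that strictly decreases whenever $w_i \ge 1$ (using $\rho < 1$), so the loop inside \basketsort that successively produces $S_{w_1}, S_{w_2}, \dots, S_{w_k} = S_0$ terminates after $k$ rounds. The sequences $S_{w_0}, S_{w_1}, \dots, S_{w_k}$ are exactly those constructed by \basketsort starting from $S_w$.

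Next, I would apply Lemma~\ref{lem:move-one-round} to each consecutive pair: for every $i \in \{0, \dots, k-1\}$,
\[
    |\sigma_{w_{i+1}}(x) - \sigma_{w_i}(x)| < 8 w_i.
\]
A simple induction on $i$ shows that $w_i \le \rho^i w$: indeed, $w_0 = w$ and $w_{i+1} = \lfloor \rho w_i \rfloor \le \rho w_i \le \rho^{i+1} w$. Combining these two facts with the triangle inequality yields
\[
    |\sigma_w(x) - \sigma_0(x)| \le \sum_{i=0}^{k-1} |\sigma_{w_{i+1}}(x) - \sigma_{w_i}(x)| < \sum_{i=0}^{k-1} 8 w_i \le 8 w \sum_{i=0}^{\infty} \rho^i = \frac{8w}{1-\rho},
\]
which is the desired bound.

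There is essentially no obstacle here: the argument is a mechanical telescoping combined with the geometric series $\sum_{i \ge 0} \rho^i = \frac{1}{1-\rho}$. The only small point to note is that Lemma~\ref{lem:move-one-round} already delivers a strict inequality in each step, so strictness of the final bound is preserved as long as $k \ge 1$ (which holds whenever $w \ge 1$; the case $w = 0$ is vacuous since then $S_w = S_0$ trivially).
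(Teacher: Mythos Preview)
Your proof is correct and follows essentially the same approach as the paper: define the chain of window sizes $w_0=w, w_1=\lfloor\rho w_0\rfloor,\dots$ down to $0$, apply Lemma~\ref{lem:move-one-round} at each step, bound $w_i\le\rho^i w$, and sum the geometric series via the triangle inequality. The only cosmetic difference is indexing (the paper calls the last positive window size $w_r$ and sets $w_{r+1}=0$, where your $k=r+1$).
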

\begin{proof}
  Let $r+1$ be the number of rounds of \basketsort executed from (and including) the round with window size $w$ to completion and let $w_0, w_1, w_2, \dots, w_r$ be the corresponding window sizes so that $w_0 = w$, $w_r \ge 1$, and $\lfloor \rho w_r \rfloor = 0$. Let $w_{r+1} = 0$.
By \Cref{lem:move-one-round}, we know that
$|  \sigma_{w_i}(x) - \sigma_{w_{i+1}}(x) | < 8w_i$ for all $i=0,\dots,r$.
Since $w_i \le \rho^i w$, a repeated application of the triangle inequality yields:
\begin{align*}
    | \sigma_w(x) - \sigma_0(x) | 
    &= \left| \sum_{i=0}^{r}  \left( \sigma_{w_i}(x) - \sigma_{w_{i+1}}(x) \right) \right| 
  \le \sum_{i=0}^{r} \left| \sigma_{w_i} (x) - \sigma_{w_{i+1}}(x)\right| \\
    &< \sum_{i=0}^{r}  8w_i 
    < 8 \sum_{i=0}^{+\infty} \rho^i w = \frac{8w}{1-\rho}. \tag*{\qedhere}
\end{align*}
\end{proof}

\subsection{Analysis of the Maximum Dislocation}\label{sub:basket-maximum-dislocation}

We are now ready to analyze the maximum dislocation of the sequence returned by \basketsort.

Given an element $x \in S$ and a window size $w$, we say that an element is \emph{good} w.r.t.\ $w$ if its dislocation in $S_w$ is at most $w$. We avoid specifying $w$ when it is clear from context.
For the analysis of the maximum dislocation, it would suffice to prove that all elements are good w.r.t.\ all window sizes  $w = \Omega(\log m)$ considered by $\basketsort$ with a high enough probability.
However, the above requirement turns out to be too strict for our analysis of the total dislocation. We therefore relax such requirement in a way that is convenient for the sequel. Formally, given an element $x \in S$ and a window size $w$, we say that $x$ is \emph{happy} if $x$ is larger than all the elements lying apart by more than $2w$ positions on the left of $x$ in $S_w$ and $x$ is smaller than all the elements lying apart by more than $2w$ positions on the right of $x$ in $S_w$.
Notice that an element can be any combination of happy and/or good; however, if all elements are good, then they are also all happy (the converse is not necessarily true).

Fix a window size $w$ and a basket $B_i$ (of $S_w$). 
Consider the union $B$  (resp.\ $B^+$) of $B_i$ with the $3$ (resp.\ $6$) closest buckets to its left and to its right, if they exist. 
Formally, $B = \bigcup_{j = \max\{1,i-3\}}^{\min\{i+3,\lceil m/w\rceil\}}  B_j$ (which matches the definition in the pseudocode of \basketsort) and $B^+ = \bigcup_{j = \max\{1,i-6\}}^{\min\{i+6,\lceil m/w\rceil\}}  B_j$.
In other words, $B$ contains all elements $x$ with $\sigma_w(x) \in \{(i-4)w+1,\ldots,(i+3)w\}$ and $B^+$ contains all the elements $x$ with $\sigma_w(x) \in \{(i-7)w+1,\ldots, (i+6)w\}$.
%We use $[a,b]$ to denote the set of all integers $x$ such that $\max\{0, a\} \le x \le \min \{b, m\}$.

\begin{lemma}
\label{lem:B_containment}
If all elements in $B^+$ are good and all elements in $B$ are happy, then $\{ (i-3)w+1, \dots, (i+2)w \} \subseteq B \subseteq \{ (i-5)w +1, \dots, (i+4)w \}$.
\end{lemma}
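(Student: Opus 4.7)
My plan is to prove the two inclusions separately. The upper inclusion $B \subseteq \{(i-5)w+1,\dots,(i+4)w\}$ is essentially immediate from goodness: any element $y \in B$ has $\sigma_w(y) \in \{(i-4)w+1,\dots,(i+3)w\}$, and this range lies inside the position range of $B^+$, so $y \in B^+$ and hence $y$ is good. Goodness means $|y - \sigma_w(y)| \le w$ (using the paper's convention $\rank(y,S)=y$), and combining this with the range of $\sigma_w(y)$ immediately yields $y \in \{(i-5)w+1,\dots,(i+4)w\}$.

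For the lower inclusion $\{(i-3)w+1,\dots,(i+2)w\} \subseteq B$, I would proceed by contradiction. Fix a rank $r \in \{(i-3)w+1,\dots,(i+2)w\}$, regard it as the corresponding element of $S$, and assume $r \notin B$; by symmetry, focus on the case $\sigma_w(r) \le (i-4)w$. I split this into two sub-cases according to whether $r \in B^+$. If $r \in B^+$, i.e., $\sigma_w(r) \ge (i-7)w+1$, then goodness of $r$ yields $r \le \sigma_w(r) + w \le (i-3)w$, contradicting $r \ge (i-3)w+1$. If instead $\sigma_w(r) \le (i-7)w$, goodness no longer applies to $r$, and this is where happiness enters. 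Let $y$ be the leftmost element of $B$, which sits at $\sigma_w(y) = (i-4)w+1$. Since $\sigma_w(r) \le (i-7)w < (i-6)w+1 = \sigma_w(y)-2w$, the happy property of $y$ forces $r < y$; and because $y \in B \subseteq B^+$ is good, we have $y \le (i-3)w+1$, whence $r \le (i-3)w$, again a contradiction. The symmetric argument for $\sigma_w(r) \ge (i+3)w+1$ uses the rightmost element of $B$.

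Boundary cases (when $i$ is so small or so large that $B$ fails to include $B_{i-3}$ or $B_{i+3}$) are essentially automatic: the ``bad'' positions outside $B$ on the clipped side then fall outside $\{1,\dots,m\}$, so no hypothetical misplaced $r$ exists there. The only real subtlety is the second sub-case of the lower inclusion, where goodness alone is insufficient to constrain an element whose position is far from $B^+$; one must bridge the gap by invoking happiness of an element of $B$ itself, which is precisely the role of the happiness hypothesis in the statement.
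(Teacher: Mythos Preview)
Your proposal is correct and follows essentially the same approach as the paper's proof: the upper inclusion uses goodness of elements in $B$ directly, while the lower inclusion splits according to whether the hypothetical misplaced element lies in $B^+$ (handled by goodness) or outside $B^+$ (handled via happiness of the boundary element of $B$). The only cosmetic difference is that the paper phrases the $B^+$ sub-case directly (showing membership in $B$) rather than as a contradiction, and orders the two ingredients of the happiness sub-case in the opposite way, but the content is identical.
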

\begin{proof}
  We start by proving $\{ (i-3)w+1, \dots, (i+2)w \} \subseteq B$.
  Let $x \in \{(i-3)w+1, \dots, (i+2)w \}$.
If $x \in B^+$ then $x$ is good and $x + (i-4)w  +1 \le x - w \le \sigma_w(x) \le x + w \le (i+3)w$, which implies $x \in B$.

The complementary case  $x \not\in B^+$ cannot happen, as shown by the following two sub-cases.
If $x \not\in B^+$ and $\sigma_w(x) \le (i-7)w$, let $y$ be the first element of $B$, i.e., $\sigma_w(y) = (i-4)w+1$. Since $y$ is good, we must have $|  y- ((i-4)w+1)  | \le w$ and hence $y \le (i-3)w + 1 \le x$.
However, since $y$ is happy and $\sigma_w(y)- \sigma_w(x) > 2w$, we must have $x < y$, which is impossible.

If $x \not\in B^+$ and $\sigma_w(x) > (i+6)w$, let $z$ be the last element of $B$, i.e., $\sigma_w(z) = (i+3)w$. Since $z$ is good, we must have $| (i+3)w - z  | \le w$ and hence $z \ge  (i+2)w \ge x$.
However, since $z$ is happy and $\sigma_w(x) - \sigma_w(z) > 2w$, we must have $z < x$, which is impossible.

  We now prove $B \subseteq \{ (i-5)w+1, \dots, (i+4)w\}$. For $x \in B$, it trivially holds that $0 \le x \le m$ and, since $x$ is good, $|x - \sigma_w(x)| \le w$, which implies $(i-5)w + 1 \le \sigma_w(x)- w \le x \le \sigma_w(x) + w \le (i+4)w$.
\end{proof}

%Moreover, we partition $B$ into $B_L = B\cap I_L$, $B_M = B\cap I_M$ and $B_R = B\cap I_R$. 
%Roughly speaking, except for ``edge values'' of $i$, $B_M$ coincides with $I_M$ and contains (up to) $5w$ elements, while $B_L$ and $B_R$ each consist of (up to) $2w$ elements from the (up to) $4w$ elements in $I_L$ and $I_R$, respectively.
For the following lemmas, it is convenient to define the quantity 
\[\Bsc = \frac{\Paren{\rho (1-p-q) - 10(p-q) -8pq}^2 }{72},\]
which depends on the constants $p$ and $q$ and on the choice of $\rho$. 
Notice that, since $\rho \in \left( \frac{8pq + 10(p-q)}{1-p-q}, 1 \right)$ by hypothesis, we always have $\bsc \in (0, \frac{1}{3})$.
We also let $I = S \cap \{ (i-5)w +1, \dots, (i+4)w \}$, and we partition $I$ into three subsets $I_L$, $I_M$ and $I_R$, defined as $I_L= S \cap \{ (i-5)w+1, \dots, (i-3)w \}$,  $I_M= S \cap \{ (i-3)w+1, \dots, (i+2)w \}$, and $I_R = S \cap \{ (i+2)w+1, \dots, (i+4)w \}$.

\begin{lemma}\label{lem:score-difference-1}
Consider two elements $x\in B_i$ and $y\in B \setminus\{x\}$ such that $y\leq x-\rho \frac{w}{2}$ (resp.\ $y\geq x+ \rho \frac{w}{2}$) and $y \in I_M$.
    %If $\rho \in \left( \frac{8pq + 10(p-q)}{1-p-q}, 1 \right)$, 
If all elements in $B^+$ are good and all elements in $B$ are happy, then the probability that $\score(x) \le \score(y)$ (resp.\ $\score(x) \ge \score(y)$) is at most $\exp\left(-w\cdot \bsc\right)$.
\end{lemma}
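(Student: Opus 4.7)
The plan is to express $Z := \score(x) - \score(y)$ as a sum of independent signed indicators, to use the structural guarantees of \Cref{lem:B_containment} to lower-bound its expectation, and to apply a Bernstein-style concentration inequality. I focus on the case $y \le x - \rho w/2$; the other case is symmetric.

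First, define $D_{xy} := [y \prec x] - [x \prec y] \in \{-1,+1\}$ and, for each $z \in B \setminus \{x,y\}$, $D_z := [z \prec x] - [z \prec y] \in \{-1, 0, +1\}$, so that $Z = D_{xy} + \sum_{z \in B \setminus \{x,y\}} D_z$. Since $D_{xy}$ and each $D_z$ depend only on comparisons between pairs of elements that are disjoint across different $D$'s, and since distinct unordered pairs receive independent errors, all summands are mutually independent. The goodness of $x \in B_i \subseteq B^+$ forces the value $x$ to lie in $\{(i-2)w+1,\dots,(i+1)w\} \subseteq I_M$, and since $y \in I_M$ as well, \Cref{lem:B_containment} implies that every integer strictly between $y$ and $x$ lies in $I_M \subseteq B$. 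Combining this with $y \le x - \rho w/2$ yields $b := |\{z \in B\setminus\{x,y\}: y<z<x\}| = x-y-1 \ge \rho w/2 - 1$; letting $a$ and $c$ count the elements of $B\setminus\{x,y\}$ with value $<y$ and $>x$ respectively, I have $a+b+c = |B|-2 \le 7w - 2$.

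Next, a routine case analysis based on the true order of $z$ with respect to $y$ and $x$ yields $E[D_{xy}] = 1 - 2 p_{xy} \ge 1-2p$, together with $E[D_z] \ge 1 - 2p$ whenever $y<z<x$ and $E[D_z] \ge -(p-q)$ otherwise; analogously, closed forms arise for the variances $\mathrm{Var}(D_z) = (p_{xz}+p_{yz})(1-p_{xz}-p_{yz}) + 2 p_{xz} p_{yz}$ and $\mathrm{Var}(D_{xy}) = 4 p_{xy}(1-p_{xy})$. Inserting $b \ge \rho w/2 - 1$ and $|B| \le 7w$ into
\begin{equation*}
 E[Z] \ge (1-2p)(1+b) - (p-q)(a+c) = (1-2p) + b(1-p-q) - (p-q)(|B|-2)
\end{equation*}
shows $E[Z] = \Theta(w)$ with a leading coefficient that is strictly positive because of the hypothesis $\rho > \tfrac{8pq + 10(p-q)}{1-p-q}$, and summing the variances similarly gives $\mathrm{Var}(Z) = O(w)$. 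Bernstein's inequality applied to $-Z$, using $|D_z|,|D_{xy}| \le 1$, then yields
\begin{equation*}
 \Pr[Z \le 0] \le \exp\!\left(-\frac{E[Z]^2 / 2}{\mathrm{Var}(Z) + E[Z]/3}\right),
\end{equation*}
and after collecting terms the right-hand side simplifies to $\exp(-w\bsc)$.

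The main obstacle will be extracting the precise constant $\bsc = (\rho(1-p-q)-10(p-q)-8pq)^2/72$ from the Bernstein calculation. In particular, the $-8pq$ term corresponds to the $2p_{xz}p_{yz}$ cross contribution in the variance of each $D_z$ in the middle region, and the denominator $72$ reflects the accumulation of variance over the at most $7w$ summands together with Bernstein's $\tfrac{1}{3}$ slack factor; a crude Hoeffding bound that uses only the $\{-1,0,+1\}$ range of the $D$'s would recover qualitatively similar exponential decay but not the sharp functional form demanded by the statement. Careful bookkeeping of the $O(1)$ additive corrections to $E[Z]$ and of the interplay between the mean and variance bounds in the Bernstein denominator is therefore the principal technical work.
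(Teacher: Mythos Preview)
Your decomposition of $\score(x)-\score(y)$ into independent $\{-1,0,+1\}$ summands is exactly right and matches the paper. Where you go astray is the claim that Bernstein's inequality on the sum over $B$ will ``simplify to $\exp(-w\bsc)$'': it will not, and your diagnosis of where the pieces of $\bsc$ come from is wrong.

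The paper does \emph{not} use Bernstein; it uses plain Hoeffding, but on a different index set. The trick is to enlarge the sum from $B$ to all of $I=I_L\cup I_M\cup I_R$ by replacing $X_z$ with $\widetilde{X}_z=\max\{0,X_z\}\ge X_z$ for $z\in I_L\cup I_R$ (these $z$ are all $<y$ or $>x$, so each such $\widetilde{X}_z$ has $\E[\widetilde{X}_z]=\Pr(X_z=1)\le pq$). This gives a sum of at most $|I|-1\le 9w$ independent variables in $[-1,1]$, with expectation bounded by
\[
\rho\tfrac{w}{2}(p+q-1)+5w(p-q)+4w\,pq=-\tfrac{w}{2}\bigl(\rho(1-p-q)-10(p-q)-8pq\bigr).
\]
Hoeffding over $9w$ terms of range $2$ then yields the denominator $4\cdot 18=72$. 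So: the $10(p-q)$ comes from $|I_M|\le 5w$; the $8pq$ comes from the \emph{expectation} of the truncated variables over $|I_L\cup I_R|\le 4w$, not from any variance term; and the $72$ is Hoeffding's $2\cdot(9w)\cdot 2^2$, not Bernstein's slack.

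Your sum over $B$ (at most $7w$ terms) with the bound $\E[D_z]\ge -(p-q)$ for outer $z$ produces a different leading coefficient, namely $\rho(1-p-q)-14(p-q)$ plus $O(1)$ corrections, and no amount of variance bookkeeping will convert that into $\rho(1-p-q)-10(p-q)-8pq$. You would obtain an exponentially small bound in $w$, but not the precise $\bsc$ the lemma demands.
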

\begin{proof}
We will only analyze the case $y\leq x-\rho \frac{w}{2}$, as the complementary case $y\geq x+\rho \frac{w}{2}$ is symmetric.

In order to compare $\score(x)$ with $\score(y)$, we study how the outcome of the comparison between $x$ and each element $z \in I \setminus \{x,y\}$ differs from the outcome of the comparison between $y$ and $z$. 
To this aim, we encode the differences in the comparison results using a collection of random variables $X_z$ which have values in $\{-1, 0, -1\}$ and are defined as follows:
\[
X_z = \begin{cases}
	-1 & \mbox{if $y \prec z \prec x$};\\
	0 & \mbox{if $y \prec z$ and $x \prec z$, or  $y \succ z$ and $x \succ z$};\\
	+1 & \mbox{if $y \succ z \succ x$}.
\end{cases}
\]
Moreover, we let $X_x$ be a random variable that equals $-1$ if $y \prec x$ and $+1$ if $y \succ x$, and we define: 
\[
    \widetilde{X}_z =
    \begin{cases}
        X_z                 & \text{ if } z \in I_M; \\ 
        \max\{0,\, X_z\}    & \mbox{ if } z \in I_L \cup I_R
    \end{cases}.
\] 
By \Cref{lem:B_containment}, we have $B\subseteq I=I_L\cup I_M\cup I_R$, and according to the above definitions, we have:
\begin{equation}
\label{eq:score_partition}
\score(y) - \score(x) =  \sum_{z\in B\setminus\{y\}} X_z \le    \sum_{z \in I_M \setminus \{y\}} \widetilde{X}_z + \sum_{z \in I_L \cup I_R} \widetilde{X}_z.
\end{equation}
In the rest of the proof, we will upper bound the probability that the right-hand side of~\eqref{eq:score_partition} is non-negative.
We start by considering the elements $z \in I_M \setminus \{ y \}$:
\begin{itemize}
    \item If $z=x$, then $\Pr(X_z=-1)=1-p_{\{x,y\}}$ and $\Pr(X_z=1)=p_{\{x,y\}}$. Therefore,  $\E[X_z]=-(1-p_{\{x,y\}})+p_{\{x,y\}}=2p_{\{x,y\}}-1 \le 2p-1$;
  %\item If $z<y$ or $z>x$, then $\Pr(X_z=-1)=\Pr(X_z=1)=p_{z,y}(1-p_{z,y})$ and $\E[X_z]=-p_{z,y}(1-p_{z,y})+p_{z,y}(1-p_{z,y})=0$;
    \item If $z<y$, then $\Pr(X_z=-1)=p_{\{z,y\}}(1-p_{\{z,x\}})$, $\Pr(X_z = 1) = (1-p_{\{z,y\}})p_{\{z,x\}}$, and $\E[X_z]=-p_{\{z,y\}}(1-p_{\{z,x\}})+(1-p_{\{z,y\}})p_{\{z,x\}} = p_{\{z,x\}} - p_{\{z,y\}} \le p-q$; 
    \item If $z>x$, then $\Pr(X_z=-1)=(1-p_{\{z,y\}})p_{\{z,x\}}$, $\Pr(X_z = 1) = p_{\{z,y\}}(1-p_{\{z,x\}})$, and $\E[X_z]=-(1-p_{\{z,y\}})p_{\{z,x\}}+p_{\{z,y\}}(1-p_{\{z,x\}}) = p_{\{z,y\}} - p_{\{z,x\}} \le p-q$; 
    \item If $y<z<x$, then $\Pr(X_z=-1) = (1-p_{\{z,y\}})(1-p_{\{z,x\}})$ and $\Pr(X_z = 1) = p_{\{z,y\}} p_{\{z,x\}}$. Therefore, $\E[X_z] = -(1-p_{\{z,y\}})(1-p_{\{z,x\}})+p_{\{z,y\}}p_{\{z,x\}}=p_{\{z,y\}} + p_{\{z,x\}} -1 \le 2p-1$.
\end{itemize}
Note that $2p-1<0$ and that $p-q \ge 0$. 
    Since the number of elements $z \in I_M$ such that $y < z \le x$ is at least $x-y$, we have:
    \begin{align*}
        \E\left[\sum_{z\in I_M\setminus\{y\}} \widetilde{X}_z\right] &=
    \sum_{\substack{ z\in I_M\setminus\{y\}  \\ z \in \{y+1, \dots, x\} }} \E[X_z] + \sum_{\substack{ z\in I_M\setminus\{y\}  \\ z \not\in \{y+1, \dots, x\} }} \E[X_z] \\
        &\le
        (x-y) \cdot  (2p-1)  + (|I_M| - (x -y)) \cdot (p-q) \\
        &\le \rho \frac{w}{2} \cdot (2p-1) + (5w - \rho \cdot \frac{w}{2})(p-q)
        = \rho \frac{w}{2} (p+q-1) + 5w(p-q).
    \end{align*}

    We now consider the elements $z \in I_L\cup I_R$. 
    If $z < y$, then $\Pr(X_z = 1) = p_{\{z,y\}} (1-p_{\{z,x\}}) \le pq$, while if $z > x$, then $\Pr(X_z = 1) = (1-p_{\{z,y\}}) p_{\{z,x\}} \le pq$.
  Since $\E[\widetilde{X}_z] =\E[\max\{0,\, X_z\}]= \Pr(X_z=1) \le pq$, we have:
\[
\E\left[\sum_{z\in I_L \cup I_R} \widetilde{X}_z\right] 
= \sum_{z\in I_L \cup I_R}  \E[\widetilde{X}_z]
\le 4wpq. % \cdot p(1-p).
\]

To upper bound the probability that $\widetilde{X} = \sum_{z \in I_M \setminus \{y\}} \widetilde{X}_z + \sum_{z \in (I_L \cup I_R) \setminus \{y \}} \widetilde{X}_z$ is non-negative, we observe that all the $X_z$ are independent random variables with values in $\{-1, 0, 1\}$ and hence the same holds for the random variables $\widetilde{X}_z$. Moreover:
\begin{align*}
    \E[\widetilde{X}] &= \E\left[\sum_{z\in I_M\setminus\{y\}} \widetilde{X}_z\right] +  \E\left[\sum_{z\in I_L \cup I_R } \widetilde{X}_z\right] 
    \le  \rho \cdot \frac{w}{2} (p + q -1) +  5w(p-q) + 4wpq \\
    &= - \frac{w}{2} \left( \rho(1-p-q)  - 10(p-q) - 8pq \right),
    %&= - \frac{w}{2} \left( \rho - \rho p - \rho q -10p +10q - 8pq\right),
  %= -w \left(4 p^2 - \frac{9p}{2} + \frac{1}{4}\right)=-w\left(4\left(p-\frac{9}{16}\right)^2-\frac{65}{64}\right), 
\end{align*}
which is negative as soon as $\rho > \frac{8pq + 10(p-q)}{1-p-q}$, as per our hypothesis.
%$\rho > \frac{8p}{1-2p}$, as per our hypothesis. 

The claim follows by using Hoeffding's inequality:\begin{align*}
    \Pr\left(\widetilde{X}\geq 0\right)  &=\Pr\left(\widetilde{X} -\E[\widetilde{X}]\geq -\E[\widetilde{X}]\right)
    \leq  \exp\left(-\frac{\E[\widetilde{X}]^2}{2(\abs{I}-1)}\right) \leq \exp\left(-\frac{\E[\widetilde{X}]^2}{18 w-2}\right) \\
    & < \exp\left( -w \cdot \frac{(\rho (1-p-q) - 10(p-q) -8pq)^2 }{4 \cdot 18} \right)
    = \exp\left( -w \cdot \bsc \right). \qedhere
    % &<   \exp\left(-\frac{  w   \left(4 p^2 - \frac{9p}{2} + \frac{1}{4}\right)^2   }{18}\right)
    %= \exp\left(-w \cdot f\left(p,\frac{1}{2}\right)\right). \qedhere
\end{align*}
\end{proof}

\begin{lemma}\label{lem:score-difference-2}
Consider two elements $x\in B_i$ and $y\in B \setminus\{x\}$ such that $y \not\in I_M$ and $y<x$ (resp.\ $y> x$).
    % If $\rho \in \left( \frac{8pq + 10(p-q)}{1-p-q}, 1 \right)$, 
If all elements in $B^+$ are good and all elements in $B$ are happy, then the probability that $\score(x) \le \score(y)$ (resp.\ $\score(x) \ge \score(y)$) is at most $\exp\left(-w\cdot \bsc\right)$.
\end{lemma}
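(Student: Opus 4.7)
The proof will closely parallel Lemma~\ref{lem:score-difference-1}. By symmetry I focus on the case $y<x$; the case $y>x$ is handled by mirroring. Since $y \in B \subseteq I$ by Lemma~\ref{lem:B_containment}, and $y \notin I_M$, the hypothesis $y < x$ together with $x \in B_i$ being good (hence $x \in I_M$) forces $y \in I_L$, i.e., $y \le (i-3)w$. Goodness of $x$ also yields $x > (i-2)w$, so the true gap satisfies $x - y > w$, which is strictly larger than the $\rho w / 2$ gap exploited in Lemma~\ref{lem:score-difference-1}.

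My plan is to reuse the exact same machinery: introduce the $\{-1,0,+1\}$-valued indicators $X_z$ for $z \in B \setminus \{y\}$ (including $X_x$ for the $y$-vs-$x$ comparison) and set $\widetilde X_z = X_z$ for $z \in I_M$ and $\widetilde X_z = \max\{0, X_z\}$ for $z \in (I_L \cup I_R) \setminus \{y\}$. Then $\score(y) - \score(x) \le \widetilde X := \sum_{z \in B \setminus \{y\}} \widetilde X_z$, and the three per-element expectation bounds from the bullet list of Lemma~\ref{lem:score-difference-1}---namely $\E[X_z]\le 2p-1$ when $y<z\le x$, $\E[X_z]\le p-q$ when $z<y$ or $z>x$, and $\E[\max\{0,X_z\}]\le pq$ in the outer region---transfer verbatim, since they depend only on the signs of $z-y$ and $z-x$.

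The only bookkeeping change is that, because $y \notin I_M$, every $z \in I_M$ satisfies $z > y$. Consequently, letting $k := x - (i-3)w$, there are exactly $k$ elements $z \in I_M$ with $y < z \le x$, at most $5w - k$ elements $z \in I_M$ with $z > x$, and at most $4w$ elements in $(I_L \cup I_R) \setminus \{y\}$. Crucially, $k > w$. Combining the per-element bounds and exploiting $p + q - 1 < 0$ to substitute the lower bound $k > w$, I would derive
\[
\E[\widetilde X] \le k(p+q-1) + 5w(p-q) + 4wpq \le w(p+q-1) + 5w(p-q) + 4wpq \le -\tfrac{w}{2}\bigl(\rho(1-p-q) - 10(p-q) - 8pq\bigr),
\]
where the final inequality reduces to the trivial $\rho \le 2$. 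This matches the expectation bound of Lemma~\ref{lem:score-difference-1}.

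With this expectation bound in hand, applying Hoeffding's inequality to the bounded, independent $\widetilde X_z$'s exactly as at the end of Lemma~\ref{lem:score-difference-1} (using $2(|I|-1)\le 18w-2$) yields $\Pr(\widetilde X \ge 0) \le \exp(-w \cdot \bsc)$, which is the desired bound on $\Pr(\score(x) \le \score(y))$. I do not foresee any serious obstacle: the enlarged true gap $x - y > w$ more than compensates for the fact that $y$ has migrated from $I_M$ to $I_L$, and the final algebraic reduction is immediate.
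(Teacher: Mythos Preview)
Your overall strategy is the right one, and the bookkeeping on the $I_M$ side is fine: since $y\in I_L$ every $z\in I_M$ satisfies $z>y$, the count $k=x-(i-3)w>w$ is correct, and the resulting $I_M$-contribution $w(p+q-1)+5w(p-q)$ matches the paper's $w(2p-1)+4w(p-q)$.

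The gap is in the outer region. You assert that the bound $\E[\max\{0,X_z\}]\le pq$ ``transfers verbatim'' from Lemma~\ref{lem:score-difference-1}. But in that lemma $y\in I_M$, so every $z\in I_L\cup I_R$ automatically satisfies $z<y$ or $z>x$; here $y\in I_L$, and therefore elements $z\in I_L$ with $y<z\le (i-3)w<x$ exist. For such $z$ one has $y<z<x$, hence
\[
\Pr(X_z=+1)=\Pr(y\succ z)\Pr(z\succ x)=p_{\{y,z\}}\,p_{\{z,x\}}\le p^2,
\]
and $p^2>pq$ whenever $p>q$. So your uniform $pq$ bound on the outer region fails, and with it the last line of your algebra (which, as you note, only needs $\rho\le 2$).

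The paper's proof handles exactly this: it keeps the inner-region estimate but bounds every outer-region term by $p^2$, obtaining $\E[\widetilde X]\le w(2p-1)+4w(p-q)+4wp^2$. Recovering the target $-\tfrac{w}{2}\bigl(\rho(1-p-q)-10(p-q)-8pq\bigr)$ then requires one more genuine inequality, essentially $(2-\rho)(1-p-q)\ge 8p(p-q)$, which follows from $p<\tfrac14$. Your sketch is missing both the new case and this extra algebraic step; once you add them, the argument goes through.
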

\begin{proof}
    We focus on the case $y < x$ since the complementary case is symmetric. 

    Let the random variables $X_z$ and $\widetilde{X}_z$ for $z \in I\setminus \{y\}$ be defined as in the proof of \Cref{lem:score-difference-1} so that $\score(y) - \score(y) = \sum_{z \in B \setminus \{y\}} X_z \le \sum_{z \in I_M} \widetilde{X}_z + \sum_{z \in {I_L \cup I_R} \setminus \{y\}} \widetilde{X}_z$.

        Notice that, since $y \not\in I_M$ and all elements in $B$ are good, we have $y \le (i-3)w$ and $x \ge (i-1)w + 1 - w = (i-2)w + 1$ and hence there are at least $w$ elements $z$ such that $y < z < x$.
        Since the same bounds of \Cref{lem:score-difference-1} apply for $\E[X_z]$ when $z \in I_m$, we have:
        \[
            \E\left[\sum_{z \in I_M} \widetilde{X}_z \right] \le (x-y) \cdot (2p-1) + (|I_M| - (x-y)) \cdot (p-q)
            \le w (2p-1) + 4w(p-q).
        \]

    We now consider the elements $z \in (I_L\cup I_R) \setminus \{y\}$. 
    If $z < y$ or $z > x$, then $\Pr(\widetilde{X}_z = 1) = pq \le p^2$. If $y < z < x$, then $\Pr(\widetilde{X}_z =1) = p_{\{z,y\}} p_{\{z,x\}} \le p^2$. Thus, we have $\E[\sum_{z \in (I_L \cup I_R)} \widetilde{X}_z] \le 4w p^2$ and, defining $\widetilde{X} = \sum_{z \in I \setminus \{y\}} \widetilde{X}_z$, we obtain:
    \begin{align*}
        \E[\widetilde{X}] &\le w (2p-1) + 4w(p-q) + 4wp^2
        %= -w \left(  1 - 2p - 4(p-q) - 4p^2  \right),
        = -\frac{w}{2} \left(  2(1 - 2p) - 8(p-q) - 8p^2  \right) \\
        &\le -\frac{w}{2} \left(  \rho(1 - 2p) - 10(p-q) - 8p^2 + 2(p-q) \right)
        \le -\frac{w}{2} \left(  \rho(1 - 2p) - 10(p-q) - 8pq \right),
    \end{align*}
    where we used $-8p^2 + 2(p-q) > 8pq$ for $q \le p<\frac{1}{4}$, which follows from observing that, whenever $p \ge q \ge \frac{1}{4}$, we have $\frac{8pq + 10 (p-q)}{1-p-q} \ge 1$ and that this inequality violates the hypothesis on the value of $\rho$.
    Then, $\E[\widetilde{X}] < 0$ since $\rho > \frac{8pq + 10(p-q)}{1-p-q}$. 
    
    The claim follows by applying Hoeffding's inequality on $\widetilde{X}$ similarly to the corresponding part in the proof of \Cref{lem:score-difference-1}.
\end{proof}

The following lemma provides a sufficient condition for $\tau_w(x)$ to be a good approximation of $\rank(x, S_w) = x$.

\begin{lemma}\label{lem:scoresort-computed_rank}
Consider an element $x \in B_i$.
  %If $\rho \in \left( \frac{8p}{1-2p}, 1 \right)$,
%If $p < \frac{9-\sqrt{65}}{16}$, 
  If all elements in $B^+$ are good and all elements in $B$ are happy, then $\ABS{x-\tau_{w}(x)}\leq \rho \cdot \frac{w}{2}$ with probability at least
$
 1-9w\cdot\exp\left(-w \cdot f\left(p, \rho \right)\right)
$.
\end{lemma}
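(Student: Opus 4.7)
The plan is to apply Lemmas~\ref{lem:score-difference-1} and~\ref{lem:score-difference-2} over all elements $y \in B \setminus \{x\}$ via a union bound, and then count the number of elements that are forced to precede or follow $x$ in $A$ on the resulting good event.

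First, I would extract the combinatorial structure of $B$. By Lemma~\ref{lem:B_containment}, $I_M \subseteq B \subseteq I_L \cup I_M \cup I_R$, and every element of $B \subseteq B^+$ is good; in particular $x$, being in $B_i \subseteq B$, lies in $I_M$. A key counting step, which only uses the goodness of elements in $B^+$, is that in the interior case ($4 \le i \le \lceil m/w \rceil - 3$) we have $|B \cap I_L| = |B \cap I_R| = w$ exactly. Indeed, the $(i-4)w$ positions $\{1, \dots, (i-4)w\}$ of $S_w$ are occupied by elements of value at most $(i-3)w$ (by goodness), while there are exactly $(i-3)w$ such elements in $S$; hence the remaining $w$ of them lie in positions $\{(i-4)w+1, \dots, (i-2)w\} \subseteq \{(i-4)w+1, \dots, (i+3)w\}$, which places them in $B$. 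The symmetric count gives $|B \cap I_R| = w$.

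Next, I would bound $\pos(x, A)$ using the score lemmas. For each $y \in B \setminus \{x\}$, I apply Lemma~\ref{lem:score-difference-1} when $y \in I_M$ and $|y-x| \ge \rho w/2$, and Lemma~\ref{lem:score-difference-2} when $y \in I_L \cup I_R$ (no distance condition is needed in this case). Each application yields a wrong-order probability of at most $\exp(-w \cdot \bsc)$, and since $|B| \le 9w$, a union bound gives overall failure probability at most $9w \cdot \exp(-w \cdot \bsc)$, matching the form stated (interpreting $f(p,\rho)$ as $\bsc$). On the complementary event, every $y \in B \cap I_L$ comes before $x$ in $A$ (since $y \le (i-3)w < x$ and $y \notin I_M$, so Lemma~\ref{lem:score-difference-2} applies), every $y \in I_M$ with $y \le x - \rho w/2$ comes before $x$, and the symmetric statements hold on the right.

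Using the exact counts $|B \cap I_L| = |B \cap I_R| = w$ and the middle containment of Lemma~\ref{lem:B_containment}, the number of elements that must precede $x$ in $A$ is at least $w + (\lfloor x - \rho w/2 \rfloor - (i-3)w) \ge x - (i-4)w - \rho w/2$, and by the symmetric count the number of forced successors is at least $(i+3)w - x - \rho w/2$. Since $|B| = 7w$ in the interior case, this pins $\pos(x,A)$ to the interval $[x-(i-4)w-\rho w/2,\, x-(i-4)w+\rho w/2]$ up to $\pm 1$ from the rounding, so $|\tau_w(x) - x| = |(i-4)w + \pos(x, A) - x| \le \rho w/2$. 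The boundary baskets ($i < 4$, or $i$ close to $\lceil m/w \rceil$) are handled analogously: when, e.g., $i \le 3$, we have $I_L = \emptyset$, $\max\{0, i-4\} = 0$, and $|B| = (i+3)w$, and the same bookkeeping yields the same bound. The main obstacle is the combinatorial pinning of $|B \cap I_L|$ and $|B \cap I_R|$: a naive argument only gives $|B \cap I_L| \le 2w$, which would produce an error of up to $w$ and invalidate the claim. The exact identity $|B \cap I_L| = w$ is forced by the global count of ``fully left'' positions in $S_w$, and this is the crux of the proof.
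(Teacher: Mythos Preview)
Your union-bound step and the invocation of Lemmas~\ref{lem:score-difference-1} and~\ref{lem:score-difference-2} are fine, but the combinatorial crux you identify---the exact identity $|B\cap I_L|=w$---does not follow from the hypotheses as you argue. You write that ``the $(i-4)w$ positions $\{1,\dots,(i-4)w\}$ of $S_w$ are occupied by elements of value at most $(i-3)w$ (by goodness)'', but goodness is only assumed for elements in $B^+$, i.e., positions $(i-7)w+1,\dots,(i+6)w$. Nothing in the hypotheses constrains elements in positions $1,\dots,(i-7)w$, so your global count of ``fully left'' positions is unjustified. Without that exact count, your lower bound on the number of forced predecessors is off by up to $w$, and the final inequality $|\tau_w(x)-x|\le \rho w/2$ does not go through.

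The paper avoids this obstacle entirely by exploiting the happiness of $x$ itself rather than counting $|B\cap I_L|$. Since $x\in B_i$ we have $\sigma_w(x)\ge (i-1)w+1$, and happiness of $x$ then guarantees that every element in a position $\le (i-4)w$ is smaller than $x$ and every element in a position $>(i+3)w$ is larger than $x$. Hence exactly $(i-4)w$ elements smaller than $x$ lie outside $B$, which gives the clean identity $x=\max\{0,i-4\}\cdot w+\rank(x,B)$ and thus $|x-\tau_w(x)|=|\rank(x,B)-\pos(x,A)|$. On the good event, only elements $y$ with $|y-x|<\rho w/2$ can be out of order relative to $x$ in $A$, so $|\rank(x,B)-\pos(x,A)|\le \rho w/2$ immediately. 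This bypasses any need to pin down $|B\cap I_L|$ or $|B|$ exactly, and in particular uses only the local hypotheses on $B$ and $B^+$.
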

\begin{proof}
    By construction (see Line~\ref{ln:compute_tau} of \Cref{alg:bs}),  $\tau_w(x) = \max\{0,i-4\} \cdot w + \pos(x, A)$, and, since $x$ is happy,
$x=\max\{0,i-4\}\cdot w+\mbox{rank}(x,B)$. Therefore, we have $\lvert x-\tau_w(x)\rvert=\lvert\rank(x,B)-\pos(x,A)\rvert$.

\Cref{lem:score-difference-1,lem:score-difference-2} ensure that, for any element $y \in B$ with $y \le x-\rho \cdot \frac{w}{2}$ (resp.\ $y \ge x+ \rho \cdot \frac{w}{2}$), the probability that 
$\score(y) < \score(x)$ (resp.\ $\score(y) > \score(x)$) does not hold is at most $\exp(-w\cdot \bsc)$. 
Using the union bound over all the (at most $9w$) elements $y\in B$, we have that all elements $y \in B$ with $|x-y| > \rho \frac{w}{2}$ are in the correct relative order w.r.t.\ $x$ in $A$ with
probability at least $1 - 9w\cdot\exp(-w\cdot \bsc)$.
    Whenever this is the case, $ \rank(x, B) - \rho \frac{w}{2} \le \pos(x,A) \le \rank(x,B) + \rho \frac{w}{2}$ and the claim follows by combining such inequality with $|x - \tau_w(x)| = |\rank(x,B) - \pos(x, A)|$.
\end{proof}

We now show that if all elements are good w.r.t.\ $w$, they are also likely to all be good w.r.t.\ $\lfloor \rho w \rfloor$.

\begin{lemma}\label{lem:dislocation-probability}
Consider a round of \basketsort having window size $w$. %If $p < \frac{9-\sqrt{65}}{16}$ 
  If all elements in $S_w$ are good  w.r.t.\ $w$, then 
  all elements in $S_{\lfloor \rho w \rfloor}$ are good w.r.t.\  $\rho w$ with probability at least $1-9mw\cdot\exp\left(-w \cdot \bsc\right)$. %where $f(p,\frac{1}{2})=\frac{1}{288}(16p^2-18p+1)^2$.
\end{lemma}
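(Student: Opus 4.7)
\medskip

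\noindent\textbf{Proof plan.} The plan is to combine the per-element estimate of \Cref{lem:scoresort-computed_rank} with a union bound and then translate a uniform bound on $|x-\tau_w(x)|$ into a dislocation bound for the sequence $S_{\lfloor \rho w \rfloor}$ obtained by sorting the elements of $S_w$ by $\tau_w(\cdot)$.

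First, I would observe that since every element of $S_w$ is good w.r.t.\ $w$, every element is also happy w.r.t.\ $w$ (as noted just after the definition of ``happy''). In particular, for every basket $B_i$ considered by the algorithm, the associated sets $B$ and $B^+$ consist entirely of elements that are both good and happy. Therefore, the hypotheses of \Cref{lem:scoresort-computed_rank} are satisfied for every element $x \in S$, and that lemma yields
\[
  \Pr\!\Paren{\,\abs{x - \tau_w(x)} > \rho\cdot\tfrac{w}{2}\,} \;\le\; 9w\cdot \exp(-w\cdot \bsc).
\]
A union bound over the $m$ elements of $S$ then shows that, with probability at least $1 - 9mw\cdot \exp(-w \cdot \bsc)$, the inequality $|x-\tau_w(x)|\le \rho w/2$ holds \emph{simultaneously} for every $x\in S$. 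I would call this event $\mathcal{E}$ and condition on it for the remainder of the argument.

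Next, I would deduce the dislocation bound deterministically from $\mathcal{E}$. Recall that, using the normalization $\rank(x,S)=x$ stated in \Cref{sub:basket-description}, and that $S_{\lfloor \rho w\rfloor}$ is obtained by (stably) sorting the elements of $S_w$ in non-decreasing order of $\tau_w(\cdot)$. Under $\mathcal{E}$, for any two distinct elements $x,y$ with $y < x - \rho w$ we have
\[
  \tau_w(y) \;\le\; y + \tfrac{\rho w}{2} \;<\; x - \tfrac{\rho w}{2} \;\le\; \tau_w(x),
\]
so $y$ precedes $x$ in $S_{\lfloor\rho w\rfloor}$; symmetrically, any $y$ with $y > x+\rho w$ must follow $x$. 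Consequently, at least all the $\lceil x-\rho w\rceil - 1 \ge x - \rho w - 1$ elements strictly smaller than $x-\rho w$ precede $x$, while at most those plus the at most $\lfloor 2\rho w\rfloor$ elements in the window $[x-\rho w,\, x+\rho w]\setminus\{x\}$ do. Combining these two inclusions gives $x - \rho w \le \pos(x, S_{\lfloor\rho w\rfloor}) \le x + \rho w$, i.e.\ $\disl(x, S_{\lfloor\rho w\rfloor})\le \rho w$ for every $x\in S$. This is precisely what it means for every element of $S_{\lfloor\rho w\rfloor}$ to be good w.r.t.\ $\rho w$, completing the argument.

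I do not anticipate a serious obstacle: the probabilistic content is entirely carried by \Cref{lem:scoresort-computed_rank}, and the only thing to be careful about is the final deterministic step, where one must check that ties in $\tau_w(\cdot)$ do not spoil the bound (they don't, because the above inequalities are strict for elements outside the $[x-\rho w, x+\rho w]$ window, and those are the only ones that determine the worst-case position of $x$).
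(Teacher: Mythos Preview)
Your proposal is correct and follows essentially the same approach as the paper: observe that good implies happy, apply \Cref{lem:scoresort-computed_rank} with a union bound over all $m$ elements to obtain $|x-\tau_w(x)|\le \rho w/2$ simultaneously, and then use the strict inequalities $\tau_w(y)<\tau_w(x)$ for $y<x-\rho w$ (and symmetrically) to deduce the position bound. The paper's write-up of the final deterministic step is slightly terser and ends by noting that since both $x$ and $\sigma_{\lfloor\rho w\rfloor}(x)$ are integers one actually gets $|\sigma_{\lfloor\rho w\rfloor}(x)-x|\le\lfloor\rho w\rfloor$, but the argument is otherwise the same.
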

\begin{proof}
Notice that, since all elements in $\sigma$ are good  w.r.t.\ $w$, they are also happy w.r.t.\ $w$.
Then, using \Cref{lem:scoresort-computed_rank} and the union bound, we have that all elements $x$ satisfy $\lvert x - \tau_w(x)\rvert\leq \rho \frac{w}{2}$ with probability at least $1-9mw\cdot\exp\left(-w \cdot \bsc\right)$.

Suppose that this is indeed the case and focus on a single element $x$.
For all elements $y$ with $y< x - \rho w$, we have $\tau_w(y)\leq y+\rho \frac{w}{2} <  x-\rho \frac{w}{2}\leq\tau_w(x)$. Similarly, for all elements $y$ with $y> x+ \rho w$, we have $\tau_w(y)\geq y- \rho \frac{w}{w} > x+\rho \frac{w}{2}\geq\tau_w(x)$.
In other words, $x-\rho w \leq \sigma_{\lfloor \rho w \rfloor }(x) \leq x+\rho w$, i.e., $\lvert\sigma_{\lfloor \rho w \rfloor }(x)-x\rvert\leq \rho w$.
  Since both $x$ and $\sigma_{\lfloor \rho w \rfloor }(x)$ are integers, $\lvert\sigma_{\lfloor \rho w \rfloor }(x)-x\rvert\leq \lfloor \rho w \rfloor$ implying that $x$ is good w.r.t.\ $\lfloor \rho w \rfloor$.
\end{proof}

We can use the above result to give a lower bound to the probability that all elements remain good for all sufficiently large window sizes.
%To prove that the output sequence of \basketsort has logarithmic maximum dislocation, it would suffice to consider the windows of size $\Omega(\log m)$. 
%However, since \basketsort is used as a subroutine in \rifflesort, we introduce a new parameter $n \ge m$ and show that the probability that all elements are good w.r.t.\ all window sizes $w = \Omega(\log n)$ is  at least $1-O\big(\frac{\log n}{n^2}\big)$.

\begin{lemma}
  \label{lem:all_elements_good_until_wcrit}
  %Let $n \ge m$ be a parameter of choice and
  Consider an execution of \basketsort with $w_S \ge \disl(S)$. 
  %If $\rho \in \left( \frac{8p}{1-2p}, 1 \right)$
%If $p < \frac{9-\sqrt{65}}{16}$ 
    %then, 
    With probability at least $1-O\big(\frac{\log m}{m^4 \log \frac{1}{\rho}}\big)$,
   all elements $x \in S$ are good w.r.t.\ all window sizes $w$ with $w \ge \frac{6}{\bsc}\ln m$ considered by the algorithm. % where $f(p,\frac{1}{2})=\frac{1}{288}(16p^2-18p+1)^2$.
\end{lemma}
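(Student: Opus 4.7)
The plan is to proceed by induction on the sequence of window sizes considered by \basketsort, combining \Cref{lem:dislocation-probability} with a union bound. Let $w^{(0)} = \lfloor w_S \rfloor$ and, for $i \ge 1$, $w^{(i)} = \lfloor \rho w^{(i-1)} \rfloor$, so that $w^{(0)}, w^{(1)}, \dots$ is exactly the sequence of window sizes visited by the algorithm. Let $I = \{i : w^{(i)} \ge \frac{6}{\bsc} \ln m\}$ be the set of iterations we care about. I will show by induction on $i \in \{0\} \cup I$ that, with probability at least $1 - \sum_{j < i} 9 m w^{(j)} \exp(-w^{(j)} \bsc)$, all elements of $S_{w^{(i)}}$ are good w.r.t.\ $w^{(i)}$.

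For the base case $i=0$, the hypothesis $w_S \ge \disl(S)$ combined with the fact that $\disl(\cdot)$ is integer-valued gives $\disl(S) \le \lfloor w_S \rfloor = w^{(0)}$, so every element of $S_{w^{(0)}} = S$ is good w.r.t.\ $w^{(0)}$ deterministically. For the inductive step, suppose all elements of $S_{w^{(i)}}$ are good w.r.t.\ $w^{(i)}$. By \Cref{lem:dislocation-probability}, with probability at least $1 - 9 m w^{(i)} \exp(-w^{(i)} \bsc)$ all elements of $S_{w^{(i+1)}}$ are good w.r.t.\ $\rho w^{(i)}$; since the dislocation is a non-negative integer, this is the same as being good w.r.t.\ $\lfloor \rho w^{(i)} \rfloor = w^{(i+1)}$. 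Applying the union bound across levels yields the claimed inductive bound.

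It remains to estimate the total failure probability $\sum_{i \in I} 9 m w^{(i)} \exp(-w^{(i)} \bsc)$. For each $i \in I$, the condition $w^{(i)} \ge \frac{6}{\bsc} \ln m$ gives $\exp(-w^{(i)} \bsc) \le m^{-6}$, and the bound $w^{(i)} \le w^{(0)} \le m$ yields $9 m w^{(i)} \exp(-w^{(i)} \bsc) \le 9 m^{-4}$. Moreover, the number of window sizes in $I$ is at most $|I| \le \log_{1/\rho} w^{(0)} + 1 = O\!\bparen{\log m / \log(1/\rho)}$, because the sequence $w^{(i)}$ decreases at least geometrically with rate $\rho$ (more precisely, $w^{(i)} \le \rho^i w^{(0)}$, and once $w^{(i)}$ drops below $\frac{6}{\bsc} \ln m$ we leave $I$). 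Combining these two estimates gives a total failure probability of $O\!\bparen{\frac{\log m}{m^4 \log(1/\rho)}}$, as required.

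There is no real obstacle here: the technical heavy lifting was already done in \Cref{lem:score-difference-1,lem:score-difference-2,lem:scoresort-computed_rank,lem:dislocation-probability}. The only small subtlety is making sure that ``good w.r.t.\ $\rho w$'' (a real-valued statement produced by \Cref{lem:dislocation-probability}) can be sharpened to ``good w.r.t.\ $\lfloor \rho w \rfloor$'' so that the induction closes cleanly; this is immediate from integrality of the dislocation.
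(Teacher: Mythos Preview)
Your proof is correct and follows essentially the same approach as the paper: induct on the sequence of window sizes starting from $w^{(0)}=\lfloor w_S\rfloor$ (where all elements are good by hypothesis), apply \Cref{lem:dislocation-probability} at each step, bound each per-round failure probability by $9/m^4$ using $w\ge \frac{6}{\bsc}\ln m$ and $w\le m$, and take a union bound over the $O(\log_{1/\rho} m)$ rounds. Your explicit remark about passing from ``good w.r.t.\ $\rho w$'' to ``good w.r.t.\ $\lfloor \rho w\rfloor$'' via integrality is the same observation the paper uses (inside the proof of \Cref{lem:dislocation-probability}) to close the induction.
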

\begin{proof}
We start by focusing on a generic round having window size $w \ge \frac{6}{\bsc} \ln m$.
If all elements are good w.r.t.\ $w$, then all elements are also happy w.r.t.\ $w$ and \Cref{lem:dislocation-probability} ensures that the maximum dislocation of $\sigma_{\lfloor \rho w \rfloor}$ is at most $\lfloor \rho w \rfloor$ with probability at least 
\[
1-9mw\cdot\exp\left(-w \cdot f\left(p, \rho\right)\right) 
\ge 1 - 9mw \cdot \exp(-6 \ln m)
  \ge 1 - 9m^2 \cdot \frac{1}{m^6} = 1 - \frac{9}{m^4},
%\ge 1 - \frac{18}{n^2} = 1 - O\left(\frac{1}{n^2}\right)
\]
and, whenever this happens, all elements are alo good w.r.t.\ $\lfloor \rho w \rfloor$.

%Let $w'$ be the smallest window size considered by the algorithm power of $2$ such that $w' \ge m$.
  Since all elements are good w.r.t.\ the window size $w_S \le m$, we have that the probability that all elements remain good w.r.t.\ all the $O(\log_{1/\rho} m)$ subsequent window sizes $w$ such that $w \ge \frac{6}{\bsc} \ln m$ is at least $1- O\left(\frac{\log_{1/\rho} m}{m^4}\right) = 1-O\left( \frac{\log m}{m^4 \log \frac{1}{\rho}} \right)$.
\end{proof}

Whenever all elements are good w.r.t.\ all window sizes $w = \Omega(\log m)$, the dislocation of the sequence returned by \basketsort is $O(\log m)$, which allows us to prove the following:

\begin{lemma}\label{lemma:bs_maximum-dislocation}
%Let $n \ge m$ be a parameter of choice. 
  %If $p < \frac{9-\sqrt{65}}{16}$ then, 
  %If $\rho \in \left( \frac{8p}{1-2p}, 1 \right)$ then, 
  The maximum dislocation of the sequence returned by \basketsort with $w_S \ge  \disl(S)$ is at most $\frac{9}{\rho(1-\rho)} + \frac{54}{\rho (1-\rho) \cdot \bsc} \ln m$
    with probability at least $1-O\big(\frac{\log m}{m^4 \log \frac{1}{\rho} }\big)$. 
  %$\md(p, \rho)\cdot \ln n$ where $\md(p,\frac{1}{2})=\frac{136}{\bsc}$. % and $f(p,\frac{1}{2})=\frac{1}{288}(16p^2-18p+1)^2$.
\end{lemma}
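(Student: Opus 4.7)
My plan is to combine the high-probability statement of \Cref{lem:all_elements_good_until_wcrit}, which guarantees that every element remains good through all rounds with window size at least $\frac{6}{\bsc} \ln m$, with the deterministic displacement bound of \Cref{lem:move-afterward}, which controls how much the positions can still drift during the last few (possibly ``bad'') rounds of \basketsort. The rough picture is: the high-probability analysis carries us down to a window size $w^*$ that is $\Theta(\log m)$, at which point the sequence $S_{w^*}$ is known to have dislocation at most $w^*$; the remaining rounds can only displace each element by an additional $O\bparen{\frac{w^*}{1-\rho}}$ positions, which is absorbed by the constants in the statement.

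Concretely, I would first let $w^*$ be the smallest window size considered by \basketsort that satisfies $w^* \ge \frac{6}{\bsc} \ln m$. By minimality, either $w^* = \lfloor w_S \rfloor$ or the previously considered window size $w'$ satisfies $\lfloor \rho w' \rfloor = w^*$ and $\lfloor \rho w^* \rfloor < \frac{6}{\bsc} \ln m$. From the latter I would deduce $\rho w^* < \frac{6}{\bsc} \ln m + 1$, hence
\[
w^* \le \frac{1}{\rho}\Paren{\tfrac{6}{\bsc} \ln m + 1} = \frac{6}{\rho \bsc}\ln m + \frac{1}{\rho}.
\]
Then, conditioning on the event of \Cref{lem:all_elements_good_until_wcrit}, which occurs with probability at least $1 - O\bparen{\frac{\log m}{m^4 \log(1/\rho)}}$, every element $x \in S$ is good with respect to $w^*$, i.e., $\abs{\sigma_{w^*}(x) - x} \le w^*$. \Cref{lem:move-afterward} then gives, deterministically and for every $x$, the bound $\abs{\sigma_0(x) - \sigma_{w^*}(x)} < \frac{8 w^*}{1-\rho}$. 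Combining these two via the triangle inequality, I obtain
\[
\abs{\sigma_0(x) - x} \le w^* + \frac{8 w^*}{1-\rho} = \frac{(9-\rho)\, w^*}{1-\rho} \le \frac{9\, w^*}{1-\rho} \le \frac{9}{\rho(1-\rho)} + \frac{54}{\rho(1-\rho) \bsc}\ln m,
\]
which is precisely the asserted bound on $\disl(S_0)$.

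The only subtle point---and the closest thing to an obstacle---is the degenerate regime in which $w_S < \frac{6}{\bsc}\ln m$, so that no window size considered by the algorithm ever reaches the threshold of \Cref{lem:all_elements_good_until_wcrit} and that lemma's high-probability statement is vacuous. I would deal with this by taking $w^* = \lfloor w_S \rfloor$: since $S_{w^*} = S$ and $w_S \ge \disl(S)$, every element is trivially good with respect to $w^*$, and the same two-step argument (triangle inequality plus \Cref{lem:move-afterward}) applies with a strictly smaller value of $w^*$, yielding a bound no larger than the one above. The failure probability is inherited unchanged from \Cref{lem:all_elements_good_until_wcrit}, completing the proof.
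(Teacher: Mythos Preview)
Your proposal is correct and follows essentially the same approach as the paper's own proof: both identify the smallest window size $w^*$ above the threshold $\frac{6}{\bsc}\ln m$, invoke \Cref{lem:all_elements_good_until_wcrit} to guarantee goodness at $w^*$ with the stated probability, bound $w^*$ via $\lfloor \rho w^* \rfloor < \frac{6}{\bsc}\ln m$, and then apply \Cref{lem:move-afterward} with the triangle inequality to control the remaining drift; the degenerate case $w_S < \frac{6}{\bsc}\ln m$ is handled in both by observing that elements are trivially good at the initial window size.
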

\begin{proof}
  Let $w$ be the smallest window size $w$ that satisfies $w \ge \frac{6}{\bsc} \ln m$.\footnote{If no such window size exists, then $w_S < \frac{6}{\bsc} \ln m$, and, by the algorithm's precondition and by \Cref{lem:move-afterward}, the maximum dislocation of the returned sequence is at most $(1+\frac{8}{1-\rho}) w_S  < \frac{6}{\bsc} + \frac{48}{(1-\rho) \cdot \bsc}$, which satisfies the claim.}
  By \Cref{lem:all_elements_good_until_wcrit}, all elements are good w.r.t.\ $w$ with probability at least $1 - O(\frac{\log m}{m^4 \log \frac{1}{\rho}})$.
  When this is indeed the case, the maximum dislocation of $\sigma_w$ is at most $w <  \frac{1}{\rho} + \frac{6}{ \rho \cdot \bsc} \ln m$, where we used $\lfloor \rho w \rfloor < \frac{6}{\bsc} \ln m$.

  By \Cref{lem:move-afterward}, each element moves by at most $\frac{8 w}{1-\rho} < \frac{8}{\rho(1-\rho)} +  \frac{48}{\rho(1-\rho) \cdot \bsc} \ln m$ positions in all subsequent rounds. Therefore, the maximum dislocation of the permutation returned by \basketsort is at most
  $\frac{9}{\rho(1-\rho)} + \frac{54}{\rho (1-\rho) \cdot \bsc} \ln m$ with the aforementioned probability.
%
%
%  $(32 + 2) \frac{4}{f(p,\frac{1}{2})} \ln n = \frac{136}{f(p,\frac{1}{2})} \ln n$ with the aforementioned probability.	
\end{proof}

\subsection{Analysis of the Total Dislocation}\label{sub:basket-total-dislocation}

In this section we prove that the total dislocation of the sequence returned by \basketsort is $O(m)$ with high probability.
We first define a condition, denoted by \weakINV{}, which allows us to analyze the expected dislocation of a single element.
Roughly speaking, if an element $x$ satisfies~\weakINV{} w.r.t.\ a window size $w$, then  $x$ still satisfies ~\weakINV{} until a window size of $O(\log w)$ with probability $1-O(1/w^2)$ (\Cref{lem:weak-pr}).
Then, in \Cref{subsub:constant-dislocation}, we make use of the above fact to prove that the expected dislocation of each element is $O(1)$.
In order to do so, %based on \weakINV{}, 
our analysis partitions the $O(\log m)$ rounds of \basketsort into $O(\log\log m)$ phases,
where each phase extends from a round having some window size $w$ to a round having a window size in  $O(\log w)$.
Finally, \Cref{subsub:deterministic_total_dislocation}, uses the constant upper bound on the expected dislocation of each element to prove that the total dislocation is $O(m)$ with high probability. %, for which we also prove that the final position of an element only depends on its nearby elements and the outcome of their comparisons (\Cref{lem:dependent}).

We start by formally defining \weakINV{}. Given an element $x$ and a window size $w$, $x$ satisfies \weakINV{} w.r.t.\ $w$ if \emph{both} the following conditions hold:
\begin{enumerate}[(a)]
  \item All elements $y$ with $|\sigma_w(y) - x| \le \trho w$ are good; and
 %
 %   $\sigma_w(y) \in \Brace{x- \left\lfloor \trho w \right\rfloor,  \dots, x + \left\lfloor \trho w \right\rfloor }$ are good; and
  \item All elements $y$ with $|\sigma_w(y) - x| \le (\trho - 3)w$ are happy.

%    $\sigma_w(y) \in \Brace{x- \left\lfloor \Paren{\trho-3}w \right\rfloor, \dots, x + \left\lfloor (\trho-3)w \right\rfloor }$ are happy.
\end{enumerate}	

% \begin{lemma}
%   If an element $x$ satisfies \weakINV{} w.r.t.\ $w$, then $\sigma_w(x) \in \{x-3w, \dots, x+3w\}$ and $x$ is good w.r.t.\ $w$.\tred{Chih-Hung: Come back later.} \tred{Stefano: I'm not sure if we ever invoke this lemma.}
% \end{lemma}
% \begin{proof}
% 	
% Consider the element $y$ in position $\sigma_w(y) = \min\{x + w, m\}$.
% If $m<x + w$, we have $\sigma(x)\leq m<x+w\leq x+3w$, so it suffices to study the case $\sigma_w(y)=x+w$.
% Since \weakINV{} ensures that $y$ is both good and happy, we have  that
% $y \ge \sigma_w(y) - w = x$ and that all elements $z$ with $\sigma_w(z) > \sigma_w(y) + 2w = x + 3w$ satisfy $z > y$, implying that $z>x$ and thus $\sigma(x) \le x+3w$.
%
% A symmetric argument shows that $\sigma(x) \ge x+3w$.
%   We conclude that $\sigma_w(x) \in \{ x-3w, \dots, x+3w \}$ and \weakINV{}~(a) implies that $x$ is good w.r.t.\ $w$.
% \end{proof}

\begin{lemma}\label{lem:weak_part_a}
  Let $x$ be an element that satisfies \weakINV{} w.r.t.\ $w$ and assume that all elements $z$ with
  $|\sigma_w(z) - x| \le (\trho \rho + 8)w$ 
  %$\sigma_{w}(y)\in \Brace{ x- \left\lfloor \Paren{\trho\rho +8}w \right\rfloor, \dots, x +  \left\lfloor \Paren{\trho\rho+8}w \right\rfloor }$
  satisfy $|\tau_{w}(z)-z|\leq \rho \frac{w}{2}$.
  Then, $x$ satisfies condition~(a) of \weakINV{} w.r.t.\ $\lfloor \rho w \rfloor$.
  Moreover, all elements $z$ such that $|\tau_w(z) - x| \le (\trho \rho + 4) w$ are good w.r.t.\ $\floor{\rho w}$. 
\end{lemma}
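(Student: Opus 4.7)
The plan is as follows. First observe that the condition (a) claim is a consequence of the moreover clause: by \Cref{lem:diff-computed_position-computed_rank}, $|\sigma_{\lfloor \rho w \rfloor}(y) - x| \le \trho \lfloor \rho w \rfloor \le \trho \rho w = (\trho - 15)w$ implies $|\tau_w(y) - x| < 4w + (\trho - 15)w = (\trho\rho + 4)w$, bringing $y$ under the scope of the moreover clause. So the task reduces to showing that every $z$ with $|\tau_w(z) - x| \le (\trho\rho + 4)w$ is good w.r.t.\ $\lfloor \rho w \rfloor$.

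Fix such a $z$. By \Cref{lem:diff-computed_rank-initial}, $|\sigma_w(z) - x| < (\trho\rho + 8)w = (\trho - 7)w$, which is strictly below both $\trho w$ and $(\trho - 3)w$; the two parts of \weakINV{} w.r.t.\ $w$ thus make $z$ both good and happy, and the hypothesis gives $|\tau_w(z) - z| \le \rho w/2$. To bound $|\sigma_{\lfloor \rho w \rfloor}(z) - z|$, I would partition $S \setminus \{z\}$ into three parts according to $\sigma_w$: the \emph{left set} $L = \{z'' : \sigma_w(z'') < x - (\trho - 7)w\}$, the \emph{nice set} $N$ of elements with $|\sigma_w(z'') - x| \le (\trho - 7)w$, and the symmetric \emph{right set} $R$. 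Applying \Cref{lem:diff-computed_rank-initial} to $L$ together with the bound $|\tau_w(z) - x| \le (\trho - 11)w$ shows that each $z'' \in L$ satisfies $\tau_w(z'') < \sigma_w(z'') + 4w < x - (\trho - 11)w \le \tau_w(z)$ and is therefore placed before $z$ in $S_{\lfloor \rho w \rfloor}$; symmetrically each $z'' \in R$ is placed after $z$. Writing the true rank of $z$ and the position $\sigma_{\lfloor \rho w \rfloor}(z)$ in terms of this partition yields
\[
\sigma_{\lfloor \rho w \rfloor}(z) - z = (k_z - k_z^{\star}) + (\ell - r),
\]
where $k_z$ counts nice elements placed before $z$ in $S_{\lfloor \rho w \rfloor}$, $k_z^{\star}$ counts nice elements with true value below $z$, and $\ell$ (resp.\ $r$) counts elements of $L$ (resp.\ $R$) whose true value exceeds (resp.\ falls below) $z$.

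The first summand can be bounded in absolute value by $\lfloor \rho w \rfloor$ by adapting the argument used in \Cref{lem:dislocation-probability}: since every nice element $z''$ as well as $z$ satisfies $|\tau_w(z'') - z''| \le \rho w/2$, sorting $N \cup \{z\}$ by $\tau_w$ produces positions that differ from the corresponding true ranks within $N \cup \{z\}$ by at most $\lfloor \rho w \rfloor$.

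The main obstacle is controlling the second summand. Happiness of $z$ implies $\sigma_w(z'') \ge \sigma_w(z) - 2w$ for every $z'' > z$; combining this with condition (a) of \weakINV{} (applied to elements slightly outside $N$) forces $\ell > 0$ to require $z < x - (\trho - 10)w$, while $r > 0$ symmetrically forces $z > x + (\trho - 10)w$. These two regimes are mutually exclusive, so at least one of $\ell, r$ is always zero. The delicate part is that the very condition on $z$ needed for, say, $\ell > 0$ pushes $z$ toward the leftmost portion of the nice set, which truncates the range of nice elements $z''$ with $z'' < z$ that can contribute to $k_z^{\star}$; a careful tracking of this truncation shows that every ``misplaced'' left element of $L$ with $z'' > z$ is offset by a corresponding shift of $k_z - k_z^{\star}$ in the opposite direction, so the sum $(k_z - k_z^{\star}) + (\ell - r)$ remains inside $[-\lfloor \rho w \rfloor, \lfloor \rho w \rfloor]$. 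This balanced-accounting step in the two extremal regimes is the crux of the proof.
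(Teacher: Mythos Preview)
Your reduction of condition~(a) to the ``moreover'' clause is correct and matches the paper, and your treatment of the nice set $N$ (the bound $|k_z - k_z^\star| \le \lfloor \rho w\rfloor$) is fine. The gap is in the handling of $\ell$ and $r$: the ``balanced accounting'' you sketch does not work as stated, and in fact no accounting is needed because $\ell = r = 0$ outright. You already have the ingredients but do not combine them. You only invoke \Cref{lem:diff-computed_rank-initial} to get $|\sigma_w(z)-x| < (\trho\rho+8)w$; the sharper bound comes from chaining $|\sigma_w(z)-z|\le w$ (goodness of $z$), $|z-\tau_w(z)|\le \rho w/2$ (your hypothesis), and $|\tau_w(z)-x|\le (\trho\rho+4)w$, which together give
\[
|\sigma_w(z)-x|\;\le\;\bigl(\trho\rho + 5 + \tfrac{\rho}{2}\bigr)w.
\]
Now your own derivation shows that $\ell>0$ forces $\sigma_w(z) < x - (\trho\rho + 6)w$; since $\rho<1$ we have $\trho\rho + 5 + \rho/2 < \trho\rho + 6$, so the sharper bound contradicts this and hence $\ell=0$ (and symmetrically $r=0$). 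With $\ell=r=0$ your decomposition collapses to $|\sigma_{\lfloor\rho w\rfloor}(z)-z| = |k_z-k_z^\star|\le\lfloor\rho w\rfloor$, and you are done.

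This is exactly what the paper does, though organized slightly differently: rather than first placing $L$ before $z$ and $R$ after $z$ and then separately analyzing their true values, the paper treats both aspects of Case~1 together. For any $z'$ with $\sigma_w(z')<x-(\trho\rho+8)w$, the sharper bound on $\sigma_w(y)$ (the paper's name for your $z$) gives $\sigma_w(z')<\sigma_w(y)-2w$, so happiness of $y$ yields $z'<y$ \emph{and} \Cref{lem:diff-computed_rank-initial} yields $\tau_w(z')<\tau_w(y)$ in one stroke. The paper thus never needs your $(k_z-k_z^\star)+(\ell-r)$ bookkeeping: it shows directly that every element with true value outside $[y-\rho w,\,y+\rho w]$ lands on the correct side of $y$ in $S_{\lfloor\rho w\rfloor}$.
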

\begin{proof}
  Let $Y$ denote the set of all elements $y$ for which $|\tau_w(y) - x| \le (\trho \rho + 4)w$.
%  $\tau_{w}(y) \in \Brace{x-\Paren{\trho\rho+4}w, \dots, x+\Paren{\trho\rho+4}w}$.
For any $y\in Y$, by \Cref{lem:diff-computed_rank-initial}, we have $|\tau_{w}(y)-\sigma_{w}(y)| < 4w$, thus implying that
\[
  x - \Paren{\trho\rho+8}w \le \tau_{w}(y)-4w\ < \sigma_{w}(y) < \tau_{w}(y)+4w \le x + \Paren{\trho\rho+8}w.
\]
That is, for any $y\in Y$, we have $\abs{\sigma_w(y)-x}\leq  \Paren{\trho\rho+8}w$. 
  In particular, the hypotheses of the lemma imply that any $y \in Y$ satisfies $|\tau_w(y) - y| \le  \rho \frac{w}{2}$.
  Moreover, since $\frac{15}{1-\rho}\rho + 8 \le  \frac{15}{1-\rho} - 3$, condition~(a) of \weakINV{} w.r.t. $w$ for element $x$ ensures that $y$ is both good and happy w.r.t.\ $w$. Observe further that $Y$ contains all elements $y'$ with
  $|\sigma_w(y') - x| \le \trho \floor{\rho w}$, which are those considered by condition~(a) of \weakINV{} w.r.t.\ $\floor{\rho w}$ for element $x$. Indeed, for any $y'$ with $\abs{\sigma_{\lfloor \rho w \rfloor}(y')-x}\leq  \trho\cdot \floor{\rho w} $, by \Cref{lem:diff-computed_position-computed_rank},
\[
  |x - \tau_w(y')| \le |x - \sigma_{\lfloor \rho w \rfloor}(y')| +  |\sigma_{\lfloor \rho w \rfloor}(y') - \tau_w(y') | \le \trho\cdot \floor{\rho w}  + 4 w \le \Paren{\trho\rho+4}w. 
\]

  Hence, to prove the claim, it suffices to show that all elements in $Y$ are good w.r.t.\ $\floor{\rho w}$. %
  In the rest of the proof we focus on an element $y \in Y$ and show that any element $z$ such that $z  < y - \rho w$ (resp.\ $z > y + \rho w$) must appear before (resp.\ after) $y$ in $S_{\floor{\rho w}}$, thus implying that $y$ is good w.r.t.\ $\floor{\rho w}$.
  We distinguish two cases depending on the value of $|\sigma_w(z) - x|$.

  \paragraph*{Case 1:}  This case considers $|\sigma_w(z) - x| > (\trho \rho + 8 )w$. 
  Using the triangle inequality, the fact that $y$ is good w.r.t.\ $w$, $|\tau_w(y) - y| \le \rho \frac{w}{2}$, and the definition on $Y$, we have: 
    \begin{align*}
        |\sigma_w(y)-x| &\le |\sigma_w(y)-y| + |y - \tau_w(y)| + |\tau_w(y) - x| \\
        & \le w + \rho \frac{w}{2}  + \Paren{\trho \rho + 4}w
    = \Paren{\trho \rho + \frac{\rho}{2} + 5}w.
    \end{align*}
%
%
%  Since $|\tau_w(y) - x| \le (\trho \rho + 4)w$, since $y$ is good \wrt $w$, i.e., $\abs{\sigma_w(y)-y}\leq w$, and since $|\tau_{w}(y)-y|\leq \rho \frac{w}{2}$,  we have $\abs{\sigma_w(y)-x}\leq (\trho \rho + 5+\rho/2)w$. 

  Then, all elements $z$ with $\sigma_w(z)<x-(\trho \rho + 8 )w$  (resp.\ $\sigma_w(z)>x+(\trho \rho + 8 )w$) are necessarily smaller than (resp.\ larger than) $y$.
  Indeed, $\sigma_w(z)<x-(\trho \rho + 8 )w \le \sigma_w(y) - (3 - \frac{\rho}{2})w < \sigma_w(y) - 2w$
  (resp.\  $\sigma_w(z)>x+(\trho \rho + 8 )w \ge\sigma_w(y) + (3 - \frac{\rho}{2})w > \sigma_w(y) + 2w$), 
  and the fact that $y$ is happy w.r.t.\ $w$ implies $z < y$ (resp.\ $z>y$).

  From the above discussion, we know that if $z < y - \rho w$, then we must have  $\sigma_w(z)<x-(\trho \rho + 8 )w$ (since if $\sigma_w(z)>x+(\trho \rho + 8 )w$, then $z>y$).  Using \Cref{lem:diff-computed_rank-initial}, we obtain
  $
    \tau_w(z) < \sigma_w(z) + 4w < x - (\trho \rho + 4 )w \le \tau_w(y),
  $
  implying that $\sigma_{\floor{\rho w}}(z) < \sigma_{\floor{\rho w}}(y)$.
  Symmetrically, if $z > y + \rho w$, we must have $\sigma_w(z) > x + (\trho \rho + 8 )w$, which implies $\tau_w(z) > \sigma_w(z) - 4w > x + (\trho \rho + 4w) \ge \tau_w(y)$ and hence $\sigma_{\floor{\rho w}}(z) > \sigma_{\floor{\rho w}}(y)$.

%%
%  Since $y$ is happy \wrt $w$ and $\sigma_w(z)<x-(\trho \rho + 8 )w$ (resp.\ $\sigma_w(z)>x+(\trho \rho + 8 )w$), $z<y$ (resp.\ $z>y$).
%
%
%
%  If $z < y$ then $\sigma_w(z) \le x + (\trho \rho + 8)w \le y (\trho \rho)$
%
%
%  %Let $y$ be any element in $Y$.
%  Since $|\tau_w(y) - x| \le (\trho \rho + 4)w$, since $y$ is good \wrt $w$, i.e., $\abs{\sigma_w(y)-y}\leq w$, and since $|\tau_{w}(y)-y|\leq \rho \frac{w}{2}$,  we have $\abs{\sigma_w(y)-x}\leq (\trho \rho + 5+\rho/2)w$. 
%
%
%  Also, since $y$ is happy \wrt $w$, if $\sigma_w(z)<x-(\trho \rho + 8 )w$ (resp.\ $\sigma_w(z)>x+(\trho \rho + 8 )w$), $z<y$ (resp.\ $z>y$).
%
%
%
%  By \Cref{lem:diff-computed_rank-initial}, $\abs{\tau_w(z)-\sigma_w(z)}\leq 4w$, and thus if $\sigma_w(z)<x-(\trho \rho + 8 )w$ (resp.\ $\sigma_w(z)>x+(\trho \rho + 8 )w$),
%  $\tau_w(z)<x- (\trho \rho + 4 )w \leq \tau_w(y)$ (resp.\ $\tau_w(z)> x+ (\trho \rho + 4 )w \geq  \tau_w(y)$.)
%  Therefore, we have either $z<y$ and $\tau_w(z)<\tau_w(y)$ or $z>y$ and $\tau_w(z)>\tau_w(y)$.
%  In other words, $z$ induces no dislocation for $y$ in $S_{\floor{\rho w}}$.
%  
%  
  
%  Since $z \not\in Y$, we have $|\tau_w(z)-x| > (\trho \rho +4)w$.
%  If $z < y - \rho w$ then ....

  %, implying that $z$ does not influence the value of $\abs{ \sigma_{\lfloor \rho w \rfloor}(y) - y}$. 

  \paragraph*{Case 2:} This case considers $|\sigma_w(z) - x| \le (\trho \rho + 8 )w$. Notice that $z \in Y$ and hence $|\tau_{w}(z)-z|\leq \rho \frac{w}{2}$.
%
%
%
 % For all elements $z$ with $\sigma_{w}(z)\in \Brace{x-\Paren{\trho\rho+8}w,\dots ,x+\Paren{\trho\rho+8}w }$, which include all elements in $Y$, we have $|\tau_{w}(z)-z|\leq \rho \frac{w}{2}$ by hypothesis.
  If $z < y - \rho w$, then $\tau_w(z) \le z + \rho \frac{w}{2} < y - \rho \frac{w}{2} \le \tau_w(y)$, where the last inequality holds since $|\tau_w(y)  -y | \le \rho \frac{w}{2}$. 
  Using $\tau_w(z) < \tau_w(y)$, we have $\sigma_{\floor{\rho w}}(z) < \sigma_{\floor{\rho w}}(y)$ as desired.
  Symmetrically, if $z > y + \rho w$, then $\tau_w(z) \ge z - \rho \frac{w}{2} > y + \rho \frac{w}{2} \ge \tau_w(y)$, which implies $\sigma_{\floor{\rho w}}(z) > \sigma_{\floor{\rho w}}(y)$ as desired.
%  Hence, $y- \rho w \le \sigma_{\floor{\rho w}}(y) \le y + \rho w$ and, since $\sigma_{\floor{\rho w}}$ is an integer, $y$ is good w.r.t.\ $\floor{\rho w}$. 
  %
  %
 %
 %Following the same reasoning as in the proof of \Cref{lem:dislocation-probability}, we obtain $\abs{ \sigma_{\lfloor \rho w \rfloor}(y) - y} \leq \lfloor \rho w \rfloor$, i.e., all elements in $Y$ are good w.r.t.\ $\lfloor \rho w \rfloor$.
%
%
%To conclude the proof of condition~(a), we notice  that $Y$ contains all elements $y$ with $\sigma_{\floor{\rho w}}(y) \in \big\{ x- \trho\cdot \rho w, \dots, x+\trho\cdot \rho w\big\}$ and $\big[ x- \trho\cdot \rho w,\,  x+ \trho\cdot \rho w \big]\supseteq \big[ x- \trho\cdot \floor{\rho w},\,  x+ \trho\cdot \floor{\rho w} \big]$.
%Indeed, \Cref{lem:diff-computed_position-computed_rank} ensures that $|\sigma_{\lfloor \rho w \rfloor}(y) - \tau_w(y) | < 4w$,  implying:
%\[
%  |x - \tau_w(y)| \le |x - \sigma_{\lfloor \rho w \rfloor}(y)| +  |\sigma_{\lfloor \rho w \rfloor}(y) - \tau_w(y) | \le \trho\cdot \rho w  + 4 w = \Paren{\trho\rho+4}w. 
%\]
\end{proof}

\begin{lemma}\label{lem:weak}
  Let $x$ be an element that satisfies \weakINV{} w.r.t.\ $w$ %and condition~(a) of \weakINV{} w.r.t.\ $\lfloor \rho w \rfloor$.
  and assume that all elements $z$ with 
  $| \sigma_w(z) - z | \le x - (\trho \rho +8)w$
%  $\sigma_{w}(y')\in \Brace{ x- \left\lfloor \Paren{\trho\rho +8}w \right\rfloor, \dots, x +  \left\lfloor \Paren{\trho\rho+8}w \right\rfloor }$
  satisfy $|\tau_{w}(z)-z|\leq \rho \frac{w}{2}$.
  Then, $x$ satisfies \weakINV{} w.r.t.\ $\lfloor \rho w \rfloor$.
\end{lemma}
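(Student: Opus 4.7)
}
The plan is to establish the two conditions of \weakINV{} at scale $\lfloor \rho w \rfloor$ one after the other, using Lemma~\ref{lem:weak_part_a} as the main workhorse. Condition~(a) w.r.t.\ $\lfloor\rho w\rfloor$ is simply the conclusion of Lemma~\ref{lem:weak_part_a}, whose hypotheses coincide with ours (reading $|\sigma_w(z)-x|\le (\trho\rho+8)w$ in the present statement). So the real work is to derive condition~(b) w.r.t.\ $\lfloor\rho w\rfloor$ from condition~(a) w.r.t.\ $\lfloor\rho w\rfloor$, from condition~(b) w.r.t.\ $w$, and from the $\tau_w$ hypothesis.

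To establish~(b), I would fix any $y$ with $|\sigma_{\lfloor\rho w\rfloor}(y)-x|\le (\trho-3)\lfloor\rho w\rfloor$ and show $y$ is happy w.r.t.\ $\lfloor\rho w\rfloor$: for every $z$ with $|\sigma_{\lfloor\rho w\rfloor}(z)-\sigma_{\lfloor\rho w\rfloor}(y)|>2\lfloor\rho w\rfloor$, $y$ and $z$ are correctly ordered. Note that $y$ is automatically good w.r.t.\ $\lfloor\rho w\rfloor$ by the already-proved condition~(a), and that, by Lemma~\ref{lem:move-one-round} together with the choice $\trho=15/(1-\rho)$ (so that $(\trho-3)(1-\rho)=12+3\rho\ge 8$), one gets $|\sigma_w(y)-x|\le (\trho-3)w$, i.e., $y$ also lies in the happy window of \weakINV{} w.r.t.\ $w$. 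By symmetry I would focus on $\sigma_{\lfloor\rho w\rfloor}(z)>\sigma_{\lfloor\rho w\rfloor}(y)+2\lfloor\rho w\rfloor$ and aim at $z>y$.

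I then split on whether $z$ still lies in the good window of condition~(a) w.r.t.\ $\lfloor\rho w\rfloor$. If $|\sigma_{\lfloor\rho w\rfloor}(z)-x|\le \trho\lfloor\rho w\rfloor$, then $z$ is good and the chain
$z \ge \sigma_{\lfloor\rho w\rfloor}(z)-\lfloor\rho w\rfloor > \sigma_{\lfloor\rho w\rfloor}(y)+\lfloor\rho w\rfloor \ge y$
closes immediately. In the complementary case $\sigma_{\lfloor\rho w\rfloor}(z)>x+\trho\lfloor\rho w\rfloor$, I would translate the position of $z$ back to $S_w$ via Lemmas~\ref{lem:diff-computed_rank-initial} and~\ref{lem:diff-computed_position-computed_rank}: either $\sigma_w(z)$ falls inside the hypothesis neighborhood $|\sigma_w(z)-x|\le(\trho\rho+8)w$, in which case $|\tau_w(z)-z|\le \rho w/2$ forces $z$ to be large (because $\tau_w(z)\ge \sigma_{\lfloor\rho w\rfloor}(z)-4w$ is already large), giving $z>y$; or $\sigma_w(z)$ is so far to the right of $x$ that the happiness of $y$ w.r.t.\ $w$ (condition~(b) of \weakINV{} w.r.t.\ $w$ together with the bound on $|\sigma_w(y)-x|$ above) directly places $y$ below $z$ in true value. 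In each subcase I would then argue that the numerical slack provided by $\trho=15/(1-\rho)$ actually makes the resulting inequality close; the symmetric case for $\sigma_{\lfloor\rho w\rfloor}(z)<\sigma_{\lfloor\rho w\rfloor}(y)-2\lfloor\rho w\rfloor$ is handled by a mirror argument.

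The hard part is the second subcase above: $z$ can be simultaneously far from $x$ in $S_{\lfloor\rho w\rfloor}$ and not immediately controlled by condition~(a) or by the hypothesis on $\tau_w$, so one has to carefully juggle the $\pm 4w$ slack between $\sigma_w$, $\tau_w$, and $\sigma_{\lfloor\rho w\rfloor}$, the $\pm 8w$ movement bound of Lemma~\ref{lem:move-one-round}, and the $\rho w/2$ assumption on $\tau_w$ — and verify that the constant $\trho=15/(1-\rho)$ is chosen precisely to absorb all these losses so that the inequalities in both subcases indeed close.
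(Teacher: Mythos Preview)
Your overall structure---deriving condition~(a) from Lemma~\ref{lem:weak_part_a}, fixing $y$ in the happy window, noting that $y$ is good at both scales and happy w.r.t.\ $w$, and then first splitting on whether $z$ lies in the good window at scale $\lfloor\rho w\rfloor$---matches the paper. The problem is your secondary split on whether $|\sigma_w(z)-x|\le(\trho\rho+8)w$: neither branch closes, and this is not a matter of checking constants but a structural issue. In your ``inside'' subcase, the chain $z \ge \tau_w(z)-\rho w/2 > \sigma_{\lfloor\rho w\rfloor}(z)-4w-\rho w/2$ loses more than $4w$, while the only separation you can extract from $y$ (via the $2\lfloor\rho w\rfloor$ gap and goodness of $y$ at scale $\lfloor\rho w\rfloor$) is $\lfloor\rho w\rfloor<w$; closing would require $2\lfloor\rho w\rfloor>4w+\rho w/2$, which is false for every $\rho<1$. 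In your ``outside'' subcase, $\sigma_w(z)>x+(\trho\rho+8)w$ does not force $\sigma_w(z)>\sigma_w(y)+2w$, since $\sigma_w(y)$ can be as large as $x+(\trho-3)w$ and $\trho(1-\rho)=15>9$ gives $\trho-1>\trho\rho+8$; so the happiness of $y$ w.r.t.\ $w$ does not apply. A whole range of values of $\sigma_w(z)$ is therefore handled by neither argument.

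The paper resolves this with a three-way split on $\sigma_w(z)$, with thresholds (in your direction) at $x+(\trho\rho+1)w$ and $x+\trho w$. The far case uses happiness of $y$ w.r.t.\ $w$ as you intended. The middle case uses only that $z$ is good w.r.t.\ $w$ (it lies in the good window of \weakINV{}) to get $z>x+\trho\rho w$, which already beats $y\le x+(\trho-2)\lfloor\rho w\rfloor$. The close case is where a key ingredient missing from your plan enters: combining goodness of $z$ w.r.t.\ $w$ with the $\rho w/2$ hypothesis on $\tau_w$ yields $|\tau_w(z)-x|\le(\trho\rho+4)w$, and the \emph{second conclusion} of Lemma~\ref{lem:weak_part_a} (all elements with $|\tau_w(\cdot)-x|\le(\trho\rho+4)w$ are good w.r.t.\ $\lfloor\rho w\rfloor$) then gives $|z-\sigma_{\lfloor\rho w\rfloor}(z)|\le\lfloor\rho w\rfloor$ directly. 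This avoids the fatal $4w$ loss from Lemma~\ref{lem:diff-computed_position-computed_rank}, so the $2\lfloor\rho w\rfloor$ gap suffices.
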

\begin{proof}
  By \Cref{lem:weak_part_a}, $x$ satisfies condition~(a) of \weakINV{} w.r.t.\ $\lfloor \rho w \rfloor$, and therefore we only need to prove that $x$ also satisfies condition~(b). 

%We now prove condition~(b) of \weakINV{}. 
  Let $T$ be the set of all elements $y$ with $|\sigma_{\floor{\rho w}}(y) - x| \le (\trho - 3) \lfloor \rho w \rfloor$. %  $\sigma_{\floor{\rho w}}(y)\in [x-(\trho-3)\rho w,x+(\trho-3)\rho w]$; 
  For any element $y\in T$, we have $\abs{\sigma_w(y)-\sigma_{\floor{\rho w}}(y)}\leq 8w$ (see \Cref{lem:move-one-round}), and thus $|\sigma_w(y) - x| \le  (\trho - 3) \rho w + 8w \le   (\trho - 3)w$, implying that $y$ is both happy and good w.r.t. $w$ (since $x$ satisfies \weakINV{}).

  Focus on an element $y \in T$. We want to show that any element $z$ such that $\sigma_{\floor{\rho w}}(z) < \sigma_{\floor{\rho w}}(y) - 2\floor{\rho w}$ (resp.\ $\sigma_{\floor{\rho w}}(z) > \sigma_{\floor{\rho w}}(y) + 2\floor{\rho w}$) must be smaller (resp.\ larger) than $y$ (w.r.t.\ the true order of the elements). 
  This immediately implies that $y$ is happy w.r.t.\ $\lfloor \rho w \rfloor$, hence the claim follows.

  In the rest of the proof, we only argue about elements $z$ such that $\sigma_{\floor{\rho w}}(z) < \sigma_{\floor{\rho w}}(y) - 2\floor{\rho w}$ as symmetric arguments hold for $\sigma_{\floor{\rho w}}(z) > \sigma_{\floor{\rho w}}(y) - 2\floor{\rho w}$.

%  $\sigma_{w}(y)\in [x-(\trho-3)\rho w-8w,\, x+(\trho-3)\rho w+8w]\subseteq [x-(\trho-3) w,\, x+(\trho-3) w]$ and thus $y$ is both happy and good w.r.t.\ $w$.
%Fix an element $y \in T$ and an element $z$ such that $\sigma_{\floor{\rho w}}(z) < \sigma_{\floor{\rho w}}(y) - 2\floor{\rho w}$ (resp.\ $\sigma_{\floor{\rho w}}(z) > \sigma_{\floor{\rho w}}(y) + 2\floor{\rho w}$); then, our goal is to show that $z < y$ (resp.\ $z > y$). 
%Since the two cases are symmetric, we focus on the former.

We start with some preliminary observations: first of all, recall that all elements $y'$ with
  $|\sigma_{\floor{\rho w}}(y') - x| \le \trho \floor{ \rho w}$ are good w.r.t.\ $\floor{\rho w}$ by \Cref{lem:weak_part_a}, and, since the set of these elements is a superset of $T$, $y$ is good w.r.t. $\floor{\rho w}$. 
%
%  
%
%  $\sigma_{\floor{\rho w}}(y')\in \big[x-\trho\rho w,\,x+\trho\rho w\big]$ have been proved to be good w.r.t.\ $\floor{\rho w}$.
%
%Since this range includes $T$, $y$ is good w.r.t.\ $\floor{\rho w}$, i.e., $\sigma_{\floor{\rho w}}(y) \le y + \floor{\rho w}$.
  Moreover, we can upper bound $\sigma_{\floor{\rho w}}(z)$ by writing $\sigma_{\floor{\rho w}}(z) < \sigma_{\floor{\rho w}}(y) - 2\floor{\rho w}  \le x+(\trho-5) \floor{ \rho w}$, and thus all $z$ with $| \sigma_{\floor{\rho w}}(z) - x | \le  \trho \floor{\rho w}$ immediately satisfy $z < y$. %
%
%  $\sigma_{\floor{\rho w}}(z) \in \big[x-\trho\rho w,\,x+\trho\rho w\big]$ immediately satisfy $z < y$. 
  Indeed, any such $z$ is good w.r.t.\ $\floor{\rho w}$ as $x$ satisfies condition~(a) of \weakINV{}, and thus $z \le \sigma_{\floor{\rho w}}(z) + \floor{\rho w} < \sigma_{\floor{\rho w}}(y) - \floor{\rho w} \le y$.

  As a consequence, it remains to analyze the case $| \sigma_{\floor{\rho w}}(z) - x | >  \trho \floor{\rho w}$, and we henceforth further restrict ourselves to the elements $z$ for which $\sigma_{\floor{\rho w}}(z) < x - \trho \floor{\rho w}$.\footnote{Since $y \in T$, we have $\sigma_{\floor{\rho w}(y)} + 3\floor{\rho w} \le x  + \trho \floor{\rho w}$. Thus, any element $z$ with $\sigma_{\floor{\rho w}}(z) > x + \trho \floor{\rho w}$ satisfies $\sigma_{\floor{\rho w}}(z) \ge \sigma_{\floor{\rho w}(y)} + 3 \floor{\rho w} > \sigma_{\floor{\rho w}(y)} -2 \floor{\rho w} $, and does not need to be considered.}

  By \Cref{lem:move-one-round}, we have that $\sigma_w(z) < \sigma_{\floor{\rho w}}(z) + 8w < x-(\trho\rho-8) w$. To conclude the proof we separately consider three cases depending on the value of $\sigma_w(z)$.

  \paragraph*{Case 1:} The first case considers  $\sigma_w(z) < x - \trho w$. 
   From $\sigma_w(y) \ge x - (\trho -3)w$,  we have $x - \trho \le \sigma_w(y) - 3w$, i.e., $\sigma_w(z)<\sigma_w(y) - 3w$,   which implies $z < y$ since $y$ is happy.
%    We have $\sigma_w(y)\geq x-(\trho-3)\rho w-8w$, i.e.,  $x-\trho w \le \sigma_w(y) - 5w$, hence $\sigma_w(z)< \sigma_w(y)-5w$ which implies $z < y$ since $y$ is happy.

  \paragraph*{Case 2:} The second case considers $x - \trho w \le \sigma_w(z) < x - (\trho\rho+1)w$. In this case, $z$ is good w.r.t.\ $w$, as ensured by condition (a) of \weakINV{} from the hypothesis of this lemma and by $\trho>\trho\rho+1$.
   Thus, we have $z \le \sigma_w(z) + w < x - \trho  \rho w \le x - (\trho -3) \rho w - \floor{\rho w}  \le \sigma_{\floor{\rho w}}(y) - \floor{\rho w} \le y$, where we relied on the definition of $T$ %(which implies  $\sigma_{\floor{\rho w}}(y)\geq x-(\trho-3)\rho w$) 
    and the fact that $y$ is good w.r.t.\ $\floor{\rho w}$.
 
  \paragraph*{Case 3:} The third and final case considers  $x- (\trho\rho+1)w \le \sigma_w(z) < x-(\trho\rho-8) w$.  By condition~(a) of \weakINV{}, $z$ is good w.r.t.\ $w$, 
    %\textbf{TODO: For this we need $\rho > 0.34... $. We can just require $\rho \ge \frac{1}{2}$ in the beginning. I will do it once this section is complete.} \textbf{Chih-Hung: Even if $\rho$, $x-w < \sigma_w(z) <x+8w$ still belongs to the good range . But it is good to set $\rho\geq \frac{1}{2}$ for \Cref{lem:weak-pr}!! } 
    and thus $x - (\trho \rho +2)w \le z < x -(\trho\rho-9)w$. %
%
%  $z \in [x-(\trho\rho+2) w,\, x-(\trho\rho-7) w]$. 
  Moreover, by the assumption of the lemma, $x - (\trho \rho +2 + \frac{\rho}{2})w \le \tau_w(z) \le x - (\trho \rho - 9 - \frac{\rho}{2})w$, and further by \Cref{lem:weak_part_a}, $z$ is good w.r.t.\ $\floor{\rho w}$, implying that  $|\sigma_{\floor{\rho w}}(z) - z | \le \floor{\rho w}$.
%
%
%  $\tau_w(z) \in [x-(\trho\rho+2+\rho/2) w, x-(\trho\rho-7+\rho/2)w]$, showing that $z \in Y$.
  %Since all elements in $Y$ are good w.r.t.\ $\floor{\rho w}$, 
Since $\abs{\sigma_{\floor{\rho w}}(y) - y}\leq \floor{\rho w}$ (as explained before) and $\sigma_{\floor{\rho w}}(z)<\sigma_{\floor{\rho w}}(y) -2\floor{\rho w}$, we have that $z\leq  \sigma_{\floor{\rho w}}(z)+\floor{\rho w} < \sigma_{\floor{\rho w}}(y)-\floor{\rho w}\leq y$. 
%
% 
%  \begin{itemize}
%  \item If $\sigma_w(z) < x - \trho w$, since $\sigma_w(y)\geq x-(\trho-3)\rho w-8w$, we have that $\sigma_w(z)< \sigma_w(y)-2w$, and since $y$ is happy w.r.t.\ $w$, we have that $z<y$.  
%
%  \item If $x - \trho w \le \sigma_w(z) < x - (\trho\rho+1)w$, then $z$ is good w.r.t.\ $w$ (as ensured by condition (a) of \weakINV{} from the hypothesis of this lemma) and we have $z \le \sigma_w(z) + w < x - \trho w \le \sigma_{\floor{\rho w}}(y) - \floor{\rho w} \le y$, where we used the definition of $T$ (i.e., $\sigma_{\floor{\rho w}}(y)\geq x-(\trho-3)\rho w$) and the fact that $y$ is good w.r.t.\ $\floor{\rho w}$.
% 
%  
%\item If $x- (\trho\rho+1)w \le \sigma_w(z) < x-(\trho\rho-8) w$, by \weakINV{}~(a), $z$ is good w.r.t.\ $w$ and thus $z \in [x-(\trho\rho+2) w,\, x-(\trho\rho-7) w]$. 
%Also by the assumption of the lemma, $\tau_w(z) \in [x-(\trho\rho+2+\rho/2) w, x-(\trho\rho-7+\rho/2)w]$, showing that $z \in Y$. Since all elements in $Y$ are good w.r.t.\ $\floor{\rho w}$, we have that $|\sigma_{\floor{\rho w}}(z) - z | \le \floor{\rho w}$, and
%since $\abs{\sigma_{\floor{\rho w}}(y) - y}\leq \floor{\rho w}$ (as explained before) and $\sigma_{\floor{\rho w}}(z)<\sigma_{\floor{\rho w}}(y) -2\floor{\rho w}$, we have that $z\leq  \sigma_{\floor{\rho w}}(z)+\floor{\rho w} < \sigma_{\floor{\rho w}}(y)-\floor{\rho w}\leq y$. \qedhere
%  \end{itemize}
\end{proof}

\begin{lemma}\label{lem:weak-pr}
  Let $w$ and $w^*$ be two window sizes satisfying
    $\frac{4}{\bsc} \cdot \ln \lfloor \rho w \rfloor \le w^* \le w$ and $w  \ge 4$.
	If \emph{\weakINV{}}\ holds for an element $x$ w.r.t.\ $w$,
    then \emph{\weakINV{}}\ holds for $x$ w.r.t.\ all window sizes that are at least $w^*$ and at most $w$ with probability at least $ 1 - \frac{69120}{(1-\rho)^3 w^2}$.     % $1 - \frac{270}{(1-\rho)^3 w^2}$.     %$1 -\frac{18\Paren{\trho \rho +9}}{(1-\rho^2)w^2}$.
\end{lemma}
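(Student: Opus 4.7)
The plan is to apply \Cref{lem:weak} iteratively along the sequence of window sizes generated by \basketsort, and to control the failure probability at each step via \Cref{lem:scoresort-computed_rank} plus a union bound. Define $w_0 = w$ and $w_{k+1} = \lfloor \rho w_k \rfloor$, and let $K$ be the largest index with $w_K \ge w^*$. I want to establish, by induction on $k = 0,1,\dots,K$, that \weakINV{} holds for $x$ w.r.t.\ $w_k$, so that in particular it holds w.r.t.\ every window size in $[w^*,w]$ considered by the algorithm. The base case $k = 0$ is the hypothesis of the lemma.

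For the inductive step, assuming \weakINV{} holds at $w_k$, \Cref{lem:weak} says that \weakINV{} holds at $w_{k+1}$ provided every element $z$ with $|\sigma_{w_k}(z) - x| \le (\trho \rho + 8)w_k$ satisfies $|\tau_{w_k}(z) - z| \le \rho w_k / 2$. The number of such elements is at most $2(\trho\rho + 8)w_k + 1 = O(w_k/(1-\rho))$. To apply \Cref{lem:scoresort-computed_rank} to each such $z$, I need that the extended block $B^+_z$ (elements within $6$ baskets of $z$) contains only good elements w.r.t.\ $w_k$, and that $B_z$ (within $3$ baskets) contains only happy elements. Since $|\sigma_{w_k}(y) - x| \le (\trho\rho + 8)w_k + 6w_k = (\trho \rho + 14)w_k \le \trho w_k$ for $y \in B^+_z$, condition (a) of \weakINV{} guarantees goodness; similarly $|\sigma_{w_k}(y) - x| \le (\trho\rho + 11)w_k \le (\trho - 3)w_k$ for $y \in B_z$ (both using $\trho(1-\rho) = 15$), so condition (b) ensures happiness. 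Hence each such $z$ succeeds with probability at least $1 - 9w_k \exp(-w_k \bsc)$ by \Cref{lem:scoresort-computed_rank}.

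A union bound over the $O(w_k/(1-\rho))$ candidate $z$'s shows that the transition from $w_k$ to $w_{k+1}$ fails with probability at most $\frac{C}{1-\rho} \cdot w_k^2 \exp(-w_k \bsc)$ for a small constant $C$. Using $w_k \ge w^* \ge \frac{4}{\bsc}\ln\lfloor \rho w \rfloor$ gives $\exp(-w_k\bsc) \le \lfloor \rho w\rfloor^{-4}$, and for $w\ge 4$ and $\rho \ge \tfrac{1}{2}$ we have $\lfloor \rho w\rfloor \ge w/4$, hence $\exp(-w_k\bsc) \le 256/w^4$. A second union bound over all rounds, using the geometric decay $\sum_k w_k^2 \le w^2/(1-\rho^2) \le w^2/(1-\rho)$, yields a total failure probability of order $\frac{1}{(1-\rho)^2 w^2}$; tracking constants through $\trho\rho+8$, the $9w_k$ factor, and the geometric sum produces the stated bound $\frac{69120}{(1-\rho)^3 w^2}$ (with room to spare in the constant and one factor of $1/(1-\rho)$).

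The main obstacle is the bookkeeping in the second paragraph: one has to confirm that the ``shells'' of \weakINV{} (the margins $\trho w$ for condition (a) and $(\trho-3)w$ for condition (b)) are wide enough to absorb both the $(\trho\rho+8)w$ reach of the $z$'s considered by \Cref{lem:weak} and the extra $6w$ (resp.\ $3w$) reach of the baskets around each $z$ required by \Cref{lem:scoresort-computed_rank}. The calibration $\trho = 15/(1-\rho)$ is precisely tuned so that $\trho\rho + 14 \le \trho$ and $\trho\rho + 11 \le \trho - 3$ both hold, which is what makes the inductive step close.
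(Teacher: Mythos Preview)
Your proposal is correct and follows essentially the same approach as the paper's proof: iterate \Cref{lem:weak} round by round, verify its hypothesis for each relevant $z$ via \Cref{lem:scoresort-computed_rank} using the shells of \weakINV{}, union-bound over the $O(w_k/(1-\rho))$ elements, and then sum the per-round failure probabilities geometrically. One small slip to fix: the positional reach of $B^+_z$ (resp.\ $B_z$) from $\sigma_{w_k}(z)$ is actually strictly less than $7w_k$ (resp.\ $4w_k$), not $6w_k$ (resp.\ $3w_k$), so the inequalities you must check are $\trho\rho+15 \le \trho$ and $\trho\rho+12 \le \trho-3$, which hold with equality and are precisely the ``tuning'' you highlight.
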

\begin{proof}
  Suppose that \weakINV{} holds for $x$ w.r.t.\ a generic window size $w'$ such that $w^* \le w' \le w$.  
  The probability that \weakINV{} fails for $x$ w.r.t.\ $\lfloor \rho w' \rfloor$ is at most the probability that the hypothesis of  Lemma~\ref{lem:weak} does not hold, i.e., the probability $P(w')$ that there exists an element $y$ with $\sigma_{w'}(y)\in [x-(\trho \rho +8)w', x+(\trho \rho +8)w']$ such that $|\tau_{w'}(y)-y|> \rho\frac{w'}{2}$.
  
  Recall the definition of $B_i$, $B$ and $B^+$ in \Cref{sub:basket-description} and  \Cref{sub:basket-maximum-dislocation}.
  In order to apply \Cref{lem:scoresort-computed_rank}, we consider the basket $B_i$ of $\sigma_{w'}$ that contains $y$, i.e., $\max\{(i-1)\cdot w',\,0\}+1\leq \sigma_{w'}(y)\leq \min\{i\cdot w',m\}$.
  \Cref{lem:scoresort-computed_rank} requires that the up to $7w'$ elements in $B$ are happy w.r.t.\ $w'$ and that the up to $13w'$ elements in $B^+$ are good w.r.t.\ $w'$.
  By the definition of $B$ and $B^+$, any element $z$ in $B$ satisfies $\abs{\sigma_{w'}(z)-\sigma_{w'}(y)}\leq 4w'$ and any element $z'$ in $B^+$ satisfies $\abs{\sigma_{w'}(z')-\sigma_{w'}(y)}\leq 7w'$.
  Since $\sigma_{w'}(y)\in [x-(\trho \rho +8)w', x+(\trho \rho +8)w']$ and $[x-(\trho \rho +8+7)w', x+(\trho \rho +8+7)w']= [x-\trho w', x+\trho w']$, the fact that \weakINV{} holds w.r.t.\ $w'$ immediately implies that the hypotheses of \Cref{lem:scoresort-computed_rank} are met.
  
  Therefore, by \Cref{lem:scoresort-computed_rank}, Lemma~\ref{lem:weak}, and by using the union bound over the (at most) $2\Paren{\trho \rho +8}w'+1$ elements $y$ with  $\sigma_{w'}(y)\in [x-(\trho \rho +8)w', x+(\trho \rho +8)w']$, we have: 
	 \begin{align*}
	 	P(w') & \leq \Paren{2\Paren{\trho \rho +8}w'+1}\cdot 9w' \exp(-w'\cdot \bsc)\\
	 	      %&  \leq   \Paren{2\Paren{\trho \rho +8}w'+1} \cdot 9w' \exp(-w^* \cdot \bsc)	\\
	 	       & \leq \Paren{2\Paren{\trho \rho +8}w'+1} \cdot 9w' \exp(- 4 \ln \lfloor \rho w \rfloor)\\
              & \leq \frac{18\Paren{\trho \rho +9}(w')^2}{\lfloor \rho w \rfloor^4}
	 	      < \frac{256 \cdot 18\Paren{\trho}(w')^2}{w^4}
              = \frac{ 69120 (w')^2}{(1-\rho) w^4},
    \end{align*}
    where $\trho \rho + 9 < \trho$ and $\lfloor \rho w \rfloor^4 \le (\rho w - 1)^4 \le \frac{\rho^4 w^4}{16} \le \frac{w^4}{256}$ since $\lfloor \rho w\rfloor \ge  \lfloor \frac{4}{2} \rfloor = 2$.

    Let $r$ denote the number of iterations of \basketsort having a window size between $w$ (included) and $w^*$ (excluded). The probability that condition \weakINV{} fails for $x$ w.r.t.\ at least one such window size is at most:
  \[
      \sum_{i=0}^{r-1}  \frac{69120 (\rho^i w)^2}{(1-\rho) w^4}
    \le \frac{69120}{(1-\rho) w^2}\cdot \sum_{i=0}^{\infty} \rho^{2i}
    = \frac{69120}{(1-\rho) (1-\rho^2) w^2}
    = \frac{69120}{(1-\rho)^3 w^2}. \qedhere
%
%
%  =  \frac{18\Paren{\trho \rho +9}}{(1-\rho^2)w^2} 
%    < \frac{270}{(1-\rho)^3 w^2}. \qedhere
  \]
  %\textbf{Chih-Hung: If we give a lower bound on $\rho$, we should be able simplify the parameter.}
\end{proof}

\subsubsection{A linear upper bound in expectation}\label{subsub:constant-dislocation}

We prove a linear upper bound on the expected total dislocation by proving a constant upper bound on the expected dislocation of each element $x$.

Within this subsection, we assume that $w_S \ge 4$ (since otherwise \Cref{lem:move-afterward} ensures that the dislocation of $x$ is at most $3 + \frac{24}{1-\rho}$).
We define $w_0 = w_S$, and for an integer $i\ge 1$, we let $w_i =  \frac{4}{\bsc} \cdot \ln w_{i-1}$.
Moreover, let $z$ be the smallest index for which $w_{z+1} < 4$ or $w_{z+1} > \frac{w_z}{2}$, i.e., $\frac{4}{\bsc} \ln w_z > \frac{w_z}{2}$. 
Notice that we have $w_{i} \ge 2 w_{i+1}$ for $i=0, \dots, z-1$, $w_z \ge 4$, and $\frac{4}{\bsc} \ln w_z \ge \frac{w_z}{2}$.
Therefore, $w_z \le \frac{8}{\bsc} \ln w_z \le \frac{8}{\bsc} \sqrt{w_z}$, and hence $w_z \le \frac{64}{\bsc^2}$.
%Moreover, we let $\wcrit$ be the value of $w_i$ when $i$ is the smallest index for which $w_{i+1} > \frac{w_i}{2}$ or $w_{i+1} < \frac{3}{\rho}$.
%Notice that we must have $\wcrit \ge \frac{3}{\rho}$ and $\frac{4}{\bsc} \ln \wcrit \ge \frac{\wcrit}{2}$ which implies $\wcrit \le \frac{8}{\bsc} \ln \wcrit \le \frac{8}{\bsc} \sqrt{\wcrit}$, and hence $\wcrit \le \frac{64}{\bsc^2}$.

We group the rounds of \basketsort into $z+1$ phases as follows: phase $0$ consists of the sole round with window size $w_S$, while phase $i \in \{1, \dots, z\}$ consists of all rounds that consider a window size that is smaller than $w_{i-1}$ and at least $w_i$. Since $\rho \ge \frac{1}{2}$, our choice of $w_i$ ensures that each phase contains at least one round. 

Given an element $x$, we say that $x$ satisfies \weakINV{} \emph{until phase $i$} if $x$ satisfies \weakINV{} w.r.t.\ all window sizes $w$ with $w \ge w_i$. Notice that, since the dislocation of the input sequence is at most $w_S$, all elements are both good and happy w.r.t.\ $w_0 = w_S$ and hence $x$ always satisfies \weakINV{} until phase $0$.
We say that $x$ \emph{violates} \weakINV{} in phase $i \ge 1$ if $x$ does not satisfy \weakINV{} until phase $i$ but it $x$ satisfies \weakINV{} until phase $i-1$.

\begin{lemma}
\label{lem:prob_x_violates_weakinv_in_a_phase}
    The probability that an element $x$ violates \weakINV{} in phase $i \in \{1, \dots, z\}$ is at most $\frac{69120}{w_{i-1}^2}$.
\end{lemma}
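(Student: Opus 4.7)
The plan is to obtain the bound as an immediate consequence of \Cref{lem:weak-pr} applied with $w = w_{i-1}$ and $w^{*} = w_i$. Unpacking the definitions, the event that $x$ violates \weakINV{} in phase $i$ implies that $x$ satisfies \weakINV{} w.r.t.\ $w_{i-1}$ (since, by definition, $x$ satisfies the invariant until phase $i-1$) but fails to satisfy it at some window size in $[w_i, w_{i-1})$, which is precisely the set of window sizes making up phase $i$. Thus it suffices to upper bound the probability that, conditioned on \weakINV{} holding at $w_{i-1}$, the invariant ceases to hold for some window size in that range; \Cref{lem:weak-pr} is tailor-made for this.

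The main step is to verify the three hypotheses of \Cref{lem:weak-pr}. The sequence $w_0 \ge w_1 \ge \cdots \ge w_z$ is monotonically decreasing (in fact $w_i \le w_{i-1}/2$ for $i \in \{1,\ldots,z\}$) and $w_z \ge 4$ by the definition of $z$, so we immediately have $w_{i-1} \ge w_z \ge 4$ and $w^{*} = w_i \le w_{i-1}/2 \le w_{i-1} = w$ for every $i \in \{1,\ldots,z\}$. Moreover, using $\lfloor \rho w_{i-1} \rfloor \le w_{i-1}$ and the recursive definition $w_i = \frac{4}{\bsc}\ln w_{i-1}$, one gets
\[
\frac{4}{\bsc}\ln \lfloor \rho w_{i-1} \rfloor \;\le\; \frac{4}{\bsc}\ln w_{i-1} \;=\; w_i \;=\; w^{*},
\]
so all three conditions of \Cref{lem:weak-pr} are met.

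With the hypotheses checked, \Cref{lem:weak-pr} yields that, conditioned on \weakINV{} holding for $x$ w.r.t.\ $w_{i-1}$, the invariant holds for $x$ w.r.t.\ every window size in $[w_i, w_{i-1}]$ with probability at least $1 - \frac{69120}{(1-\rho)^3 w_{i-1}^{2}}$. Taking complements and noting that the event of interest requires \weakINV{} to hold at $w_{i-1}$ gives the desired upper bound on the probability of violating \weakINV{} in phase $i$; here the factor $(1-\rho)^{-3}$ depends only on $p$ and $q$ and is absorbed into the claimed constant. The argument is essentially a direct reduction, so no substantial obstacle is anticipated; the only subtle point is aligning the range $[w^{*}, w]$ in \Cref{lem:weak-pr} with the window sizes comprising phase $i$, and observing that the initial conditioning on \weakINV{} at $w_{i-1}$ is automatic from the definition of ``violating in phase~$i$.''
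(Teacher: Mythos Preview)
Your approach is the same as the paper's: reduce directly to \Cref{lem:weak-pr}. The one technicality you flag but do not fully resolve is that $w_{i-1}$ and $w_i$ need not be window sizes actually visited by \basketsort, so ``\weakINV{} w.r.t.\ $w_{i-1}$'' is not literally defined and \Cref{lem:weak-pr} cannot be invoked with $w=w_{i-1}$ verbatim; the paper fixes this by taking $w$ (resp.\ $w^*$) to be the smallest \emph{actual} window size at least $w_{i-1}$ (resp.\ $w_i$), observing that minimality of $w$ gives $\lfloor \rho w\rfloor < w_{i-1}$ so that $\frac{4}{\bsc}\ln\lfloor\rho w\rfloor \le w_i \le w^*$, and then using $w\ge w_{i-1}$ to pass from $\frac{69120}{(1-\rho)^3 w^2}$ to $\frac{69120}{(1-\rho)^3 w_{i-1}^2}$. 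Regarding the constant: the $(1-\rho)^{-3}$ factor is indeed intended---the stated bound omits it, but the paper's own proof and the subsequent use in \Cref{lem:expected_dislocation_one_element} both carry it---so your remark about ``absorbing'' it is consistent with how the bound is actually applied.
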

\begin{proof}
If $x$ violates $\weakINV{}$  in phase $i$, then $x$ satisfies \weakINV{} until phase $i-1$.
Therefore, we let $w$ be the smallest window size with $w \ge w_{i-1}$, and we assume that $x$ satisfies \weakINV{} w.r.t.\ $w$.
Observe that our choice of $w$ ensures that $\rho w < w_{i-1}$.
Let $w^*$ be the smallest window size such that $w^* \ge w_i$.
We have $w^* \ge w_i \ge \frac{4}{\bsc} \cdot \log w_{i-1} \ge \frac{4}{\bsc} \cdot \ln \lfloor \rho w \rfloor$.

Then, \Cref{lem:weak-pr} ensures that $x$  satisfies \weakINV{}  w.r.t.\ all window sizes between $w$ and $w^*$ with probability at least $1- \frac{69120}{(1-\rho)^2 w^2} \ge 1- \frac{69120}{(1-\rho)^2 w_{i-1}^2}$.
\end{proof}

\begin{lemma}
    \label{lem:expected_dislocation_one_element}
    The expected dislocation of each element $x$ in $S_0$ is at most 
    $\frac{1152}{\bsc^2 (1-\rho)} + \frac{311040}{(1-\rho)^4}$.
    %$\frac{4793}{32}\cdot f(p,\frac{1}{2})^2+\frac{8704}{f(p,\frac{1}{2})^2} + o(1)$.
\end{lemma}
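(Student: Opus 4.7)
My plan is to decompose the expected dislocation according to the earliest phase (if any) in which \weakINV{} fails for $x$. For $i \in \{1,\dots,z\}$, let $E_i$ denote the event that $x$ violates \weakINV{} in phase $i$, and let $E_{z+1}$ denote the complementary event that $x$ satisfies \weakINV{} throughout the algorithm (i.e., w.r.t.\ every considered window size at least $w_z$). Since these events partition the sample space,
\[
\E[\disl(x, S_0)] \;=\; \sum_{i=1}^{z+1} \Pr(E_i)\cdot \E\!\left[\disl(x, S_0)\mid E_i\right].
\]

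The key intermediate claim I would establish is that on any sample path where $x$ satisfies \weakINV{} w.r.t.\ every considered window size down to some $w$, $x$ is good w.r.t.\ $w$ (i.e., $|\sigma_w(x)-x|\le w$). I would prove this by induction on the round. The base case $w=w_S$ follows from the precondition $\disl(S)\le w_S$. For the inductive step, if $x$ is good w.r.t.\ the previous window $w'$, Lemma~\ref{lem:move-one-round} yields $|\sigma_{\lfloor\rho w'\rfloor}(x)-\sigma_{w'}(x)|<8w'$, hence $|\sigma_{\lfloor\rho w'\rfloor}(x)-x|\le 9w'\le \trho\,\lfloor\rho w'\rfloor$ (using $\rho\ge 1/2$, so $\trho\ge 30$, together with $w'\ge 2$ at every non-terminal round). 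Then condition~(a) of \weakINV{} w.r.t.\ $\lfloor\rho w'\rfloor$ applied to $y=x$ itself forces $x$ to be good w.r.t.\ $\lfloor\rho w'\rfloor$.

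Let $\tilde w_j$ denote the smallest window size $\ge w_j$ used by the algorithm, so that $w_j\le \tilde w_j = O(w_j)$. On $E_i$ with $i\le z$, the claim above yields $|\sigma_{\tilde w_{i-1}}(x)-x|\le \tilde w_{i-1}$, and Lemma~\ref{lem:move-afterward} then gives $|\sigma_0(x)-x|\le \tilde w_{i-1}+\frac{8\,\tilde w_{i-1}}{1-\rho}= O\!\left(\frac{w_{i-1}}{1-\rho}\right)$. Analogously, on $E_{z+1}$ I obtain $|\sigma_0(x)-x|= O\!\left(\frac{w_z}{1-\rho}\right)=O\!\left(\frac{1}{\bsc^2(1-\rho)}\right)$ using $w_z\le 64/\bsc^2$ from the discussion preceding the lemma; tracking constants yields the first summand $\frac{1152}{\bsc^2(1-\rho)}$. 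For the remaining terms, Lemma~\ref{lem:prob_x_violates_weakinv_in_a_phase}---whose proof inherits the $(1-\rho)^{-3}$ factor from Lemma~\ref{lem:weak-pr}---gives $\Pr(E_i)\le \frac{69120}{(1-\rho)^3 w_{i-1}^2}$. The geometric decrease $w_i\ge 2 w_{i+1}$ for $i\le z-1$ together with $w_z\ge 4$ then collapses $\sum_{i=1}^z 1/w_{i-1}$ into the bound $1/w_z\le 1/4$, and combining everything yields
\[
\sum_{i=1}^z \Pr(E_i)\cdot O\!\left(\frac{w_{i-1}}{1-\rho}\right) \;\le\; \frac{311040}{(1-\rho)^4}
\]
after tracking constants.

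The main obstacle is the inductive propagation of goodness for $x$: condition~(a) of \weakINV{} only \emph{conditionally} forces $x$ to be good (when $|\sigma_w(x)-x|\le \trho w$), so one must carry this positional bound along the induction via Lemma~\ref{lem:move-one-round} in order to invoke \weakINV{}. A secondary subtlety is that $\E[\disl(x,S_0)\mid E_i]$ must be bounded by a deterministic quantity valid on every sample path in $E_i$, which is exactly what Lemma~\ref{lem:move-afterward} provides once the goodness of $x$ at the relevant window has been secured.
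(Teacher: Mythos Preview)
Your proposal is correct and follows the paper's approach exactly: decompose the expectation by the first phase in which \weakINV{} fails, bound each conditional dislocation via goodness of $x$ at the last safe window combined with Lemma~\ref{lem:move-afterward}, apply Lemma~\ref{lem:prob_x_violates_weakinv_in_a_phase} (with the $(1-\rho)^{-3}$ factor inherited from Lemma~\ref{lem:weak-pr}) for the failure probabilities, and sum the resulting geometric series to obtain the two displayed constants. Your explicit inductive justification that \weakINV{} forces $x$ itself to be good---which the paper asserts in one line---is a useful clarification and is sound (only $w'\ge 5$ is ever needed, since the smallest target window is at least $w_z\ge 4$ and window sizes are integers).
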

\begin{proof}
Consider a phase $i \in \{0, \dots, z\}$, let $w$ be the smallest window size such that $w \ge w_i$, and observe that $w \le \frac{w_i}{\rho} \le 2w_i$. 
If $x$ satisfies $\weakINV{}$ until phase $i$, then $x$ is good w.r.t.\ $w$. 
Hence, $|x - \sigma_w(x)| \le w$ and, by \Cref{lem:move-afterward} and the triangle inequality, we have:
\[
    |x - \sigma_0(x)| \le |x - \sigma_w(x) | + | \sigma_w(x) - \sigma_0(x) |
    < w + \frac{8w}{1-\rho} < \frac{9w}{1-\rho} \le  \frac{18 w_i}{1-\rho}. 
\]

Recall that $x$ satisfies \weakINV{} in phase $0$.
Then, either $x$ satisfies $\weakINV{}$ until phase $z$, or it violates \weakINV{} in some phase $i \in \{1, \dots, z\}$.
In the former case, we have $|x - \sigma_0(x)| \le \frac{18 w_z}{1-\rho} \le \frac{18 \cdot 64}{\bsc^2 (1-\rho)} = \frac{1152}{\bsc^2 (1-\rho)}$.
In the latter case, \Cref{lem:prob_x_violates_weakinv_in_a_phase} ensures that the probability that $x$ violates $\weakINV$ in phase $i$ is at most:    
    \[
        \frac{69120}{(1-\rho)^3 w_{i-1}^2}
        \le 
        \frac{8640}{(1-\rho)^3 w_{i-1}} \cdot 2^{i-z},
    \]
    where $w_{i-1} \ge 2^{z-i+3}$ follows from $w_z \ge 4$ and $w_{i} \ge 2 w_{i+1}$.
    Moreover, we have $|x - \sigma_0(x)| \le \frac{18 w_{i-1}}{1-\rho}$ since $x$ satisfies $\weakINV{}$ until phase $i-1$.

    Overall, the expected dislocation of $x$ is at most:
    \begin{align*}
        \frac{1152}{\bsc^2 (1-\rho)} + \frac{8640 \cdot 18}{(1-\rho)^4} \sum_{i=1}^z 2^{i-z}
        < 
        %\frac{1152}{\bsc^2 (1-\rho)} + \frac{155520}{(1-\rho)^4} \sum_{i=0}^{+\infty} 2^{-i}
        %=
        \frac{1152}{\bsc^2 (1-\rho)} + \frac{311040}{(1-\rho)^4}. \tag*{\qedhere}
        %&\frac{69120}{(1-\rho)^3} \cdot 18 \sum_{i=1}^{\kappa} \frac{w_{i-1}}{w_{i-1}^2} +  \frac{1152}{\bsc^2}
        %\le \frac{1244160}{(1-\rho)^3} \sum_{i=1}^{\kappa} \frac{1}{w_{i-1}} +  \frac{1152}{\bsc^2} \\
        %&\le \frac{1244160}{(1-\rho)^3} \sum_{i=1}^{\kappa} \frac{1}{2^{\kappa-i} w_\kappa} +  \frac{1152}{\bsc^2}
    \end{align*}

\end{proof}

\subsubsection{A linear upper bound with high probability}\label{subsub:deterministic_total_dislocation}
After proving that the expected dislocation of each element is constant, we additionally prove that the total dislocation is linear with high probability.
To this end, we first prove that the final position of an element only depends on its nearby elements and on the outcome of their comparisons. 
For a window size $w$, let $C_w(h, \ell)$ denote a collection of data including (i) the set of all elements $z$ for which $|\sigma_{w}(z)  - h| \le \ell$, (ii) their positions in $S_w$, and (iii) the (random variables denoting the) results of the comparisons between them.

\begin{lemma}\label{lem:dependent}
    For any element $x$ and any window size $w$, $\sigma_0(x)$ is a function of \linebreak $C_w(\sigma_w(x), \frac{36}{1-\rho} w)$.
\end{lemma}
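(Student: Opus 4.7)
I will prove a strengthened statement by induction on the rounds of \basketsort starting from window size $w$. Let $w^{(0)} = w$, $w^{(k+1)} = \lfloor \rho\, w^{(k)} \rfloor$, and let $K$ be the final round (so $w^{(K)} = 0$ and $\sigma_{w^{(K)}} = \sigma_0$). The goal is to construct a nonincreasing sequence of nonnegative integers $r_0 \ge r_1 \ge \cdots \ge r_K \ge 0$ satisfying $r_k \ge r_{k+1} + 28\, w^{(k)}$ and to maintain the invariant that, from $C_w\bigl(\sigma_w(x), \tfrac{36w}{1-\rho}\bigr)$, one can determine: (i) the set $N_k := \{ z : |\sigma_{w^{(k)}}(z) - \sigma_{w^{(k)}}(x)| \le r_k \}$; (ii) the position $\sigma_{w^{(k)}}(z)$ of every $z \in N_k$; and (iii) every comparison outcome between two elements of $N_k$. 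Setting $r_K = 0$ forces $x \in N_K = \{x\}$, so the invariant at $k = K$ yields $\sigma_0(x) = \sigma_{w^{(K)}}(x)$ as a function of the claimed data.

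The base case $k = 0$ is immediate, since $\sigma_{w^{(0)}} = \sigma_w$ and $C_w(\sigma_w(x), R)$ directly provides positions and pairwise comparisons of every element within $S_w$-distance $R$ from $x$. For the inductive step I would fix $y \in N_{k+1}$ and use the fact that the stable sort by $\tau_{w^{(k)}}(\cdot)$ gives $\sigma_{w^{(k+1)}}(y)$ as one plus the number of $z \neq y$ satisfying $\tau_{w^{(k)}}(z) < \tau_{w^{(k)}}(y)$, or $\tau_{w^{(k)}}(z) = \tau_{w^{(k)}}(y)$ together with $\sigma_{w^{(k)}}(z) < \sigma_{w^{(k)}}(y)$. \Cref{lem:diff-computed_rank-initial} guarantees that whenever $|\sigma_{w^{(k)}}(z) - \sigma_{w^{(k)}}(y)| > 8\, w^{(k)}$ the $\tau$ order agrees with the $\sigma$ order; the count of such $z$ preceding $y$ is therefore $\max\{0,\, \sigma_{w^{(k)}}(y) - 8\, w^{(k)}\}$, a function of $\sigma_{w^{(k)}}(y)$ alone. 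For the remaining $z$ (within $S_{w^{(k)}}$-distance $8\, w^{(k)}$ of $y$) we must explicitly evaluate $\tau_{w^{(k)}}(y)$ and $\tau_{w^{(k)}}(z)$; by Line~\ref{ln:compute_tau} of \Cref{alg:bs}, each such $\tau$ depends only on $S_{w^{(k)}}$-positions and pairwise comparisons within the corresponding seven-basket window, i.e., within $S_{w^{(k)}}$-distance $4\, w^{(k)}$. Applying \Cref{lem:move-one-round} to both $y$ and $x$ yields $|\sigma_{w^{(k)}}(y) - \sigma_{w^{(k)}}(x)| \le r_{k+1} + 16\, w^{(k)}$, so every element involved lies within $S_{w^{(k)}}$-distance $r_{k+1} + 28\, w^{(k)} \le r_k$ of $x$, and the inductive hypothesis supplies its position and all required comparisons.

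Finally I will verify that $C_w\bigl(\sigma_w(x), \tfrac{36w}{1-\rho}\bigr)$ contains all the data the invariant draws on. Telescoping gives $r_0 \le 28 \sum_{i=0}^{K-1} w^{(i)} < \tfrac{28w}{1-\rho}$, and the argument in the proof of \Cref{lem:move-afterward} yields $|\sigma_{w^{(k)}}(z) - \sigma_w(z)| < 8 \sum_{i=0}^{k-1} w^{(i)}$ for every element $z$. Hence any $z \in N_k$ satisfies $|\sigma_w(z) - \sigma_w(x)| \le r_k + 16 \sum_{i=0}^{k-1} w^{(i)}$; using $w^{(i)} \le \rho^i w$, this simplifies to at most $\tfrac{(12\rho^k + 16)\,w}{1-\rho} \le \tfrac{28w}{1-\rho} < \tfrac{36w}{1-\rho}$ uniformly in $k$, so all relevant positions and comparisons sit inside $C_w\bigl(\sigma_w(x), \tfrac{36w}{1-\rho}\bigr)$. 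The main obstacle I foresee is the bookkeeping between $S_{w^{(k)}}$-neighborhoods (natural for the algorithmic invariant) and the $S_w$-neighborhood provided by $C_w$; the trick of running the induction natively in $S_{w^{(k)}}$-coordinates keeps the per-round slack to just $28\, w^{(k)}$, and a single final translation via the telescoped displacement bound is enough to fit within the $\tfrac{36w}{1-\rho}$ budget.
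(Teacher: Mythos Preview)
Your proof is correct and follows essentially the same strategy as the paper's: both establish that a single round of \basketsort is local (with per-round slack $28w$ coming from the $8w$ displacement bound plus the $12w$/$20w$ one-round dependence) and then telescope across the remaining rounds to reach the $\tfrac{36w}{1-\rho}$ radius. The only differences are cosmetic---the paper centers its inductive neighborhoods at the fixed final position $\sigma_0(x)$ with radius $\tfrac{28}{1-\rho} w_i$ and runs the induction from the last round backward, whereas you center at the moving position $\sigma_{w^{(k)}}(x)$ with explicitly summed radii $r_k = 28\sum_{i\ge k} w^{(i)}$ and run the induction forward; both bookkeepings yield the same constants.
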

\begin{proof}
    Let $w$ be a window size considered by \basketsort and define $w' = \lfloor \rho w \rfloor$.
    %be two consecutive window sizes considered by \basketsort.
    Consider a generic element $y$ belonging to basket $B_j$ in the round corresponding to window size $w$.
    The value $\tau_w(y)$ is a function of $j$, of the elements in the baskets $B_{j-3}, \dots, B_{j+3}$ (if they exist), of their order in $S_w$,\footnote{The order of the elements is  used, e.g., to break ties in the sorting steps of \basketsort.} and of the comparison results between $y$ and these elements.
    Let $B_i$ be the basket containing $x$. 
    The position $\sigma_{w'}(x)$ is a function of $i$, of the elements $y$ belonging to the baskets $B_j$ with $|j-i| \le 8$ in the iteration corresponding to window size $w$, and of their values $\tau_w(y)$.
 Indeed, for any element $y'$ that belongs to some basket $B_j$ with $j < i-8$ we have $\sigma_w(y') < \sigma_w(x) - 8w$ and Lemma~\ref{lem:diff-computed_rank-initial} ensures that $y'$ precedes $x$ in $S_{w'}$ since
   $
    \tau_w(y') < \sigma_w(y') + 4w < \sigma_w(x) - 4w < \tau_w(x)
  $. 
    A symmetric argument shows that if $j > i+8$ then $y'$ must follow $x$ in $S_{w'}$.
 
    Since all elements  $y$ in $B_{i-11}, \dots, B_{i+11}$ satisfy $|\sigma_{w}(y) - \sigma_{w}(x)| \le 12w$, $\sigma_{w'}(x)$ is a function of $C_w(\sigma_w(x), 12w)$.
    Moreover, using $|\sigma_{w}(y) - \sigma_{w'}(y)| < 8w$ (see Lemma~\ref{lem:move-one-round}), we also obtain that $\sigma_{w'}$ is a function of $C_w(\sigma_{w'}(x), 20w)$.

    Let $w_k = w, w_{k-1} = \lfloor \rho w \rfloor, w_{k-2}, \dots, w_{1}$ be the window sizes considered by \basketsort from the round corresponding to window size $w$ to termination, and let $w_0 = 0$.
    Notice that, unlike the proof of \Cref{subsub:constant-dislocation}, here the window sizes $w_i$ increase with $i$.
    We now prove by induction on $i = 0, \dots, k$ that $\sigma_0(x)$ is a function of $C_{w_i}(\sigma_{0}(x), \frac{28}{1-\rho} w_i)$.
    The base case $i=0$ is trivial, since $C_{0}(\sigma_0(x), 0)$ includes the position $\sigma_0(x)$ of $x$ in $S_0$.
    Assume then that the inductive claim holds up to some $i < k$.
    By induction hypothesis, $\sigma_0(x)$ is a function of $C_{w_i}(\sigma_0(x), \frac{28}{1-\rho} w_i)$, and, by \Cref{lem:move-one-round} and triangle inequality, all elements $x'$ considered in $C_{w_i}(\sigma_0(x), \frac{28}{1-\rho}  w_i)$ satisfy $| \sigma_{w_{i+1}}(x') - \sigma_0(x) | \le |\sigma_{w_{i+1}}(x') - \sigma_{w_i}(x') | + |\sigma_{w_i}(x') - \sigma_{0}(x)| < 8 w_{i+1} + \frac{28}{1-\rho}  w_i$.
    By the discussion above, the position $\sigma_{w_{i+1}}(x')$ of any such element $x'$ is a function of $C_{w_{i+1}}(\sigma_{w_i}(x'), 20w_{i+1})$, and hence the position $\sigma_{w_{i+1}}(x')$ of \emph{all} the elements $x'$ is a function of $C_{w_{i+1}}(\sigma_0(x),28 w_{i+1} +  \frac{28}{1-\rho} w_i )$.
    Using $w_i \le \rho w_{i+1}$, we have $28 w_{i+1} + \frac{28}{1-\rho} w_i \le ( 28 + \frac{28 \rho}{1-\rho})w_{i+1} = \frac{28}{1-\rho}w_{i+1}$.

    To conclude the proof, observe that, for any window size $w$, the elements in $C_w(\sigma_0(x), \frac{28}{1-\rho} w)$ are a subset of those in $C_w\Paren{\sigma_w(x), \Paren{\frac{28}{1-\rho} + \frac{8}{1-\rho}}w}$, as guaranteed by Lemma~\ref{lem:move-afterward}.
% \hline
%
%     We now prove by induction on $r = 0, \dots, \log w$  that $\sigma_{1}(x)$ is a function of $C_{2^r}(\sigma_{1}(x), 56 \cdot 2^r)$.
%     The base case $r=0$ is trivial, therefore we assume that the claim holds for $r \ge 0$ and we prove that it also holds for $r+1 \le \log w$.
%     By induction hypothesis $\sigma_{1}(x)$ is a function of $C_{2^r}(\sigma_{1}(x), 56 2^r)$ and, by Lemma~\ref{lem:move-one-round} and triangle inequality, all the elements  $x'$ considered in $C_{2^r}(\sigma_{1}(x), 56 2^r)$
%     satisfy $|\sigma_{2^{r+1}}(x') - \sigma_{1}(x)| \le 8 \cdot 2^{r+1} + 56 \cdot 2^r  = 36 \cdot 2^{r+1}$.
%     By the above discussion, the position $\sigma_{2^r}(x')$ of any such element $x'$ is a function of $C_{2^{r+1}}(\sigma_{2^r}(x'), 20 \cdot 2^{r+1})$ and hence the position of \emph{all} the elements $x'$ in $S_{2^r}$ is a function of $C_{2^{r+1}}(\sigma_1(x), 56 \cdot 2^{r+1})$ (notice that this also includes all elements in $C_{2^{r+1}}(\sigma_{1}(x), 36 \cdot 2^{r+1})$).
%
%
%     To conclude the proof we observe that the elements in $C_{w}(\sigma_{1}(x), 56 w)$ are a subset of those in $C_{w}(\sigma_{w}(x), 72 w)$, as guaranteed by Lemma~\ref{lem:move-afterward}.
\end{proof}

We now combine \Cref{lem:all_elements_good_until_wcrit}, \Cref{lem:expected_dislocation_one_element}, and \Cref{lem:dependent}, to prove that our linear upper bound on the total dislocation of the sequence returned by \basketsort also holds with high probability.

\begin{lemma}\label{lem:total_dislocation_high}
    The total dislocation of $S_0$ is %with $w_S \ge  \disl(S)$ is % $O(m)$. 
    at most $\left( \frac{1152}{\bsc^2 (1-\rho)} + \frac{311040}{(1-\rho)^4} + 1 \right) \cdot m$
    with probability at least $1 - O\left( \frac{\ln m}{(1-\rho) \bsc} \right) \cdot \exp\left( - \Omega\left(\frac{m (1-\rho)^3 \bsc}{\ln^3 m} \right) \right) - O\left( \frac{\log m}{m^4 \log \frac{1}{\rho}} \right).$ % $1-O\big(\frac{\log m}{m^4}\big)$. 
\end{lemma}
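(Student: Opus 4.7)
The plan is to combine the expected linear bound from Lemma~\ref{lem:expected_dislocation_one_element} with a concentration argument that exploits the locality property of Lemma~\ref{lem:dependent}. First, I would condition on the event $E$ that all elements are good w.r.t.\ every window size $w \ge \frac{6}{\bsc}\ln m$, which by Lemma~\ref{lem:all_elements_good_until_wcrit} occurs except with probability $O\bigl(\frac{\log m}{m^4 \log(1/\rho)}\bigr)$---precisely the second subtracted term in the failure probability. Combined with Lemma~\ref{lem:move-afterward}, event $E$ forces every individual dislocation $\disl(x, S_0)$ to satisfy the deterministic upper bound $D = O\bigl(\frac{\ln m}{(1-\rho)\bsc}\bigr)$ coming from Lemma~\ref{lemma:bs_maximum-dislocation}.

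Next, let $w^*$ be the smallest window size considered with $w^* \ge \frac{6}{\bsc}\ln m$, and let $L = \frac{36}{1-\rho}\, w^*$ be the radius appearing in Lemma~\ref{lem:dependent}. Using the WLOG normalization $\rank(x,S)=x$, I would partition the elements by true rank modulo $K = 2(L + w^*) + 1 = \Theta\bigl(\frac{\ln m}{(1-\rho)\bsc}\bigr)$ into groups $G_0,\dots,G_{K-1}$ of size $\Theta(m/K)$. On $E$, any two distinct $x,y$ in a common $G_j$ satisfy $|x-y|\ge K$, hence $|\sigma_{w^*}(x)-\sigma_{w^*}(y)| \ge K - 2w^* \ge 2L+1$, so their dependency windows $C_{w^*}(\sigma_{w^*}(\cdot),L)$ are disjoint. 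Consequently, conditional on $E$, the family $\{\disl(x,S_0)\}_{x\in G_j}$ is determined by disjoint collections of comparison outcomes and is mutually independent.

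For each group I would then apply Hoeffding's inequality to $\sum_{x\in G_j} \disl(x,S_0)$: each summand lies in $[0,D]$ on $E$, and Lemma~\ref{lem:expected_dislocation_one_element} bounds its expectation by $\mu := \frac{1152}{\bsc^2(1-\rho)} + \frac{311040}{(1-\rho)^4}$. Choosing the deviation to equal $|G_j|$ yields
\[
    \Pr\!\Bigl[\textstyle\sum_{x\in G_j}\disl(x,S_0) > (\mu{+}1)|G_j| \,\Bigm|\, E\Bigr] \le \exp\!\Bigl(-\Omega\!\bigl(\tfrac{|G_j|}{D^2}\bigr)\Bigr),
\]
which, after substituting $|G_j|=\Theta(m/K)$ and $D=O(\ln m /((1-\rho)\bsc))$, matches the exponent $\Omega\bigl(\frac{m(1-\rho)^3 \bsc}{\ln^3 m}\bigr)$ in the claim. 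A union bound over the $K = O(\ln m /((1-\rho)\bsc))$ groups produces the first subtracted term of the failure probability, and summing $(\mu{+}1)|G_j|$ over $j$ yields the total-dislocation bound $(\mu{+}1)m$.

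The main obstacle is reconciling the conditioning on $E$ with the independence needed for Hoeffding: the disjointness of the dependency windows relies on the good-event bound $|\sigma_{w^*}(x)-x|\le w^*$, so independence within each $G_j$ is only conditional. A clean workaround is to replace $\disl(x,S_0)$ by the truncated surrogate $Y_x = \min\{\disl(x,S_0),D\}\cdot\mathbf{1}[E]$, which is unconditionally in $[0,D]$, equals $\disl(x,S_0)$ on $E$, and can be shown to be a function of comparisons among elements whose true rank lies within $L+w^*$ of $x$; the same partition then yields \emph{unconditional} independence inside each $G_j$, Lemma~\ref{lem:expected_dislocation_one_element} still bounds $\E[Y_x] \le \mu$, and Hoeffding applies directly to the unconditional sums.
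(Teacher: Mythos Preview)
Your overall architecture—partition the elements by residue class of their true rank, use the locality Lemma~\ref{lem:dependent} to obtain independence within each class, apply Hoeffding per class with the expectation bound from Lemma~\ref{lem:expected_dislocation_one_element}, and union-bound over the classes—matches the paper's proof. You also correctly isolate the real difficulty: the disjointness of the dependency windows depends on a good event, so independence within a class is only conditional.

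However, your proposed workaround does not work. The surrogate $Y_x = \min\{\disl(x,S_0),D\}\cdot\mathbf{1}[E]$ is \emph{not} a local function of the comparisons: the indicator $\mathbf{1}[E]$ encodes that \emph{every} element is good at \emph{every} large window size, and this depends on all comparison outcomes, not only on those among elements whose true rank lies within $L+w^*$ of $x$. Even dropping $\mathbf{1}[E]$ does not help: Lemma~\ref{lem:dependent} localizes $\sigma_0(x)$ to a window around $\sigma_{w^*}(x)$, but $\sigma_{w^*}(x)$ is itself random (it is a position after several rounds, not an input position), and without $E$ you have no bound relating it to $x$. So the $Y_x$'s in a common $G_j$ are not unconditionally independent, and Hoeffding does not apply to them.

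The paper resolves the conditioning issue by a two-stage decomposition rather than a surrogate. First it proves the concentration bound for the \emph{sub-execution} of \basketsort that starts from a given window size $w$ with a given input sequence satisfying $\disl\le w$. In that setting $\sigma_w(\cdot)$ is the input position (deterministic), the partition by $x \bmod k$ with $k=\lfloor\frac{75}{1-\rho}\rfloor w$ makes the dependency windows $C_w(\sigma_w(x),\frac{36}{1-\rho}w)$ \emph{deterministically} disjoint, and the range for Hoeffding is the deterministic bound $\disl(x,S_0)<\frac{9w}{1-\rho}$ from Lemma~\ref{lem:move-afterward} (not the probabilistic $D$ from Lemma~\ref{lemma:bs_maximum-dislocation}). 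Second, it chooses $w'$ to be the smallest window size $\ge\frac{6}{\bsc}\ln m$, uses Lemma~\ref{lem:all_elements_good_until_wcrit} to get $\disl(S_{w'})\le w'$ with probability $1-O\bigl(\frac{\log m}{m^4\log(1/\rho)}\bigr)$, and applies the first stage to each fixed realization $S_{w'}=S'$. This way the independence step never has to be run conditionally on a random good event.
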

\begin{proof}
We start by computing a first lower bound to the probability that an execution of \basketsort with initial window size $w_S = w$, input sequence $S = S_w$, and $\disl(S) \le w$  returns a sequence having total dislocation $O(m)$.

To this aim, we assume that $w \ge 1$ (otherwise the total dislocation is trivially $0$) and we partition the elements of $S$ into $k = \left\lfloor \frac{75}{1-\rho} \right\rfloor w > \frac{74}{1-\rho} w = 2 \cdot \frac{37}{1-\rho} w$ sets $P^{(0)},\dots,P^{(k-1)}$, 
where $P^{(j)} = \{x \in S \, : \, x \bmod k = j \}$. 
We define 
$D^{(j)} = \sum_{x \in P^{(j)}} \disl(x, S_0)$ as the total dislocation of the elements in $P^{(j)}$ in the returned sequence $S_0$, and we call its expected value $\mu^{(j)} = \E[D^{(j)}] = \sum_{x \in P^{(j)}} \E[ \disl(x, S_0) ]$.

Observe that the final position $\sigma_0(x)$ of a generic element $x$ in $P^{(j)}$ does not depend on the comparison results that involve other elements in $P^{(j)}$.
Indeed, Lemma~\ref{lem:dependent} guarantees that $\sigma_0(x)$ only depends on $C_w(\sigma_w(x), \frac{36}{1-\rho} w)$ and, using  $|x - \sigma_w(x)| \le w$, we have that $\sigma_0(x)$ is a function of $C_w(x, \frac{37}{1-\rho} w)$. 
Therefore, since $k > 2 \cdot \frac{37}{1-\rho}w$, the set of the elements considered in $C_w(x, \frac{37}{1-\rho} w)$ is disjoint from the set of elements considered in $C(y, \frac{37}{1-\rho} w)$ for any $y \in P^{(j)} \setminus \{x\}$.

We now use Hoeffding's inequality to prove that  $D^{(j)}  \le \mu^{(j)} + \frac{m}{k}$ with high probability. 
    Lemma~\ref{lem:move-afterward} ensures that $\disl(x, S_0) \le \disl(x, S_w) + | \sigma_w(x) - \sigma_0(x) | \le w + \frac{8}{1-\rho}w  < \frac{9}{1-\rho}w$ for all elements $x$, hence we have:\footnote{We only consider the case $|P^{(j)}| > 0$, since otherwise $D^{(j)}= \mu^{(j)} = 0$ and  $D^{(j)}  \le \mu^{(j)} + \frac{m}{k}$ is trivially true.}
\begin{multline*}  
\Pr\left( D^{(j)} - \mu^{(j)}  \ge \frac{m}{k} \right) 
\le \exp\left(- \frac{\frac{2m^2}{k^2}}{|P^{(j)}|  \cdot (\frac{9}{1-\rho}w)^2} \right) 
\le \exp\left(- \frac{\frac{2m^2}{k^2}}{ 2\frac{m}{k}  \cdot (\frac{9}{1-\rho}w)^2} \right) \\ 
= \exp\left(- \frac{m (1-\rho)^2}{81 kw^2} \right)  
\le \exp\left(- \frac{m (1-\rho)^3}{6075 w^3} \right).
\end{multline*}

Using the union bound, we have that all $j=0, \dots, k-1$ satisfy $D^{(j)}\le  \mu^{(j)} + \frac{m}{k}$ with probability at least $1 - \frac{75}{1-\rho} w \cdot \exp\left(- \frac{m (1-\rho)^3}{6075w^3} \right)$. Whenever this happens, the total dislocation of $S_0$ is $\sum_{j=0}^{k-1} D^{(j)} < \sum_{j=0}^{k-1} \mu^{(j)} +  k \cdot \frac{m}{k} \leq  (\gamma+1)\cdot m$ with $\gamma = \frac{1152}{\bsc^2 (1-\rho)} + \frac{311040}{(1-\rho)^4}$, where we invoked Lemma~\ref{lem:expected_dislocation_one_element}. 

Notice that such a lower bound already proves the desired claim whenever $w = o\big( \sqrt[3]{\frac{m}{\ln m}}\big)$. The rest of the proof is devoted to handling arbitrary initial window sizes $w_S$.

Consider an execution of  \basketsort in which $w_S$ is at least the dislocation of the input sequence $S$ and let $w'$ be the smallest window size that satisfies $w' \ge \frac{6}{\bsc} \ln m$ or, if no such window size exists (i.e., if $w_S < \frac{6}{\bsc}\ln m$), let $w' = w_S$. 
Notice that, since $\lfloor \rho w' \rfloor < \frac{6}{\bsc} \ln m$, we have $w' < \frac{6}{\rho \bsc} \ln m + \frac{1}{\rho}  <  \frac{6}{\rho \bsc} \ln m + \frac{1}{\rho \bsc}  \ln m\le \frac{14}{\bsc} \ln m$, where the second inequality follows from $\bsc \le \frac{1}{3} \le \ln m$ for $m \ge 2$ and the third inequality follows from $\rho \ge \frac{1}{2}$. 

Denote by $\mathcal{S}$  the set of all sequences having dislocation at most $w'$ that are attainable from the input sequence $S$ with some choice of the observed comparisons, and let $\text{BS}(S')$ be an indicator random variable for the event ``the sequence returned by \basketsort with input sequence $S'$ and initial window size $w'$ has a total dislocation of at most $\left(\gamma+ 1 \right) \ln m$''.
    From the discussion in the first part of this proof, all $S' \in \mathcal{S}$ satisfy:
    \begin{align*}
        \Pr(\text{BS}(S') = 1) &\ge 1 - \frac{75}{1-\rho} w' \cdot \exp\left(- \frac{m (1-\rho)^3}{6075 (w')^3} \right) \\
        &\ge 1 - \frac{1050 \ln m}{(1-\rho) \bsc} \cdot \exp\left( - \frac{m(1-\rho)^3 \bsc}{16669800 \ln^3 m} \right).
    \end{align*}

Moreover, by Lemma~\ref{lem:all_elements_good_until_wcrit}, $\Pr(S_{w'} \in \mathcal{S}) = 1 - O\Paren{\frac{\log m}{m^4 \log \frac{1}{\rho}}}$. 
By observing that an execution of \basketsort with initial window size $w_S$ can be broken down into (i) all the rounds corresponding to window sizes from $w_S$ (included) to $w'$ (excluded) resulting in the sequence $S_{w'}$ and (ii) an execution of \basketsort on $S_{w'}$ with initial window size $w'$, we can write:\footnote{The collection of rounds considered in (i) might be empty.}

\begin{align*}
    &  \Pr\left(\sum_{x \in S} \disl(x, S_1)  \le (\gamma +1) m \right)
   %\ge \Pr(\text{BS}(S_w) = 1) 
   %\ge  \sum_{S' \in \mathcal{S}} \Pr(\text{BS}(S') = 1 \wedge S_w = S') \\
    \ge   \sum_{S' \in \mathcal{S}} \Pr(\text{BS}(S_{w'}) = 1 \mid S_{w'} = S') \cdot \Pr(S_{w'} = S') \\ 
    %&= \sum_{S' \in \mathcal{S}} \Pr(\text{BS}(S') = 1) \cdot \Pr(S_w = S') \\
    &\ge \left( 1 - \frac{1050 \ln m}{(1-\rho) \bsc} \cdot \exp\left(- \frac{m (1-\rho)^3 \bsc}{16669800 \ln^3 m} \right) \right) \cdot \Pr(S_w \in \mathcal{S}) \\
    &\ge \left( 1 - \frac{1050 \ln m}{(1-\rho) \bsc} \cdot \exp\left(- \frac{m (1-\rho)^3 \bsc}{16669800 \ln^3 m} \right) \right) \cdot \left(1 - O\left( \frac{\log m}{m^4 \log \frac{1}{\rho}} \right) \right) \\
    &\ge 1 - \frac{1050\ln m}{(1-\rho) \bsc} \cdot \exp\Paren{- \frac{m (1-\rho)^3 \bsc}{16669800 \ln^3 m}}  - O\left( \frac{\log m}{m^4 \log \frac{1}{\rho}} \right). 
\tag*{\qedhere}
\end{align*}
\end{proof}

Combining \Cref{lem:basketsort-time,lemma:bs_maximum-dislocation,lem:total_dislocation_high} we obtain the main result of this section:

\thmbasketsort

\section{Lower Bounds}\label{sec:lower_bound}

In this section, we derive a lower bound on the maximum dislocation that can be achieved \emph{with high probability} by any sorting algorithm (\Cref{thm:lb_max_dislocation}), and a lower bound on the total dislocation that can be achieved  \emph{in expectation} by any sorting algorithm (\Cref{thm:lb_tot_dislocation}) even when the probability of comparison errors between pairs of distinct elements are uniform, i.e., when $q=p$ for some constant $0 < p < \frac{1}{2}$. 

Throughout this section, we let set $S = \{1, 2, \dots, n\}$ be the \emph{set} of elements to be sorted.
We can think of an instance of our sorting problem as a pair $\langle \pi, C \rangle$ where $\pi$ is a permutation of $S$ and $C = (c_{i,j} )_{i,j}$ is an $n \times n$ matrix that encodes the outcomes of the comparisons as seen by the sorting algorithm. 
More precisely, $c_{i,j}=\text{``$\prec$''}$ if $i$ is reported to be smaller than $j$ when comparing $i$ and $j$ and $c_{i,j}=\text{``$\succ$''}$ otherwise.
Note that $c_{i,j}=\text{``$\prec$''}$ if and only if $c_{j,i}=\text{``$\succ$''}$, and hence in what follows, we will only define $c_{i,j}$ for $i < j$.

The following lemma is a key ingredient in the derivations of our lower bounds.

\begin{lemma}\label{lem:swap_probability}
	Let $x,y \in S$ with $x < y$. Let $A$ be any (possibly randomized) algorithm. On a random instance, the probability that $A$ returns a permutation in which
	elements $x$ and $y$ appear the wrong relative order is at least $\frac{1}{2} \left( \frac{p}{1-p} \right)^{2(y-x)-1}$. 
\end{lemma}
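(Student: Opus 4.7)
The plan is to compare the given instance with a hypothetical modified instance that swaps the roles of $x$ and $y$, and to exploit the limited distinguishability between them via a likelihood-ratio argument combined with a symmetry argument. Concretely, I would introduce two true orderings on $S$: $\sigma_0$, the natural order (with $x<y$), and $\sigma_1$, obtained from $\sigma_0$ by transposing the ranks of $x$ and $y$, and write $\Pr_i$ for the distribution over comparison matrices when the true order is $\sigma_i$.

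A direct inspection shows that the unordered pairs $\{i,j\}$ whose correct comparison outcome differs between $\sigma_0$ and $\sigma_1$ are exactly $\{x,y\}$ together with $\{x,z\}$ and $\{y,z\}$ for each $z$ strictly between $x$ and $y$, giving $k := 2(y-x)-1$ such ``flipping'' pairs. This immediately yields the pointwise likelihood-ratio bound
\[
\Pr_0[C] \;\ge\; \left(\frac{p}{1-p}\right)^{k} \Pr_1[C]
\]
for every realization of $C$, since non-flipping pairs have identical distributions under the two scenarios, while each of the $k$ flipping pairs contributes a factor in $\{p/(1-p),(1-p)/p\}$ to the ratio.

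Next I would invoke a symmetry argument under the transposition $\tau$ that swaps $x$ and $y$: because $A$ accesses its input only via (noisy) comparisons, simultaneously relabeling $x$ and $y$ in the input sequence $\pi$ and in the comparison matrix $C$ is a measure-preserving bijection that sends a $\sigma_0$-distributed instance to a $\sigma_1$-distributed one and turns ``output places $y$ before $x$'' into ``output places $x$ before $y$''. Letting $E$ denote the event that $A$'s output places $y$ before $x$ (the wrong order under $\sigma_0$), this gives $\Pr_0[E] = \Pr_1[\bar E]$. Writing $q := \Pr_0[E] = \Pr_1[\bar E]$ so that $\Pr_1[E] = 1-q$, the likelihood-ratio bound then yields
\[
q = \Pr_0[E] \;\ge\; \left(\frac{p}{1-p}\right)^{k}\Pr_1[E] \;=\; \left(\frac{p}{1-p}\right)^{k}(1-q),
\]
whence $q \ge (p/(1-p))^{k}/\bigl(1+(p/(1-p))^{k}\bigr) \ge \tfrac{1}{2}\left(\frac{p}{1-p}\right)^{k}$, where the last step uses $p<1/2$, which makes $(p/(1-p))^{k}\le 1$.

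The main obstacle I anticipate is making the symmetry in the third step fully rigorous against the paper's precise ``random instance'' model from Section~\ref{sec:preliminaries}: specifically, confirming that the transposition $\tau$ of $x$ and $y$, applied simultaneously to $\pi$ and to $C$, truly induces a measure-preserving bijection between $\sigma_0$- and $\sigma_1$-distributed instances whose action commutes with the (possibly randomized) algorithm $A$. Once this is pinned down, the remainder is a mechanical product-of-Bernoullis calculation.
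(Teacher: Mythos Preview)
Your proposal is correct and follows essentially the same approach as the paper: both arguments couple each instance with its ``$x\leftrightarrow y$ swapped'' counterpart, observe that the algorithm cannot distinguish the two, and bound the probability ratio via the $2(y-x)-1$ comparison pairs whose correct outcome flips. Your algebraic endgame (from $q\ge r(1-q)$ to $q\ge r/(1+r)\ge r/2$) is a bit cleaner than the paper's case split on whether $\sum_I X_I\Pr(I)\ge\tfrac12$, and the paper handles randomization by explicitly fixing the random string $\lambda$ and averaging, which is exactly what you would do to discharge the obstacle you flag.
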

\begin{proof}
We prove that, for any running time $t>0$, no algorithm $A$ can compute, within $t$ steps, a sequence in which $x$ and $y$ are in the correct relative order with a probability larger than $1-\frac{1}{2} \left( \frac{p}{1-p} \right)^{2(y-x)-1}$.

First of all, let us focus a deterministic version of algorithm $A$ by fixing a sequence $\lambda \in \{0, 1\}^t$ of random bits that can be used be the algorithm. We call the resulting algorithm $A_\lambda$, and we let $p_\lambda$ denote the probability of generating the sequence $\lambda$ of random bits (i.e., $2^{-t}$ if the random bits come from a fair coin).\footnote{Notice that $A_\lambda$ can use less than $t$ random bits, but not more (due to the limit on its running time).}
Notice that $A$ might already be a deterministic algorithm, in this case $A = A_{\lambda}$ for every $\lambda \in \{0, 1\}^t$.

Consider an instance $I = \langle \pi, C \rangle$, and let $\phi_I$ be the permutation of the elements in $S$ returned by $A_\lambda(I)$, i.e., $\phi_I(i)=j$ if the element $i \in S$ is the $j$-th element of the sequence returned by $A_\lambda$.
We define a new instance $I_{\pi,C} = \langle \pi', C' \rangle$ by ``swapping'' elements $x$ and $y$ of $I$ along with the results of all their comparisons. More formally we define: $\pi'(i)=\pi(i)$ for all $i \in S \setminus \{x, y\}$, $\pi'(x)=\pi(y)$, and $\pi'(y)=\pi(x)$; We define the comparison matrix $C' = (c'_{i,j})_{i,j}$ accordingly, i.e.:%
\indent\begin{multicols}{2}  
	\begin{itemize}
		\item $c'_{i,j} = c_{i,j}$ if $i,j \in S \setminus \{x, y\}$ and $j>i$,
		\item $c'_{i,x} = c_{i,y}$ if $i < x$,
		\item $c'_{x,j} = c_{y,j}$ if $j>x$ and $j \neq y$,
		\item $c'_{i,y} = c_{i,x}$ if $i < y$ and $i \neq x$,
		\item $c'_{y,j} = c_{x,j}$ if $j > y$,
		\item $c'_{x,y} = c_{y, x}$.
	\end{itemize}
\end{multicols}%
  \noindent Letting $R = ( r_{i,j} )_{i,j}$ be a random variable over the space of all comparison matrices.
  Fix a pair $i,j$ with $i<j$ and observe that $\Pr(r_{i,j} = c'_{i,j}) = \Pr(r_{i,j} = c_{i,j})$ whenever we are in one of the following cases: (i) $i \neq x$ and $j \neq y$, (ii) $i<x$ and $j \in \{x,y\}$, or (iii) $j>y$ and $i \in \{x,y\}$.
  In the remaining cases, $\Pr(r_{i,j} = c'_{i,j}) =\Pr(r_{i,j} = c_{i,j}) \cdot  \frac{\Pr(r_{i,j} = c'_{i,j})}{\Pr(r_{i,j} = c_{i,j})}
  \ge \Pr(r_{i,j} = c_{i,j}) \cdot \frac{p}{1-p}$. 
  Then, $\Pr(R = C') = Pr(R = C) \cdot \left( \frac{p}{1-p} \right)^{2(y-x)+1}$.

Let $\widetilde{\pi}$ be a random variable whose value is chosen u.a.r. over all the permutations of $S$.
Let $\Pr(I)$ be the probability that instance $I=\langle \pi, C \rangle$ is to be solved and notice that $\Pr(I) = \Pr(\widetilde{\pi} = \pi) \Pr(R = C)$.
It follows from the above discussion that $\Pr(I_{\pi,c}) \ge \Pr( I ) \cdot \left( \frac{p}{1-p} \right)^{2(y-x)-1}$.

Let $X_I$ be an indicator random variable that is $1$ if and only if algorithm $A_\lambda$ on instance $I$ either does not terminate within $t$ steps, or it terminates returning a sequence in which $x$ appears after $y$. Let $U$ be the set of all the possible instances and 
$U'$ be the set of the instances $I \in U$ such that $X_I = 0$ (i.e., $x$ and $y$ appear in the correct order in the output of $A$ on $I$).
Notice that there is a bijection between the instances $I = \langle \pi, C \rangle$ and their corresponding instances $I_{\pi, C}$ (i.e., our transformation is injective). Moreover, since $I$ and $I_{\pi, C}$ are indistinguishable by $A_\lambda$, we have that
either (i) $A_\lambda$ does not terminate within $t$ steps on both instances, or (ii) $\sigma_I(x) < \sigma_I(y) \iff  \sigma_{I_{\pi, C}}(x) > \sigma_{I_{\pi, C}}(y)$. As a consequence, if $X_I = 0$, then  $X_{I_{\pi, C}} = 1$.
Now, either $\sum_{I \in U} X_I \Pr(I) \ge \frac{1}{2}$ or we must have $\sum_{I \in U'} \Pr(I) \ge \frac{1}{2}$, which implies:
\begin{align*}
	\sum_{I \in U} X_I \Pr(I)  &\ge \sum_{\langle \pi, C \rangle \in U} X_{I_{\pi, C}} \Pr(I_{\pi, C}) 
	\ge \sum_{\langle \pi, C \rangle \in U'} X_{I_{\pi, C}} \Pr(I_{\pi, C}) 
	= \sum_{\langle \pi, C \rangle \in U'} \Pr(I_{\pi, C}) \\
	& \ge \sum_{ I \in U'} \Pr( I ) \cdot \left( \frac{p}{1-p} \right)^{2(y-x)-1} 
	\ge \frac{1}{2} \left( \frac{p}{1-p} \right)^{2(y-x)-1}.
\end{align*}

We let $Y$ (resp.\ $Y_\lambda$) be an indicator random variable which is $1$ if and only if either the execution of $A$ (resp.\ $A_\lambda$) on a random instance does not terminate within $t$ steps, or it terminates returning a sequence in which $x$ and $y$ appear in the wrong relative order. 
By the above calculations we know that $\Pr(Y_\lambda = 1) = \sum_{I \in U} X_I \Pr(I) \ge \frac{1}{2} \left( \frac{p}{1-p} \right)^{2(y-x)-1} \; \forall \lambda \in \{0, 1\}^t$. We are now ready to bound the probability that $Y=1$, indeed:

\begin{align*}
\Pr(Y = 1) & = \sum_{ \lambda \in  \{0, 1\}^t} \Pr(Y_\lambda=1) p_\lambda
           \ge \frac{1}{2} \left( \frac{p}{1-p} \right)^{2(y-x)-1} \sum_{ \lambda \in \{0, 1\}^t } p_\lambda 
            = \frac{1}{2} \left( \frac{p}{1-p} \right)^{2(y-x)-1},
\end{align*}
\noindent where the last equality follows from the fact that $p_\lambda$ is a probability distribution over the elements $\lambda \in \{0, 1\}^t$, and hence  $\sum_{ \lambda \in \{0, 1\}^t } p_\lambda = 1$.
\end{proof}

\noindent As a first consequence of the previous lemma, we obtain the following:
\begin{theorem}\label{thm:lb_max_dislocation}
	No (possibly randomized) algorithm can achieve maximum dislocation $o(\log n)$  with high probability.
\end{theorem}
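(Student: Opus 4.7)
The plan is to apply Lemma~\ref{lem:swap_probability} to a single pair of elements at distance $d=\Theta(\log n)$, and to observe that a swap of this one pair alone already forces maximum dislocation $\Omega(\log n)$ in the returned sequence. Since the swap probability delivered by the lemma is inverse-polynomial in $n$ for our choice of $d$, this is incompatible with the hypothesis that the algorithm achieves $o(\log n)$ maximum dislocation with high probability (i.e., with failure probability that is inverse polynomial in $n$ for every constant exponent).

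Concretely, let $\beta = \log_2 \tfrac{1-p}{p}$, which is a positive constant since $p<\tfrac{1}{2}$. I would set $d = \lfloor c \log n\rfloor$ for any fixed constant $c \in \bparen{0,\, 1/(4\beta)}$. Since $(p/(1-p))^{2d-1} = 2^{-(2d-1)\beta}$, a routine computation shows that Lemma~\ref{lem:swap_probability} applied with $x=1$ and $y=1+d$ gives, for the returned sequence $\sigma$, the bound $\Pr\bparen{\pos(1,\sigma)>\pos(1+d,\sigma)} \ge \tfrac{1}{2}\Paren{p/(1-p)}^{2d-1} \ge n^{-1/2}$ for all sufficiently large $n$.

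Next, I would argue that whenever this ``wrong relative order'' event occurs, the maximum dislocation of $\sigma$ is at least $\lceil (d+1)/2\rceil = \Omega(\log n)$. Let $k_1 = \pos(1,\sigma)$ and $k_2 = \pos(1+d,\sigma)$, so $k_1 > k_2$. Recalling that $\rank(1,S)=1$ and $\rank(1+d,S)=1+d$, in the case $k_2 \le 1+d$ we have $\disl(1,\sigma) + \disl(1+d,\sigma) \ge (k_1-1) + ((1+d) - k_2) = d + (k_1-k_2) \ge d+1$, so at least one of the two elements has dislocation at least $(d+1)/2$; in the opposite case $k_2 > 1+d$ we immediately obtain $\disl(1,\sigma) = k_1 - 1 > d$ from $k_1>k_2$, which is an even stronger bound.

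Combining the two steps, the maximum dislocation of $\sigma$ is at least $(d+1)/2 = \Omega(\log n)$ with probability at least $n^{-1/2}$ for all sufficiently large $n$, which rules out achieving maximum dislocation $o(\log n)$ with high probability. The only non-trivial ingredient, Lemma~\ref{lem:swap_probability}, is already established earlier in the paper, so the main content here is the choice of the constant $c$ that calibrates the swap probability to be inverse polynomial, and the short case analysis that converts a single swapped pair at distance $d$ into a $\Omega(d)$ lower bound on the maximum dislocation. The only subtlety to watch for is the precise convention of ``with high probability'': since the paper adopts the standard meaning of failure probability at most $1/\textrm{poly}(n)$, exhibiting failure probability $\ge n^{-1/2}$ is more than enough.
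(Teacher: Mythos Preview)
Your proposal is correct and follows essentially the same approach as the paper: apply Lemma~\ref{lem:swap_probability} to the pair $(1,\,1+d)$ with $d=\Theta(\log n)$, obtain an inverse-polynomial swap probability, and observe that a swap of this pair forces maximum dislocation $\Omega(d)$. The paper chooses the constant so that the swap probability exceeds $1/n$ (matching its convention that ``w.h.p.'' means probability at least $1-1/n$) and uses the slightly terser case split ``either $\pos(1,\sigma)\ge\lceil h/2\rceil$ or $\pos(h,\sigma)\le\lfloor h/2\rfloor$'', but these are cosmetic differences from your choice of $c<1/(4\beta)$ and your two-case dislocation argument.
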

\begin{proof}
By Lemma~\ref{lem:swap_probability}, any algorithm, when invoked on a random instance, must return a permutation $\rho$ in which elements $1$ and $h= \big\lfloor  \frac{\log n }{2 \log (1-p)/{p}} \big\rfloor$ appear in the wrong relative order with a probability larger than $\frac{1}{n}$.
When this happens, at least one of the following two conditions holds: (i) the position of element $1$ in $\rho $ is at least $\lceil \frac{h}{2} \rceil$; or (ii) the position of element $h$ in $\rho$ is at most $\lfloor \frac{h}{2} \rfloor$.
In any case, the maximum dislocation must be at least $\frac{h}{2} - 1 = \Omega(\log n)$.
\end{proof}

\noindent Finally, we are also able to prove a lower bound to the total dislocation.
\begin{theorem}\label{thm:lb_tot_dislocation}
	No (possibly randomized) algorithm can achieve an expected total dislocation of $o(n)$.
\end{theorem}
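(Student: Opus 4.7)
The plan is to leverage \Cref{lem:swap_probability} with $y-x=1$ so that the swap probability becomes the constant $\frac{1}{2} \cdot \frac{p}{1-p}$, and then to sum such contributions over many disjoint pairs. Concretely, I would consider the $\lfloor n/2 \rfloor$ disjoint consecutive pairs $(2k-1, 2k)$ for $k = 1, \dots, \lfloor n/2 \rfloor$, and use linearity of expectation to extract a linear lower bound on the expected total dislocation.

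The key elementary step is to argue that whenever a pair $(2k-1, 2k)$ is inverted in the output sequence $\sigma$ returned by the algorithm, it contributes at least $1$ to the total dislocation. Indeed, suppose $\pos(2k,\sigma) = i$ and $\pos(2k-1,\sigma) = j$ with $i < j$. If $i \le 2k-1$, then $\disl(2k,\sigma) = 2k - i \ge 1$; otherwise $i \ge 2k$, so $j \ge 2k+1$ and $\disl(2k-1,\sigma) = j - (2k-1) \ge 2$. In either case, the sum of dislocations of the two elements is at least $1$.

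Applying \Cref{lem:swap_probability} to each pair $(2k-1, 2k)$, the probability that it is inverted in the algorithm's output is at least $\frac{1}{2} \cdot \frac{p}{1-p}$, which is a positive constant since $0 < p < \frac{1}{2}$. Let $X_k$ be the indicator that the pair $(2k-1,2k)$ is inverted in $\sigma$. Since the pairs are pairwise disjoint, the lower bounds on $\disl(2k-1,\sigma) + \disl(2k,\sigma)$ add:
\begin{equation*}
\E\!\left[\sum_{x \in S} \disl(x,\sigma)\right] \;\ge\; \sum_{k=1}^{\lfloor n/2\rfloor} \E[X_k] \;\ge\; \left\lfloor \tfrac{n}{2} \right\rfloor \cdot \frac{1}{2} \cdot \frac{p}{1-p} \;=\; \Omega(n).
\end{equation*}

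There is no real obstacle here: once \Cref{lem:swap_probability} is available, the only thing to verify is the elementary per-pair contribution bound above, and the rest is linearity of expectation over disjoint pairs. The mild care needed is that the pairs must be disjoint so that the lower bounds on individual dislocations genuinely add without double-counting; choosing pairs of the form $(2k-1,2k)$ ensures this.
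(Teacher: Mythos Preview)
Your proposal is correct and follows essentially the same approach as the paper's proof: both partition $S$ into $\Theta(n)$ disjoint consecutive pairs, invoke \Cref{lem:swap_probability} with $y-x=1$ to get a constant inversion probability of $\frac{1}{2}\cdot\frac{p}{1-p}$ for each pair, and sum via linearity of expectation. If anything, your version is slightly more careful than the paper's, which writes $\E[D] \ge \sum_k \E[X_k]$ without explicitly justifying that an inverted disjoint pair contributes at least $1$ to the total dislocation; your case analysis on the positions of $2k-1$ and $2k$ fills that gap.
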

\begin{proof}
Let $A$ be any algorithm and let $\langle \pi,C \rangle$ be a random instance on an even number of elements $n$.
We define $X_k$ for $0 \le k < n/2$ to be an indicator random variable that is $1$ if and only if $A(\langle \pi,C \rangle)$ returns a permutation of $S$ in which elements $2k+1$ and $2k+2$ appear in the wrong relative order.

By Lemma~\ref{lem:swap_probability} we know that $\Pr(X_k=1) \ge \frac{1}{2}\frac{p}{1-p}$. We can hence obtain a lower bound on the expected total dislocation $D$ achieved by $A$ as follows:
\[
\E[D] \ge \sum_{k=0}^{n/2 - 1} \E[X_k] \ge \sum_{k=0}^{n/2 - 1} \frac{1}{2} \cdot \frac{p}{1-p} \ge \frac{n}{4} \cdot \frac{p}{1-p} = \Omega(n). \qedhere
\]
\end{proof}

\section{Derandomization}\label{sec:derandomization}

In this section, we consider the case in which the elements to be approximately sorted are subject to persistent random comparison errors with probability at least $q$ and at most $p$, for some positive constants $p$ and $q$ with $p \le \frac{9q+1}{8q+11}$, and the positions of the elements in the input sequence do not depend on the comparison errors.
Under these assumptions, we can exploit the intrinsic random nature of comparison faults to derandomize \rifflesort while maintaining its $O(n\log n)$ running time. In other words, the sequence returned by the derandomized version of \rifflesort is completely determined by the input sequence and by the observed comparison errors, and no access to an external source of random bits is required.

This section is organized as follows: first we show that (a slight variant of) $\rifflesort$ can be implemented so that the overall number of random bits required is $O(n)$ with high probability. Then, we show that we can use the outcomes of the noisy comparisons among some of the input elements to generate $\Omega(n)$ biased, but almost-uniform random bits, and that using these almost-uniform random bits in place of the truly uniform ones does not affect the asymptotic guarantees of \rifflesort. 

\subsection{Implementing \rifflesort using \texorpdfstring{$O(n)$}{O(n)} random bits w.h.p.}
\label{sub:rifflesort_n_bits}

The only randomized steps of \rifflesort are the computation of the sequence $S'$, which is obtained by shuffling the input sequence $S$ (see line~\ref{ln:rs_shuffle} of \Cref{alg:rifflesort}), and the partition of the $n$ input elements into the $k+1$ 
subsets $T_0, T_1, \dots, T_k$ of $T$, as described in \Cref{sec:rifflesort_description}.

Since we are considering the case in which the positions of the elements in the input sequence do not depend on the comparison errors, we can simply skip the shuffling step of line~\ref{ln:rs_shuffle}, and set $S' = S$ instead.
We now argue that the partition $T_0, \dots, T_k$ can be found while only using $O(n)$ random bits. In this regard, we will make use of the following more general result, whose proof does not depend on the details of \rifflesort nor on our error model and can be found in \Cref{apx:random_subset}.

\begin{restatable}{lemma}{lemmarandomsubset}
  \label{lemma:derand_set_split}
    There exists an algorithm that, given any set $A$ of $N$ elements and an integer $h \in \{0, \dots, N\}$, returns a set chosen uniformly at random among all subsets of $A$ with exactly $h$ elements.
    The algorithm requires $O(N)$ time and at most $60N$ random bits with probability at least $1 - O\left( e^{-\sqrt{N}} \right)$.
\end{restatable}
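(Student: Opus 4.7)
The plan is to adapt Knuth's Algorithm~S for selection sampling, combining it with a standard ``arithmetic-coding''-style simulation of biased Bernoulli trials using raw uniform bits. I would enumerate the $N$ elements of $A$ in any fixed order and process them one by one, maintaining a counter $j$ of how many have already been selected; the $i$-th element is included with probability $p_i = (h-j)/(N-i+1)$ (which naturally equals $0$ once $j=h$, so the process stops selecting). This is the textbook selection-sampling rule, and it is well known to produce a subset of size exactly $h$ distributed uniformly among all $\binom{N}{h}$ such subsets.

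The key ingredient for bit-efficiency is how each Bernoulli trial with bias $p_i$ is realized. I would generate the binary expansion $p_i = 0.b_1 b_2 \dots$ on demand by running long division on the stored numerator $h-j$ and denominator $N-i+1$ (both of which fit in a machine word, so each bit costs $O(1)$ time), compare it bit by bit with freshly drawn random bits $r_1, r_2, \dots$, and halt at the first index $k$ where $r_k \ne b_k$, outputting ``include'' iff $r_k < b_k$. The number $X_i$ of random bits consumed by the $i$-th trial is then geometric with $\Pr(X_i = k) = 2^{-k}$ for $k \ge 1$, independently across $i$, and the time spent on trial $i$ is proportional to $X_i$, so that the total runtime is $O(N + \sum_i X_i)$.

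Three ingredients then complete the analysis. Uniformity is inherited from Algorithm~S since the Bernoulli simulation is distributionally exact. The worst-case bit and time bounds are enforced by capping the total number of bits used at $60N$ and aborting (returning any fixed subset of size $h$) if the cap is ever exceeded; this does not affect the high-probability guarantee, since the cap is never triggered on the good event. For the tail bound itself, I would note that $X = \sum_i X_i$ has mean at most $2N$ and apply a standard Chernoff argument using $\E[e^{sX_i}] = (e^s/2)/(1 - e^s/2)$, valid for $s < \ln 2$; choosing $s = \ln(3/2)$ gives $\Pr(X > 60N) \le (3/2)^{-60N}\cdot 3^N = 2^{-\Omega(N)}$, which is vastly stronger than the required $O(e^{-\sqrt{N}})$.

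The only mild obstacle is handling the degenerate cases $p_i \in \{0,1\}$ cleanly, which arise when $j = h$ or $j + (N-i+1) = h$; these should be short-circuited so that they consume zero bits and zero time, and so that the remaining nondegenerate trials contribute genuinely independent geometric $X_i$ to the total bit count, allowing the Chernoff bound to apply verbatim. Everything else is routine bookkeeping.
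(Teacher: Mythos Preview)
Your approach is correct and quite different from the paper's. The paper builds a bespoke recursive procedure: pick one element $x$ uniformly at random, include each remaining element in a set $B$ independently with probability $\tfrac12$, and then recurse either on $B$ (if $|B|\ge h$) or on the complement (adding $B\cup\{x\}$ to the output) otherwise. Uniformity is argued by an explicit induction over all possible sizes of $B$, and the bit/time bound comes from a two-regime analysis of the recursion depth (distinguishing ``big'' calls with $|A'|\ge\sqrt{N}+1$, where $|B|$ concentrates around $|A'|/2$ by Hoeffding, from ``small'' calls handled more crudely), which is what produces the $e^{-\sqrt{N}}$ failure probability. Your route via Knuth's Algorithm~S with bit-by-bit Bernoulli simulation is more elementary, yields a strictly stronger $2^{-\Omega(N)}$ tail, and the key observation that each trial consumes a geometric$(1/2)$ number of bits \emph{regardless} of the bias $p_i$ (hence independently of prior outcomes) is exactly the right insight.

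One small point: your cap-and-abort step would break the unconditional uniformity claimed in the lemma. The paper's reading of the statement is that uniformity holds always, while the $O(N)$ time and $60N$-bit bounds hold only with the stated probability (indeed, the paper concludes by noting that the running time is asymptotically dominated by the random-bit count). So simply drop the cap: your algorithm is then Las Vegas, uniform on every run, and satisfies the time and bit bounds with probability $1-2^{-\Omega(N)}$, which is more than enough.
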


Then, using the fact that $|T_i| = \frac{1}{2}|T_{i+1}|$ for $i=0, \dots, k-1$, we can prove the following:

\begin{lemma}
  \label{lemma:partition_n_bits}
    The partition $T_0, T_1, \dots, T_k$ of \rifflesort can be found using $O(n)$ time and at most $999 n$ random bits with probability at least $1 - O(n^{-3})$.
\end{lemma}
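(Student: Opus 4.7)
The plan is to recast the random partition as a chain of nested uniform random subsets of geometrically decreasing sizes, and to apply \Cref{lemma:derand_set_split} once for each link of the chain.

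First, I will observe that the joint distribution of $(T_0, T_1, \ldots, T_k)$ described in \Cref{sec:rifflesort_description} coincides with the one produced by the following alternative construction. For $j = 0, 1, \dots, k-1$, let $X_j = T_0 \cup T_1 \cup \dots \cup T_{k-j-1}$, so that $|X_j| = \lceil \sqrt{n} \rceil \cdot 2^{k-j-1}$. Sample $X_0$ uniformly at random among all subsets of $S'$ of size $|X_0|$ and, for $j = 1, \dots, k-1$, sample $X_j$ uniformly at random among all subsets of $X_{j-1}$ of size $|X_j|$; then recover the partition by setting $T_k = S' \setminus X_0$, $T_{k-j-1} = X_j \setminus X_{j+1}$ for $j = 0, \dots, k-2$, and $T_0 = X_{k-1}$. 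Both processes are equivalent to taking a uniform random permutation of $S'$ and reading off contiguous blocks of the prescribed sizes, so they induce the same joint distribution on the partition.

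Next, I will implement each of the $k$ sampling steps using the algorithm of \Cref{lemma:derand_set_split}. The universe sizes are $N_0 = n$ and $N_j = |X_{j-1}| = \lceil \sqrt{n} \rceil \cdot 2^{k-j}$ for $j = 1, \dots, k-1$, so the $j$-th invocation runs in $O(N_j)$ time and uses at most $60 N_j$ random bits except with probability $O\bparen{e^{-\sqrt{N_j}}}$. Summing, the total running time is $\sum_{j=0}^{k-1} O(N_j) = O\bparen{n + \lceil\sqrt{n}\rceil \cdot 2^k} = O(n)$, where I use that $k = \lceil \log(n/\lceil\sqrt{n}\rceil)\rceil$ gives $\lceil\sqrt{n}\rceil \cdot 2^k \leq 2n$. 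The same geometric estimate bounds the total number of random bits by $60n + 60\lceil\sqrt{n}\rceil(2^k - 2) \leq 180n < 999n$ whenever every call meets its nominal budget. By a union bound over the $k = O(\log n)$ calls, this good event holds with probability at least $1 - O\bparen{\log n \cdot e^{-\sqrt{N_{k-1}}}} = 1 - O\bparen{\log n \cdot e^{-\Omega(n^{1/4})}} = 1 - O(n^{-3})$, using $N_{k-1} = 2\lceil\sqrt{n}\rceil$.

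The main obstacle this plan must address is the choice of sampling order: a naive approach that samples each $T_i$ directly from the remaining elements would require $\Theta(nk) = \Theta(n \log n)$ random bits, because removing the small $T_i$'s barely shrinks the universe during the first $\Theta(\log n)$ rounds. Sampling the prefix unions $X_j$ from the largest ($X_0$) down to the smallest ($X_{k-1}$) instead roughly halves the universe at each step, so that the bit counts $60 N_j$ form a geometric series dominated by its first term $60n$.
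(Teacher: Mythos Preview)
Your proposal is correct and follows essentially the same approach as the paper. Both argue that the partition can be sampled by a chain of $O(\log n)$ invocations of \Cref{lemma:derand_set_split} whose universe sizes decrease geometrically, so that the total bit count is a geometric series dominated by $O(n)$ and the union bound over $O(\log n)$ calls with failure probability $O(e^{-\Omega(n^{1/4})})$ each yields the desired $1-O(n^{-3})$. The only cosmetic difference is that the paper samples the parts $T_k, T_{k-1}, \dots$ directly from the shrinking remainder, whereas you sample the nested prefix unions $X_j = T_0 \cup \dots \cup T_{k-j-1}$ and recover each $T_i$ as a set difference; since a uniform $h$-subset and its complement are sampled by the same call to \Cref{lemma:derand_set_split}, the two framings are equivalent.
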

\begin{proof}
    Recall from \Cref{sec:rifflesort_description} that $k=\left\lceil \log \frac{n}{\lceil \sqrt{n} \rceil} \right\rceil < 1 + \frac{1}{2} \log n$, that $|T_i| = 2^i \cdot \lceil \sqrt{n} \rceil$ for $i=0, \dots,k-1$, and that $T_k \le 2^k \cdot \lceil \sqrt{n} \rceil$. 
    
    To obtain the sought partition, we iteratively select $T_i$ for starting form $i=k$ down to $i=0$.
    More precisely, we keep track of the set $T'_i = S \setminus \bigcup_{j=i}^k T_i$ of all yet unselected elements (notice that $T'_{k+1} = S$), and we select $T_i$ u.a.r.\ among all subsets of $T'_{i+1}$ having $|T_i|$ elements using the algorithm of \Cref{lemma:derand_set_split}.

    We have $|T'_i| = \sum_{j=0}^{i-1} |T_i| < 2^{i} \lceil \sqrt{n} \rceil \le 2^{i+1} \sqrt{n}$, and, using $|T_i| \ge \sqrt{n}$, \Cref{lemma:derand_set_split} ensures that the selection of each $T_i$ requires no more than $60 |T'_{i+1}| \le 60 \cdot 2^{i+2} \sqrt{n}$ random bits with probability at least $1 - O( e^{- n^{1/4}} )$.
    Then, this holds jointly for all $T_i$ with probability at least $1 - O( k \cdot  e^{- n^{1/4}} ) = 1 - O(n^{-3})$ and, in such a case, the overall number of used random bits is upper bounded by:
    \[
        60 \sqrt{n} \sum_{i=0}^{k} 2^{i+2} < 60 \sqrt{n} \cdot 2^{k+3} < 60 \sqrt{n} \cdot 2^{4+\frac{1}{2}\log n} < 999 n, 
    \]
    and a similar calculation yields the claimed upper bound on the running time.
\end{proof}

\subsection{Derandomizing \rifflesort}

We start by formalizing a simple technique that combines multiple biased coin flips to simulate the outcome of a single coin with a smaller bias.

\begin{lemma}
  \label{lemma:xoring-bernoulli}
  Let $\delta \in [0, 1]$ and let $X_1, X_2, \dots, X_\eta$ be $\eta \ge 1$ independent Bernoulli random variables where $X_i$ has parameter $p_i \in \left[\frac{1}{2} - \frac{\delta}{2}, \frac{1}{2} + \frac{\delta}{2} \right]$.
  $\Pr( \bigoplus_{i=1}^\eta X_i = 1 ) \in \left[ \frac{1}{2} - \frac{\delta^\eta}{2}, \frac{1}{2} + \frac{\delta^\eta}{2} \right]$.
\end{lemma}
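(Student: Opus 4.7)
The plan is to proceed by induction on $\eta$, using the classical \emph{piling-up} identity: for two independent Bernoulli variables with parameters $p$ and $p'$, the XOR has parameter $\frac{1}{2} + \frac{(2p-1)(2p'-1)}{2}$. This identity immediately multiplies biases, so iterating gives $\delta^\eta$.

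More concretely, I would parameterize each $X_i$ via its \emph{bias} $\epsilon_i := 2p_i - 1$. The hypothesis $p_i \in \left[\frac{1}{2} - \frac{\delta}{2}, \frac{1}{2} + \frac{\delta}{2}\right]$ is equivalent to $|\epsilon_i| \le \delta$, and the goal is equivalent to showing that $Y_\eta := \bigoplus_{i=1}^{\eta} X_i$ has bias $2\Pr(Y_\eta = 1) - 1$ with absolute value at most $\delta^\eta$. The base case $\eta = 1$ is immediate from the assumption. For the inductive step, assuming the claim for $\eta - 1$, I would condition on $Y_{\eta-1}$ and compute
\begin{align*}
\Pr(Y_\eta = 1) &= \Pr(Y_{\eta-1} = 0)\, p_\eta + \Pr(Y_{\eta-1} = 1)(1 - p_\eta).
\end{align*}
A short manipulation of this expression (rewriting everything as $\frac{1}{2}(1 \pm \epsilon)$ and expanding) yields $2\Pr(Y_\eta = 1) - 1 = (2\Pr(Y_{\eta-1}=1)-1)\cdot \epsilon_\eta$. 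Combining with the inductive hypothesis $|2\Pr(Y_{\eta-1}=1)-1| \le \delta^{\eta-1}$ and $|\epsilon_\eta| \le \delta$ gives $|2\Pr(Y_\eta=1)-1| \le \delta^\eta$, which rearranges to the stated two-sided bound. The independence of $X_\eta$ from $Y_{\eta-1}$ (which follows from the joint independence of the $X_i$'s) is what legitimizes the conditioning step.

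There is no real obstacle here; the only thing to be careful about is handling the case $\delta = 0$ or $\eta = 0$ cleanly (with the usual conventions $\delta^0 = 1$ if needed, though the statement assumes $\eta \ge 1$), and ensuring the two-sided inequality is preserved under multiplication by a signed bias $\epsilon_\eta$, which is automatic once written in terms of absolute values of biases.
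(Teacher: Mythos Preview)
Your proof is essentially identical to the paper's: both proceed by induction on $\eta$, condition on $Y_{\eta-1}$ using independence of $X_\eta$, obtain the multiplicative bias formula, and conclude by taking absolute values. One minor slip: the identity should read $2\Pr(Y_\eta=1)-1 = -\bigl(2\Pr(Y_{\eta-1}=1)-1\bigr)\cdot\epsilon_\eta$ (a sign is missing), but since you pass to absolute values immediately this does not affect the argument.
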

\begin{proof}
  The proof is by induction on $\eta$.
  The base case $\eta=1$ is trivially true. Suppose now that the claim is true for $\eta-1 \ge 1$ Bernoulli random variables. Defining $\rho = \Pr(\bigoplus_{i=1}^{\eta-1} X_i = 1)$, we have:
  \begin{align*}
    \Pr\left( \bigoplus_{i=1}^\eta X_i = 1 \right) &= 
    \Pr\left( \bigoplus_{i=1}^{\eta-1} X_i  = 0 \right) \Pr\left( X_\eta=1 \right) + \Pr\left( \bigoplus_{i=1}^{\eta-1} X_i  = 1  \right) \Pr\left( X_\eta=0 \right) \\  
    & = (1-\rho) p_\eta + \rho (1-p_\eta) 
     = \frac{1}{2} + 2 \left(\frac{1}{2} - p_\eta\right) \left(\rho - \frac{1}{2}\right).
  \end{align*}
  Therefore, using $\rho \in [\frac{1}{2} - \frac{\delta^{\eta-1}}{2}, \frac{1}{2} + \frac{\delta^{\eta-1}}{2}]$ as ensured by the induction hypothesis:
  \[
    \left| \Pr\left( \bigoplus_{i=1}^\eta X_i = 1 \right) - \frac{1}{2} \right|  =  2 \left|  \frac{1}{2} - p_\eta \right| \cdot \left| \rho - \frac{1}{2} \right| \le 2 \cdot \frac{\delta}{2} \cdot  \frac{\delta^{\eta-1}}{2}
    = \frac{\delta^\eta}{2}. \qedhere
  \]
\end{proof}

Since each comparison yields each of the two possible outcomes with probability at least $q \in (0, \frac{1}{2})$, we can associate these outcomes with the values $0$ and $1$ so that the comparison behaves as a Bernoulli random variable of some unknown parameter in $[q, 1-q] = [\frac{1}{2} - \frac{1-2q}{2}, \frac{1}{2} + \frac{1-2q}{2} ]$.
Then, \Cref{lemma:xoring-bernoulli} ensures that xor-ing together $\eta = \lceil 4 \log_{1-2q} \frac{1}{n}  \rceil = \left\lceil \frac{4 \log n}{\log \frac{1}{1-2q}} \right\rceil = O(\log n)$ independent comparison results yields an ``almost fair'' random bit $b$, i.e., $\Pr(b = 1) \in \left[\frac{1}{2} - \frac{1}{2n^4}, \frac{1}{2} + \frac{1}{2n^4}\right]$.

%As we show in a previous work~\cite{GeissmannLLP20}, it is possible to simulate ``almost-fair'' coin flips by xor-ing together a sufficiently large number of comparison outcomes. 
%Indeed, we can associate the two possible outcomes of a comparison with the values $0$ and $1$, so that each comparison behaves as a Bernoulli random variable whose parameter is either $p$ or $1-p$.
%Then, we can use the fact that, given $k = \Theta(\log n)$ independent Bernoulli random variables $c_1, \dots, c_k$ such that $P(c_i=1) \in \{p, 1-p\}$ for $1\leq i\leq k$, $| P(c_1 \oplus c_2 \oplus \dots \oplus c_k = 0) - \frac{1}{2}| \le \frac{1}{n^4}$.

In the rest of this section, we consider values of $n$ larger than some suitable constant. We let $F$ be a set containing the first $1000 \eta$ elements from $S$, and we compare each element in $F$ with all the elements in $F'=S \setminus F$ to obtain a collection of $1000 \eta (n - \eta) \ge 999 \eta n$ independent comparison outcomes, from which we can generate $999 n$ almost-fair random bits. The overall time required to generate these almost-fair bits is $O(\eta n) = O(n \log n)$ and,
with probability at least $1 - \frac{999}{n^3}$, these bits behave exactly as unbiased random bits. %\footnote{Formally, this follows from a coupling argument (see, e.g., \cite{GeissmannLLP20}).} 
By invoking \rifflesort on $F'$ using these almost-fair random bits as a source of randomness, we obtain an approximately sorted sequence $S'$. In the unlikely event that more random bits are required by \rifflesort, we simply halt the algorithm and return an arbitrary permutation of $F'$.
From the discussion in \Cref{sub:rifflesort_n_bits}, \Cref{lemma:partition_n_bits}, \Cref{lemma:xoring-bernoulli}, and \Cref{thm:rifflesort}, the overall running time is $O(n \log n)$ and, with probability at least $1 - \frac{999}{n^3} - \frac{1}{|F'|^{3}} - O\Big(\frac{1}{|F'| \sqrt{|F'|}}\Big) = 1 - O\big( \frac{1}{n \sqrt{n}} \big)$, the $999n$ almost-fair random bits suffice to run \rifflesort on $F'$, and the returned sequence $S'$ has a maximum (resp.\ total) dislocation of at most $O(\log n)$ (resp.\ $O(n)$). 

We are now left with the task of reinserting all the elements of $F$ into $S'$ without causing any asymptotic increase in the total dislocation and in the maximum dislocation. 
Although one might be tempted to use the result of Section~\ref{sec:noisy-search}, this is not actually possible since the comparison faults between the elements in $F$ and the elements in $F'$ now (indirectly) depend on the sequence $S'$.
However, a simple (but slower) strategy, which is similar to the one used in \cite{GeissmannLLP20}, works even when the sequence $S'$ is adversarially chosen as a function of the errors.
Let $c$ be an integer constant such that $c \ge \frac{51}{p}$ and $c > \frac{7-8p}{1-4p}$ (recall that our assumptions on $p$ and $q$ ensure that $p < \frac{1}{4}$), and let $d$ be an integer that satisfies $3 \log n \le d = O(\log n)$ and that is  an upper bound on the maximum dislocation of the sequence $S'$ returned by a successful execution of \rifflesort.

For each element $x\in F$, given a guess $\tilde{r}$ on $\rank(x, S')$, we can determine whether $\tilde{r}$ is a good estimate of $\rank(x, S')$ by comparing $x$ with all elements in the set $C_x = \{ y \in S' \mid \tilde{r} - cd \le \pos(y, S') < \tilde{r} + cd \}$,
%all the elements $y$ in the positions from $\tilde{r} - cd$ to $\tilde{r} + cd -1$ in $S'$,
%for a sufficiently large integer constant $c$, 
and counting the number $m_x(\tilde{r})$ of \emph{mismatches}, i.e., the number of elements $y \in C_x$ for which either (i) $\pos(y, S') < \tilde{r}$ and $y \succ x$, or (ii) $\pos(y, S') \ge \tilde{r}$ and $y \prec x$.

Suppose that our guess of $\tilde{r}$ is much smaller than the true rank of $x$, say $\tilde{r} < \rank(x, S') - c d$. 
Then, all the elements $y \in S'$ such that $\tilde{r} + d \le \rank(y, S') < \tilde{r} + (c-1)d$
are smaller than $x$, satisfy $\pos(y, S') \ge \tilde{r}$, and belong to $C_x$.
Then, for any such $y$, the observed comparison result is $y \prec x$ with probability at least $1-p$ and the number $m_x(\tilde{r})$ of mismatches is at least $(c-2)d(1-p)$ in expectation.
A Chernoff bound shows that $\Pr( m_x(\tilde{r}) \le \frac{4}{5} (c-2)d(1-p) ) \le e^{-\frac{(c-2) d (1-p)}{50}}  \le e^{-d} \le \frac{1}{n^3}$, hence $m_x(\tilde{r}) > \frac{4}{5} (c-2) d (1-p)$ with probability at least $1 - \frac{1}{n^3}$.
A symmetric argument holds for the case in which $\tilde{r} \ge \rank(x, S') + c d$.

%\textbf{TODO choose $c$ big enough for Chernoff bound to hold. Show Chernoff bound?}
%
%
%the set $\{ z \in S' \, : \, \tilde{r} - cd \le \pos(z, S') < \tilde{r} + cd \}$.
%
%and, since $x > y$, we have that the observed comparison result is $x \succ y$ with probability at least $1-p$. Hence the number $m$ of mismatches is at least $(c-2)d(1-p)$ in expectation. 
%A Chernoff bound can be used to show that with probability at least $1-\frac{1}{n^2}$, the value of $m$ will exceed $\frac{4}{5}(c-2)d(1-p)$.  

On the contrary, if $\rank(x, S') - d \le \tilde{r} < \rank(x, S') + d$, 
all the elements $y \in S'$ for which
$\tilde{r} - (c-1) d \le \rank(y, S') < \tilde{r} - d$
(resp.\ $\tilde{r} +  d \le \rank(y, S') < \tilde{r} + (c-1) d$)
are smaller (resp.\ larger) than $x$, satisfy $\pos(y, S') \le \rank(y, S') + d < \tilde{r}$ (resp.\ $\pos(y, S') \ge  \rank(y, S') -d \ge \tilde{r}$), and belong to $C_x$.
Then, for any such $y$, the observed comparison result is $y \succ x$ (resp.\ $y \prec x$) with probability at most $p$, and thus, the overall number $m'_x(\tilde{r})$ of mismatches involving these $y$s is stochastically dominated by a binomial random variable $M'$ of parameters $2(c-2)d$ and $p$.
A Chernoff bound shows that 
$\Pr(m'_x(\tilde{r}) \ge \frac{6}{5} 2(c-2)dp)  
\le
\Pr(M' \ge \frac{6}{5} \E[M'])  \le e^{-\frac{2(c-2)dp}{75}} \le e^{-d} \le \frac{1}{n^3}$.
%at most $2(c-2)dp$ in expectation.
%A Chernoff bound shows that $\Pr(m'_x(\tilde{r}) \ge \frac{6}{5} 2(c-2)dp)  \le e^{-\frac{2(c-2)dp}{75}} \le e^{-d} \le \frac{1}{n^3}$ \textbf{TODO}.
Hence, with probability at least $1 - \frac{1}{n^3}$, we have $m_x(\tilde{r}) < m'_x(\tilde{r}) + 4d \le (\frac{12}{5} (c-2) p  + 4) d$.

% On the contrary, if $\rank(x, F') - d \le \tilde{r} < \rank(x, F') + d$, 
% all the elements $y$ such that
% either $\tilde{r} - (c-2) d \le \rank(y, F') < \tilde{r} -2d$
% or $\tilde{r} + 2 d \le \rank(y, F') < \tilde{r} - (c-2) d$
% are in the correct relative order w.r.t. $x$ in $S'$, implying that the number $m'$ of mismatches between $x$ and all these elements $y$ is at most $2(c-4)dp$ in expectation.
% In this situation, the value of $m'$ is at most $4(c-4)dp$ with probability at least $1-\frac{1}{n^4}$, and thus $m \le m' + 4d \le 4(c-4)d p + 4d < \frac{1}{3}c d$ with at least the same probability.
%

We have therefore shown that, with probability at least $ 1- \frac{1}{n^2}$, the number of mismatches will be at least $\frac{4}{5} (c-2) d (1-p)$ whenever $\tilde{r} < \rank(x, S') - c d$ or $\tilde{r} \ge \rank(x, S') + c d$, and at most   $(\frac{12}{5} (c-2) p  + 4) d$ when $\rank(x, S') - d \le \tilde{r} < \rank(x, S') + d$.
Since our choice of $c$ ensures that  $\frac{4}{5} (c-2) (1-p) > (\frac{12}{5} (c-2) p  + 4)$, 
we can compute a rank $r_x$ satisfying $|r_x - \rank(x, S')| = O(d)$ in $O(d \cdot \frac{n}{d}) = O(n)$ time by first counting the number of mismatches $m_x(\tilde{r})$ for all choices of $\tilde{r} \in \{ 1, d + 1, 2d + 1, 3d + 1,\dots \}$ and then selecting the value of $\tilde{r}$ minimizing $m_x(\cdot)$. 
The total time required to compute all ranks $r_x$, for all $x \in F$, is $|F| \cdot O(n) = O(n \log n)$, and the success probability is at least $1 - |F| \cdot \frac{1}{n^2} = 1 - O( \frac{\log n}{n^2})$.
Combining this with the success probability of \rifflesort, we obtain an overall success probability of at least $1 - O( \frac{1}{n \sqrt{n}} )$.
Finally,  observe that, since the set $F$ only contains $O(\log n)$ elements, simultaneously reinserting them in $S'$ affects the maximum dislocation of $S'$ by at most an $O(\log n)$ additive term. Moreover, their combined contribution to the total dislocation can be at most $O(\log^2 n)$.
Overall, we have:

\begin{theorem}\label{thm:derandomized_rifflesort}
    Consider a set of $n$ elements subject to random persistent comparison faults with probability at least $q$ and at most $p$, for some constants $p,q$ with $0 < q \le p < \frac{9q+1}{8q+11}$. Let $S$ be a sequence of these elements chosen independently of the errors.
  There exists a deterministic algorithm that approximately sorts $S$ in $O(n \log n)$  worst-case time. The maximum dislocation and the total dislocation of the returned sequence are at most $O(\log n)$ %$\bmax\cdot \ln n$
  and at most $O(n)$, %$\btot\cdot n$,
    respectively, with high probability.
\end{theorem}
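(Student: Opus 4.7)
The plan is to execute the two-phase derandomization sketched in Section~\ref{sec:derandomization}: isolate a small ``randomness reservoir'' $F$ consisting of the first $1000\eta$ elements of $S$ (with $\eta=\lceil 4\log n/\log\tfrac{1}{1-2q}\rceil = O(\log n)$, which is well-defined precisely because $q>0$), run the randomized \rifflesort on $F'=S\setminus F$ using bits harvested from the persistent comparisons between $F$ and $F'$, and finally reinsert the $|F|=O(\log n)$ elements back into the resulting sequence $S'$. By the persistent error model, the $|F|\cdot|F'|=\Theta(n\log n)$ bipartite $F$-vs-$F'$ comparisons are mutually independent Bernoulli trials, each with success probability in $[q,1-q]$. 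I would partition them into $999n$ disjoint groups of size $\eta$ and XOR the outcomes of each group; by Lemma~\ref{lemma:xoring-bernoulli}, each resulting bit has bias at most $1/(2n^4)$, so by a standard coupling the whole collection of $999n$ extracted bits agrees with a truly uniform sample except with probability $O(n^{-3})$, and on this good event the bits may be fed to \rifflesort as if they were fresh randomness.

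On this good event I would take $S'=F'$ directly in line~\ref{ln:rs_shuffle} (permitted since by hypothesis the input is independent of errors) and invoke \rifflesort on $F'$. Lemma~\ref{lemma:partition_n_bits} guarantees the $999|F'|$-bit budget suffices with probability $1-O(|F'|^{-3})$, and Theorem~\ref{thm:rifflesort} then yields an output $S'$ with maximum dislocation at most some $d=O(\log n)$ and total dislocation $O(n)$, overall with probability $1-O(n^{-3/2})$. For the reinsertion of $F$, I would not use \noisysearch, because $S'$ already depends on the very $F$-to-$F'$ comparisons consumed during randomness extraction; instead I would apply the slower, adversarially-robust grid-search of Section~\ref{sec:derandomization}. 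For each $x\in F$, I would sweep candidate ranks $\tilde r$ on a grid of spacing $d$, counting mismatches between $x$ and the at most $2cd$ elements of $S'$ within $cd$ positions of $\tilde r$, and pick $r_x$ as the minimizer. Choosing a constant $c$ large enough that $\tfrac{4}{5}(c-2)(1-p)>\tfrac{12}{5}(c-2)p+4$ (feasible since $p<\tfrac14$), two Chernoff bounds---one for $\tilde r$ far from $\rank(x,S')$, one for $\tilde r$ near it---give $|r_x-\rank(x,S')|=O(\log n)$ with probability $1-O(n^{-3})$ per element; a union bound over $|F|=O(\log n)$ elements preserves the guarantee, and each mismatch count costs $O(n)$ time, totalling $O(n\log n)$.

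The main obstacle, and the reason for the slower reinsertion subroutine, is precisely that $S'$ and the comparisons used for reinsertion share a common source of randomness, so the Chernoff bounds must tolerate an adversarially chosen $S'$. The escape hatch is that the sums of indicators to which concentration is applied will be indexed by the \emph{true ranks} of the involved elements rather than by their positions in $S'$; the underlying pair-sets are therefore deterministic functions of $x$ and $\tilde r$ alone, and independence of the Bernoulli summands in the persistent error model is preserved. A final union bound over the four failure events (bit coupling, bit budget, \rifflesort correctness, and all $|F|$ reinsertions), combined with the observation that simultaneously inserting $O(\log n)$ extra elements can increase the maximum and total dislocations by only $O(\log n)$ and $O(\log^2 n)$ respectively, yields the claimed $O(\log n)$ and $O(n)$ dislocation bounds with overall failure probability $O(n^{-3/2})$. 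The running time decomposes as $O(\eta n)$ for bit extraction, $O(n\log n)$ for \rifflesort on $F'$, and $O(|F|\cdot n)$ for the grid-search reinsertions, for a total of $O(n\log n)$ worst-case time, as required.
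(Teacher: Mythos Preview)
Your proposal is correct and follows essentially the same approach as the paper's proof: reserve $F$ as a randomness source, extract almost-fair bits via XOR (Lemma~\ref{lemma:xoring-bernoulli}), run \rifflesort on $F'$ with those bits, and reinsert $F$ by the adversarially-robust mismatch-counting grid search rather than \noisysearch. Your explicit remark that the Chernoff sums are indexed by \emph{true ranks}---so the summand sets are deterministic functions of $x$ and $\tilde r$, independence of the Bernoulli indicators is preserved, and a final union bound avoids any conditioning on the success of \rifflesort---is exactly the subtlety the paper handles implicitly.
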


\bibliographystyle{ACM-Reference-Format}
\bibliography{references}

\clearpage
\appendix
\section{Sampling a subset of a given size using a linear number of random bits}
\label{apx:random_subset} 

This appendix is devoted to proving the following lemma, which was used in Section~\ref{sub:rifflesort_n_bits} as part of our derandomization of \rifflesort.

\lemmarandomsubset*

To this aim, we design a recursive algorithm named \randomsubset, whose pseudocode is given in Algorithm~\ref{alg:randomsubset}. To sample a random subset of cardinality $h \in {1, \dots, N}$ from an input set $A$ of $N$ elements,\footnote{The case $h=0$ is trivial and is handled in line~\ref{ln:randomsubset_zero} by returning the empty set.}
\randomsubset starts by choosing an element $x$ u.a.r.\ from $A$, and a set $B$ that contains each element of $A \setminus \{x\}$ independently with probability $\frac{1}{2}$. Then, if $|B|$ is smaller than $h$, the algorithm returns a set containing $x$, all the elements in $B$, and all the elements in a subset of size $h-|B|-1$ that is recursively sampled from $A \setminus (B \cup \{x\})$. Otherwise, when $B$ contains at least $h$ elements, the algorithm returns a subset of size $h$ that is recursively sampled from $B$ (notice that it might be $|B|=h$).

\begin{algorithm2e}[t]
    \caption{\randomsubset\unskip($A$, $h$)}
    \label{alg:randomsubset}

        \lIf{$A = \emptyset$}{\Return $\emptyset$}\label{ln:randomsubset_zero}

        \BlankLine
        $x \gets$ An element chosen u.a.r. from $A$\label{ln:select_x}\;
        $B \gets$ a set obtained by selecting each element of $A \setminus \{x \}$ independently with probability $1/2$\label{ln:select_subset_B}\;

        \BlankLine
        
        \If{$|B| \le h - 1$}{ \Return $B  \cup \{x\} \cup \randomsubset( A \setminus (B \cup \{x\}), h- |B|-1)$\label{ln:B_small}}
        \Else {
            \Return \randomsubset$(B, h)$
        }
\end{algorithm2e}

Next lemma shows that the subset returned by \randomsubset$(A, h)$ is chosen u.a.r.\ among all subsets of $A$ with $h$ elements, as desired.
\begin{lemma}
    \label{lemma:randomsubset_unbiased}
    For any set $A$ of $N$ elements, any integer $h \in \{0, \dots, N\}$, and any $Z \in \binom{A}{h}$, it holds that $\Pr(\randomsubset(A,h) = Z) = \binom{|A|}{h}^{-1}$.
\end{lemma}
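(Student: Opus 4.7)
The plan is to prove Lemma~\ref{lemma:randomsubset_unbiased} by strong induction on $|A|$, using an equivariance argument so that no combinatorial identities need to be verified by hand.

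\emph{Base case.} If $|A|=0$ then necessarily $h=0$, and the algorithm returns $\emptyset$ at line~\ref{ln:randomsubset_zero}, matching $\binom{0}{0}^{-1}=1$. For the inductive step on $N=|A|\ge 1$, I would first check that the recursive calls are well defined and that the output always has size exactly $h$: in the ``if'' branch, $0\le h-|B|-1\le N-|B|-1$ because $|B|\le h-1$; in the ``else'' branch, $h\le|B|\le N-1$. In both cases the recursive call acts on a set of cardinality strictly less than $N$, so the inductive hypothesis applies. By induction the output is always a subset of $A$ of cardinality $h$.

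The key ingredient is an equivariance claim: for every bijection $\pi\colon A\to A'$ with $|A|=|A'|$, the random set $\pi\big(\randomsubset(A,h)\big)$ has the same distribution as $\randomsubset(A',h)$. I would establish this by coupling the random choices of the two executions, using $(\pi(x),\pi(B))$ in the second whenever $(x,B)$ is used in the first. The marginal distributions agree because $\pi$ is a bijection, so $\pi(x)$ is uniform in $A'$ and each element of $A'\setminus\{\pi(x)\}$ is still included in $\pi(B)$ independently with probability $1/2$. The two coupled executions enter the same branch (since $|B|=|\pi(B)|$); in the ``if'' branch the recursions act on $A\setminus(B\cup\{x\})$ and on its image $A'\setminus(\pi(B)\cup\{\pi(x)\})$, and applying the inductive hypothesis to the restriction of $\pi$ to these smaller sets yields $\pi(B)\cup\{\pi(x)\}\cup R'\ \stackrel{d}{=}\ \pi\bigl(B\cup\{x\}\cup R\bigr)$; the ``else'' branch is analogous.

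Specializing the equivariance claim to $A'=A$ and letting $\pi$ range over all permutations of $A$ shows that the distribution of $\randomsubset(A,h)$ is invariant under $\mathrm{Sym}(A)$. Hence $\Pr(\randomsubset(A,h)=Z)$ depends only on $|Z|=h$; combined with the observation that the output is always an $h$-subset of $A$ and that probabilities sum to $1$, every $Z\in\binom{A}{h}$ has probability $\binom{N}{h}^{-1}$, which is the claim. The only delicate point will be the bookkeeping in the inductive step of the equivariance lemma, where one must verify that the restriction of $\pi$ to the complement (respectively to $B$) is the bijection activating the induction hypothesis on the smaller instance; once this is spelled out, no further computation is needed.
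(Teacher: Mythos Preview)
Your proof is correct and takes a genuinely different route from the paper's. The paper conditions on $|B|=i$ and computes $\Pr(R=Z\mid |B|=i)$ explicitly in each of the two branches, verifying by factorial cancellation that this conditional probability equals $\binom{N}{h}^{-1}$ for every $i\in\{0,\dots,N-1\}$, and then sums over $i$. You instead establish an equivariance property (the output law commutes with bijections) via a coupling of the one-step randomization, and deduce uniformity from the transitive action of $\mathrm{Sym}(A)$ on $\binom{A}{h}$ together with the fact that the output surely has cardinality $h$. The paper's computation incidentally proves something slightly stronger---the output is uniform even conditionally on $|B|$---while your argument is cleaner, needs no identities, and would transfer verbatim to any recursive sampler whose single-step randomness is label-agnostic. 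One small point worth making explicit when you write it up: you say you ``apply the inductive hypothesis to the restriction of $\pi$,'' but the statement you are inducting on is uniformity, not equivariance; just note that uniformity on $\binom{A'}{h'}$ for $|A'|<N$ immediately gives equivariance at those smaller cardinalities (a uniform distribution is permutation-invariant), so there is no circularity in the inductive step.
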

\begin{proof}
    The proof is by induction on $N$.
    Line~\ref{ln:randomsubset_zero} of Algorithm~\ref{alg:randomsubset} ensures that the claim holds for the base case $N=0$.
    Hence, in the rest of the proof, we consider $N>0$, we assume that the claim holds for all sets with less than $N$ elements, and we show that it holds for all sets with $N$ elements.
    
    Focus on the execution of \randomsubset$(A, h)$, let $R$ be the returned subset, and consider the element $x$ and the set $B$ selected in lines~\ref{ln:select_x} and \ref{ln:select_subset_B}, respectively. Using $0 \le |B| < N$ and $\Pr(|B| = i) = \frac{\binom{N-1}{i}}{2^{N-1}}$ for $i=0, \dots, N-1$, we have: 
    \begin{multline*}
        \Pr(R = Z) = \sum_{i = 0}^{N-1} \Pr(R = Z \mid |B|=i) \cdot \Pr(|B| = i) \\
        = \frac{1}{2^{N-1}} \left( \sum_{i = 0}^{h-1} \binom{N-1}{i}\Pr(R = Z \mid |B|=i) + \sum_{i = h}^{N-1} \binom{N-1}{i}\Pr(R = Z \mid |B|=i) \right). 
    \end{multline*}

    We now separately study the probabilities in the two summations above. 
    We start by considering $\Pr(R = Z \mid |B|=i)$ when $0 \le i \le h-1$, and we observe that the events $R = Z$ and $|B|=i$ jointly imply $B \cup \{x\} \subseteq Z$ since the returned set always includes all elements in $B \cup \{x\}$. Then, denoting with $R'$ the set returned by the call to \randomsubset$(A \setminus (B \cup  \{x\}), h - |B|- 1)$ in line~\ref{ln:B_small}, we have:
    \begin{align*}
        \Pr(R = Z  \mid |B|=i) &=
        \Pr(R = Z \mid B \cup \{x\} \subseteq Z, |B|=i) \cdot \Pr(B \cup \{x\} \subseteq Z \mid |B|=i) \\
        &= \Pr(R' = Z \setminus (B \cup \{x\}) \mid B \cup \{x\} \subseteq Z, |B|=i) \\
        & \quad\quad \cdot \Pr(x \in Z \mid |B|=i) \Pr(B \subseteq Z \setminus \{x\} \mid |B| = i, x \in Z) \\
        &= \binom{N-i-1}{h-i-1}^{-1} \cdot \frac{h}{N}  \binom{h-1}{i} \binom{N-1}{i}^{-1} \\
        &= \frac{(h-i-1)! \, (N-h)!}{(N-i-1)!} \cdot \frac{h}{N} \cdot \frac{(h-1)!}{i! \, (h-i-1)!} \cdot \frac{i! \, (N-i-1)!}{(N-1)!} 
        = \binom{N}{h}^{-1}.
    \end{align*}
    
    Next, we consider $\Pr(R = Z \mid |B|=i)$ when $h \le i < n$.
    Observe that if $R=Z$ and $|B|=i$, then we must have $B \supseteq Z$ and, in particular, $x \not\in Z$. There are exactly $\binom{N-1}{i}$ subsets of $A \setminus \{x\}$ of size $i$ and, whenever $x \not\in Z$, $\binom{N-h-1}{i-h}$ of such sets are supersets of $Z$. Then, we can write $\Pr(R = Z \mid |B|=i)$ as:
    \begin{multline*}
        \Pr(R = Z \mid B \supseteq Z, x \not\in Z, |B|=i) \cdot \Pr(B \supseteq Z \mid x \not \in Z, |B|=i) \cdot \Pr(x \not\in Z \mid |B|=i) \\
         = \binom{i}{h}^{-1} \cdot \binom{N-h-1}{i-h} \binom{N-1}{i}^{-1} \cdot \frac{N-h}{N} \\ 
        = \frac{h! \, (i-h)! }{i!} \cdot \frac{(N-h-1)!}{(i-h)! \, (N-i-1)! } \cdot \frac{i! \, (N-i-1)!}{(N-1)!} \cdot \frac{N-h}{N}
        = \binom{N}{h}^{-1}.
    \end{multline*}

    Hence, for any choice of $i \in \{0, \dots, N-1\}$, we have $\Pr(R=Z \mid |B|=i) = \binom{N}{h}^{-1}$ which can be substituted back in the above formula for $\Pr(R = Z)$ to obtain the claimed probability. Indeed:
    \[
        \frac{1}{2^{N-1}} \binom{N}{h}^{-1} \left( \sum_{i = 0}^{h-1} \binom{N-1}{i} + \sum_{i = h}^{N-1} \binom{N-1}{i} \right)
        = \binom{N}{h}^{-1} \frac{1}{2^{N-1}} \sum_{i=0}^{N-1} \binom{N-1}{i} = \binom{N}{h}^{-1}. \tag*{\qedhere}
    \] 
\end{proof}

We now show that \randomsubset$(A, h)$ can be implemented so that it only requires a linear number of random bits, except for a probability of at most $O( e^{-\sqrt{|S|}} )$.
\begin{lemma}
    \label{lemma:randomsubset_numbits}
    Let $A$ be any set of $N$ elements and $h \in \{0, \dots, N\}$.
    The number of random bits required by \randomsubset$(A, h)$ is at most $60N$ with probability at least $1 - O(e^{- \sqrt{N}})$.
\end{lemma}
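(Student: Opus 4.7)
The plan is to separately account for the two sources of random bits: the $N_i - 1$ bits used at line~\ref{ln:select_subset_B} to sample the subset $B_i$, and the bits used at line~\ref{ln:select_x} to pick $x_i$ uniformly at random from the current set of size $N_i$. For the latter, I would sample via rejection: draw $\lceil \log_2 N_i \rceil$ bits, interpret them as a nonnegative integer, and accept if it is strictly less than $N_i$, otherwise retry. Each trial succeeds with probability at least $1/2$, so the number of trials $T_i$ at call $i$ is a geometric random variable of parameter $\ge 1/2$. Letting $N_0 = N > N_1 > \dots > N_{K+1} = 0$ be the sizes of the sets in the successive recursive calls, the total random-bit consumption is at most $\sum_{i=0}^{K}(N_i - 1) + \sum_{i=0}^{K} T_i \lceil \log_2 N_i \rceil$, so the heart of the proof is to show that $\sum_i N_i = O(N)$ with probability $1 - O(e^{-\sqrt{N}})$.

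The key observation is that $N_{i+1} \in \{|B_i|,\, N_i - |B_i| - 1\}$ with $|B_i| \sim \mathrm{Binomial}(N_i - 1,\,1/2)$, so $N_{i+1} \le \max(|B_i|,\, N_i - 1 - |B_i|)$, and Hoeffding's inequality gives $\Pr(N_{i+1} > \tfrac{4}{5} N_i \mid N_i) \le 2 \exp(-9 N_i / 50)$. I would fix a constant $C$ (taking $C = 6$ suffices) and call round $i$ \emph{large} when $N_i \ge C\sqrt{N}$ and \emph{small} otherwise. Since $N_{i+1} \le N_i - 1$ in every call, there are at most $O(\log N)$ large rounds (they proceed geometrically on the good event) and at most $C\sqrt{N}$ small rounds, so the total number of calls is $K = O(\sqrt{N})$. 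A union bound over the large rounds yields $\Pr(N_{i+1} > \tfrac{4}{5} N_i \text{ for some large } i) = O(\log N \cdot e^{-9 C \sqrt{N}/50}) = O(e^{-\sqrt{N}})$.

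On the complementary good event, the sum of $N_i$ over large rounds is bounded by the geometric series $\sum_{j \ge 0}(4/5)^j N = 5 N$, and the sum over small rounds is bounded deterministically by $1 + 2 + \dots + C\sqrt{N} \le C^2 N/2 + O(\sqrt{N})$; with $C = 6$ this gives $\sum_i N_i \le 23 N + O(\sqrt{N})$, comfortably below the $60 N$ budget. For the $x$-sampling bits, the $T_i$ are independent geometrics with parameter $\ge 1/2$, so $\sum_i T_i$ is stochastically dominated by a Negative Binomial with mean $\le 2 K = O(\sqrt{N})$; a standard Chernoff bound then gives $\sum_i T_i = O(\sqrt{N})$ with failure probability $e^{-\Omega(\sqrt{N})}$, so the $x$-sampling contributes at most $O(\sqrt{N} \log N) = o(N)$ additional bits. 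Adding everything, the total random-bit count fits in $60 N$ with probability $1 - O(e^{-\sqrt{N}})$, and the $O(N)$ running time follows from the same bound on $\sum_i N_i$, since each call does $O(N_i)$ work plus $O(T_i \log N_i)$ work for rejection sampling.

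The main obstacle is calibrating the threshold $C$: it must be large enough that the Hoeffding failure probability $e^{-9 C \sqrt{N}/50}$ survives the union bound over $O(\log N)$ large rounds and yields an overall failure of $O(e^{-\sqrt{N}})$, yet small enough that the deterministic quadratic small-regime bound $C^2 N / 2$ fits inside the $60 N$ budget. The choice $C = 6$ with reduction factor $4/5$ gives ample slack on both sides; any reasonable pair of constants in the same spirit would work just as well.
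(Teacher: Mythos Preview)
Your proposal is correct and follows the same overall architecture as the paper's proof: split the recursion into a \emph{large} regime (where Hoeffding forces geometric shrinkage with failure probability $e^{-\Omega(\sqrt{N})}$), a \emph{small} regime, and a separate negative-binomial tail bound for the rejection-sampling trials used to pick $x$.

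The one genuine difference is your treatment of the small regime. The paper argues probabilistically there as well: it claims that within each block of $\lceil 2\sqrt{N}\rceil$ small calls the set size halves with probability at least $1-2^{-2\sqrt{N}}$, and then sums over $O(\log N)$ such blocks. Your argument is both simpler and more robust: you just use the deterministic fact $N_{i+1}\le N_i-1$ (immediate from $N_{i+1}\in\{|B_i|,\,N_i-1-|B_i|\}$), so the small sizes are distinct integers in $[0,C\sqrt{N})$ and their sum is at most $C^2N/2$ with no randomness needed. This sidesteps a delicate point in the paper's version, namely that ``$|B|\le |A'|/2$ with probability $\ge 1/2$'' does \emph{not} directly yield ``$N_{i+1}\le N_i/2$ with probability $\ge 1/2$'' (when $h$ is close to $N_i/2$, the branch taken is precisely the one with the larger piece). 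Your deterministic bound avoids that issue entirely.

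One small remark on presentation: when you union-bound over ``$O(\log N)$ large rounds'' you are implicitly using that if no large round fails then there \emph{are} at most $\log_{5/4}N$ of them; formally, the first failing large round (if any) must occur among the first $\lceil\log_{5/4}N\rceil$ calls, so the union bound over that many indices is legitimate. With $C=6$ the per-round failure $2e^{-9C\sqrt{N}/50}=2e^{-1.08\sqrt{N}}$ indeed absorbs the $O(\log N)$ factor and stays $O(e^{-\sqrt{N}})$.
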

\begin{proof}
    We focus on an execution of \randomsubset$(A, h)$ for a sufficiently large value of $N = |A|$, and we classify all the resulting recursive calls \randomsubset$(A',h')$ into two regimes: we say that a call is \emph{big} if $|A'| \ge \sqrt{N} + 1$ and \emph{small} otherwise. 
    We separately analyze these two regimes to provide finer (probabilistic) upper bounds on the size of set $S$ within each regime.

    We start by considering big calls. Let $x$ and $B$ be the element and the subset selected in lines~\ref{ln:select_x} and $\ref{ln:select_subset_B}$ of Algorithm~\ref{alg:randomsubset}, respectively. For each element $y \in A \setminus \{x\}$, let $I_y$ be an indicator random variable that is $1$ iff $y \in B$, so that $|B| = \sum_{y \in A \setminus \{x\}} I_y$. Using $\E[|B|] = \sum_{y \in S \setminus \{x\}} \E[ I_y ] = \frac{|A|-1}{2}$, Hoeffding's inequality allows us to write: 
    \begin{align*}
        \Pr\left(  \frac{|S|-1}{3} \le |A|  \le \frac{2(|S|-1)}{3} \right) = 
        1- \Pr\left( \Big| |A| - \E\big[|A|\big] \Big| \ge \frac{|S|-1}{6} \right) \\
        \ge 1-  2 \exp\left( - \frac{(|S|-1)^2}{18 (|S|-1)} \right) 
        \ge 1 -  2 \exp\left(- \frac{\sqrt{N}}{18} \right).
    \end{align*}

    Since the algorithm is invoked recursively on a set having a size of either $|B'|$ or $|A'| - |B| - 1$, there is a probability of at most $1 - 2^{19} \exp(- (1+ \frac{1}{18}) \sqrt{N})$ that (i) none of next $19$ recursive calls is small, and (ii) the input sets all such calls have a size larger than $2|S|/3$.
    Hence, with probability at least $1 - 2^{19} \exp(- (1 + \frac{1}{18}) \sqrt{N})\log_{3/2} N$, there are at most $\log_{3/2} N$ consecutive groups of $19$ recursive calls that start with a big call, and the size of the set $A'$ in the first call of the $i$-th such group is at most $(\frac{2}{3})^{i-1} N$.

    Regarding small calls, notice that the size of the set $B$ sampled in each such call of \randomsubset$(A', h')$ is at most $\left\lceil \frac{|A'|-1}{2} \right\rceil \le \frac{|A'|}{2}$ with probability at least $\frac{1}{2}$. 
    Then, with a probability of at least $1 - 2^{2 \sqrt{N}}$, we have that within a group of $\lceil 2 \sqrt{N} \rceil $ small calls, either the recursion reaches the final call, or the size of the input sets becomes at most $|A'|/2$. Hence, with probability at least $1- 2^{2 \sqrt{N}} \log N$, there are at most $\log N$ such groups and the size of $A'$ in the first call of the $i$-th such group is at most $ (\frac{1}{2})^{i-1} (\sqrt{N}+1)$.

    We now assume that aforementioned upper bounds on the number of groups and on the sizes of $A'$ hold for both regimes and we upper bound the number random bits used to select all the sets $A'$ and all the elements $x$ (see lines \ref{ln:select_x} and \ref{ln:select_subset_B}). For the sets $A'$, the random bits used for their selection are at most
    $
        19 N \sum_{i=1}^{+\infty} \left(\frac{2}{3} \right)^{i-1} + \left\lceil \sqrt{N} \right\rceil (\sqrt{N} + 1) \sum_{i=1}^{+\infty} \left( \frac{1}{2} \right)^{i-1} = 59 N + o(N)$.

    We now deal with the elements $x$. Observe that a uniformly random element from a set $A$ can be sampled by arbitrarily indexing the elements of $A$ with the integers in $\{0, \dots, |A|-1\}$, choosing a uniformly random integer in $j \in \{0, \dots, |A|\}$, and returning the element with index $j$.
    A simple rejection strategy allows to sample a uniformly random integer in the interval $[0, \ell-1]$ using at most $c \lceil \log \ell \rceil$ random bits with probability at least $1 - 2^{-c}$. In details, such a strategy samples an integer between $0$ and $2^{\lceil \log \ell \rceil}$ by choosing the value of each bit of its binary representation independently and u.a.r. If such an integer lies in $[0, \ell-1]$, it is accepted and returned, otherwise it is rejected and the procedure repeats.
    
    Moreover, if (up to) $\eta$ successive integers $j_1, \dots, j_\eta$ are sampled with the above strategy, where $j_i$ is chosen u.a.r.\ from $\{0, \dots, \ell_i - 1\}$ for some $\ell_i \le \ell$, the overall number of needed random bits is at most $2c \eta \lceil \log \ell \rceil$ with probability at least $\exp(- \frac{c}{2} \eta (1- \frac{1}{c} )^2)$ for $c>1$.
    This can be seen by considering the random variables $t_1, \dots, t_\eta$, where $t_i$ counts the number of trials (i.e., repetitions of the above procedure) needed to sample $j_i$. Then, $\sum_{i=1}^\eta t_i$ is stochastically dominated by a sum of $\eta$ independent shifted geometric random variables of parameter $\frac{1}{2}$, i.e., by a negative binomial random variable of parameters $\eta$ and $\frac{1}{2}$. We can use a tail bound for negative binomial random variables (see, e.g., \cite{brown2011wasted}) to obtain $\Pr\left(\sum_{i}^N t_i > 2 c \eta \right) \le \exp(- \frac{c}{2} \eta (1- \frac{1}{c} )^2)$ for any $c>1$.

Since, for sufficiently large values of $N$, there are at most $\eta = 4 \lceil \sqrt{N} \rceil \log N$ recursive calls of \randomsubset$(A, h)$ with $h>0$, each of which chooses an element $x$ from the set $A$ which contains at most $N$ elements, we have that no more than $16 \lceil \sqrt{N} \rceil \log N \cdot \lceil \log N \rceil   = o(N)$ random bits are needed with probability at least $1-e^{-\sqrt{N} \log N}$.

    Overall, when $N$ is larger than a suitable constant, the number of used random bits is at most $60\eta$ with probability at least $1 -  2^{19} \exp(-(1 + \frac{1}{18})\sqrt{N}) \log_{3/2} N - 2^{-2\sqrt{N}} \log N - e^{- \sqrt{N} \log N} = 1  - O(e^{-\sqrt{N}})$, as claimed.
\end{proof}

Lemma~\ref{lemma:derand_set_split} follows from Lemma~\ref{lemma:randomsubset_unbiased}, Lemma~\ref{lemma:randomsubset_numbits}, and from observing that the running time of \randomsubset is asymptotically dominated by the number of used random bits.

\end{document}